\newif\ifsa 
\satrue

\documentclass{article} 
\usepackage[utf8]{inputenc}
\usepackage[english]{babel}
\usepackage{amsmath,amsfonts}
\usepackage{amssymb}
\usepackage{amsthm}
\usepackage[letterpaper, portrait, left=1in, right=1in, top=1in, bottom=1in]{geometry}
\usepackage[hyperindex,breaklinks,bookmarks]{hyperref}
\hypersetup{bookmarksdepth=2}
\usepackage{xspace}
\usepackage{cleveref}
\usepackage{soul}
\usepackage{nicefrac}
\usepackage{thm-restate}
\usepackage[shortlabels]{enumitem}
\usepackage[font=small,labelfont=bf]{caption}
\usepackage[normalem]{ulem}
\usepackage{wrapfig}
\usepackage{enumitem}

\usepackage{xcolor}
\usepackage{hyperref}
\hypersetup{
    colorlinks,
    linkcolor={red!50!black},
    citecolor={blue!50!black},
    urlcolor={blue!80!black}
}

\usepackage[ 
backend=bibtex,maxbibnames=99,sortcites]{biblatex}
\addbibresource{literature.bib}

\newtheorem{theorem}{Theorem}
\newtheorem*{theorem*}{Theorem}
\newtheorem{corollary}[theorem]{Corollary}
\newtheorem*{corollary*}{Corollary}

\newtheorem*{claim}{Claim}
\newtheorem{lemma}[theorem]{Lemma}

\crefname{theorem}{Theorem}{Theorems}
\crefname{lemma}{Lemma}{Lemmas}
\crefname{proposition}{Proposition}{Propositions}
\crefname{claim}{Claim}{Claims}
\crefname{corollary}{Corollary}{Corollaries}
\crefname{definition}{Definition}{Definitions}
\crefname{figure}{Figure}{Figures}
\crefname{obs}{Observation}{Observations}
\crefname{equation}{Equation}{Equations}
\crefname{section}{Section}{Sections}

\theoremstyle{remark}

\newcommand{\ld}{\left}
\newcommand{\rd}{\right}
\newcommand{\onlinesorter}[1]{\texttt{OnlineSorter}_{#1}}
\newcommand{\onlinepacker}{\texttt{OnlinePacker}}

\newcommand{\A}{\mathcal{A}}
\newcommand{\R}{\mathcal{R}}

\newcommand{\D}{\mathcal{D}}

\newcommand{\E}{\mathcal{E}}

\newcommand{\cost}{\textrm{cost}}

\newcommand{\OnlineSorting}{\textsc{Online-Sorting}\xspace}
\newcommand{\StripPackingConv}{\textsc{Strip-Packing}\xspace}
\newcommand{\SquarePacking}{\textsc{Square-Packing}\xspace}
\newcommand{\PerimeterPacking}{\textsc{Perimeter-Packing}\xspace}
\newcommand{\BinPacking}{\textsc{Bin-Packing}\xspace}

\newcommand{\ALG}{\text{ALG}}
\newcommand{\OPT}{\text{OPT}}
\newcommand{\Area}[1]{\text{area}(#1)}

\DeclareMathOperator{\pwidth}{width}
\DeclareMathOperator{\pextwidth}{extwidth}
\DeclareMathOperator{\pheight}{height}
\DeclareMathOperator{\area}{area}
\newcommand{\mydef}{:=}

\newcommand\blfootnote[1]{%
  \begingroup
  \renewcommand\thefootnote{}\footnote{#1}%
  \addtocounter{footnote}{-1}%
  \endgroup
}

\date{April 2024}
\title{Online Sorting and Translational Packing of Convex Polygons\footnote{This paper was presented at the ACM-SIAM Symposium on Discrete Algorithms (SODA23).}}

\author{Anonymous}

\ifsa
\usepackage{authblk}

\author[1]{Anders Aamand$^\dagger$}
\author[1]{Mikkel Abrahamsen$^\dagger$}
\author[1]{Lorenzo Beretta$^\dagger$}
\author[2]{Linda Kleist$^\dagger$}
\affil[1]{BARC, University of Copenhagen, \texttt{\{aa,miab,beretta\}@di.ku.dk}}
\affil[2]{Technische Universtität Braunschweig, \texttt{kleist@ibr.cs.tu-bs.de}}
\fi

\usepackage{graphicx}
\graphicspath{{./figures/}}

\begin{document}
\maketitle

\ifsa
\blfootnote{
\begin{wrapfigure}{l}{0.055\textwidth}
\vspace{-6 mm}
\includegraphics[height=0.055\textwidth, width=0.055\textwidth]{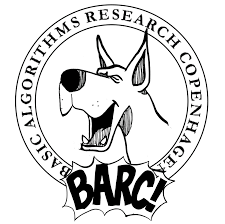}\\
\vspace{-1mm}
\includegraphics[height=0.03\textwidth, width=0.055\textwidth]{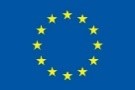}
\end{wrapfigure}
$^\dagger$ Anders Aamand is funded by the DFF-International Postdoc Grant 0164-00022B from the Independent Research Fund Denmark.
Anders Aamand, Mikkel Abrahamsen, and Lorenzo Beretta are part of Basic Algorithms Research Copenhagen (BARC), supported by the VILLUM Foundation grant 16582.
Mikkel Abrahamsen is supported by Starting Grant 1054-00032B from the Independent Research Fund Denmark under the Sapere Aude research career programme.
Lorenzo Beretta receives funding from the European Union's Horizon 2020 research and innovation program under the Marie Skłodowska-Curie grant agreement No.~801199.
Linda Kleist was able to visit BARC by the support of a  postdoc fellowship of the German Academic Exchange Service (DAAD).}
\fi

\begin{abstract}
We investigate several online packing problems in which convex polygons arrive one by one and have to be placed irrevocably into a container, while the aim is to minimize the used space.
Among other variants, we consider strip packing and bin packing, where the container is the infinite horizontal strip $[0,\infty)\times [0,1]$ or a collection of $1 \times 1$ bins, respectively.

If polygons may be rotated, there exist $O(1)$-competitive \emph{online} algorithms for all problems at hand [Baker and Schwarz, SIAM J. Comput., 1983].
Likewise, if the polygons may not be rotated but only translated, 
then using a result from [Alt, de Berg and Knauer, JoCG, 2017] we can derive $O(1)$-approximation algorithms for all problems at hand.
Thus, it is natural to conjecture  that the online version of these problems,  in which only translations are allowed, also admits a $O(1)$-competitive algorithm.
We disprove this conjecture by showing a superconstant lower bound on the competitive ratio for several online packing problems. 

The offline approximation algorithm for translation-only packing sorts the convex polygons by their ``natural slope,'' so that they form a fan-like pattern.
We prove that this step is essential, in the sense that packing polygons without rotating them is as hard as sorting numbers online. 
Technically, we prove lower bounds on the competitive ratio of translation-only online packing problems by reducing from 
a purpose-built  novel and natural combinatorial problem that we call \emph{online sorting}.
In a nutshell, the problem requires us to place $n$ numbers $x_1, \dots, x_n$ coming online into an oversized array of length $\gamma n$ for $\gamma\geq 1$, while minimizing the sum of absolute differences of consecutive numbers.
Note that the offline optimum is achieved by sorting $x_1, \dots, x_n$.
We show a superconstant lower bound on the competitive ratio of online sorting, for any constant $\gamma$.
We prove that this yields superconstant lower bounds for all packing problems at hand.
We believe that this technique is of independent interest since it uncovers a deep connection between inherently geometrical and purely combinatorial problems.

As a complement, we also include algorithms for both online sorting and translation-only online strip packing with non-trivial competitive ratios.
Our algorithm for strip packing relies on a new technique for recursively subdividing the strip into parallelograms of varying height, thickness and slope.
\end{abstract}

\setcounter{page}{0}
\thispagestyle{empty}

\newpage

\section{Introduction}
Packing problems are omnipresent in our daily lives and likewise appear in many large-scale industries.
For instance, two-dimensional versions of packing arise when a given set of pieces has to be cut out from a large piece of material such that waste is minimized. 
This is relevant in clothing production where pieces are cut out from a strip of fabric, and similarly in leather, glass, wood, and sheet metal cutting. 
In the 1965 American classic \emph{The Sound of Music}, Maria sewed playclothes for the seven von Trapp children, Liesl, Friedrich, Louisa, Kurt, Brigitta, Marta, and Gretl, using an old set of curtains. 
Undoubtedly, this task did require careful planning of how to cut out the individual pieces of curtain fabric in order not to run out.
In practical settings, it is often important that the inherent structure of the host material (grain of fabric, patterns, etc.)~is respected, i.e., the pieces should not be arbitrarily rotated, but merely translated.
In sewing patterns, the orientation is indicated by so-called \emph{grain line arrows}, like in \Cref{fig:Sound-of-Music} (right).

\begin{figure}[htb]
\centering
\includegraphics[scale=.25]{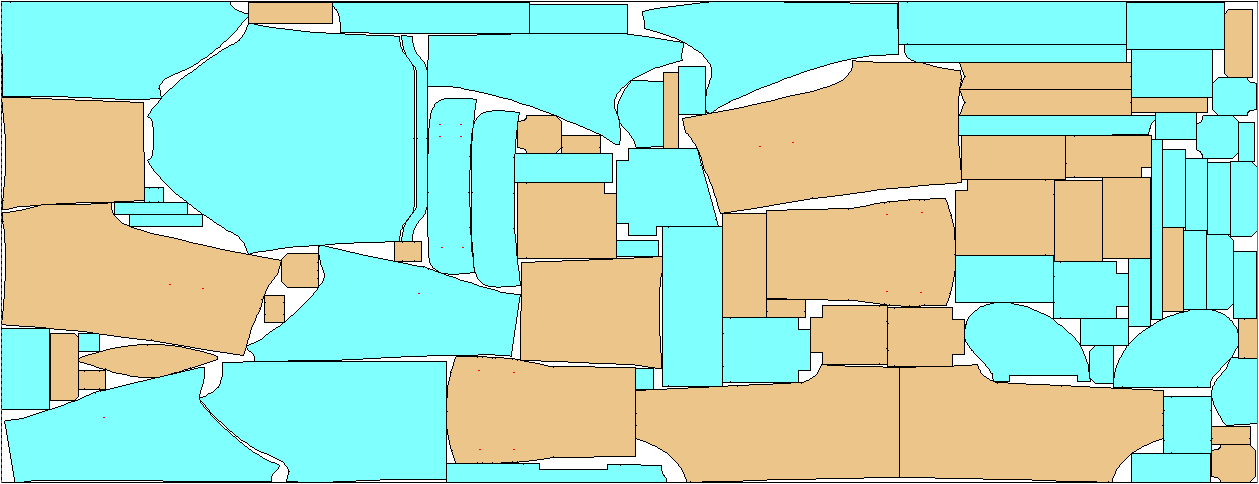}\hfill
\includegraphics[page=3,scale=.65]{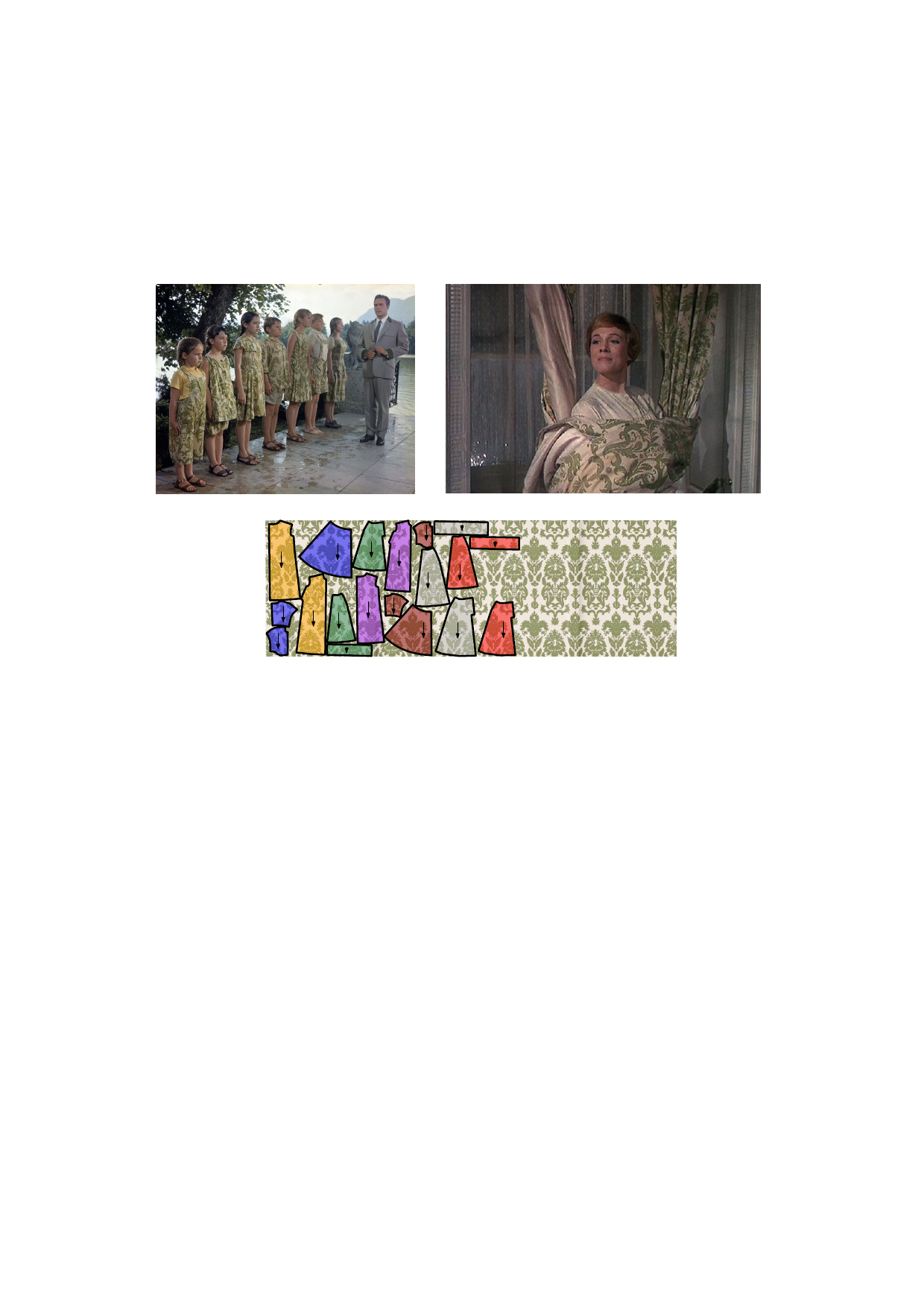}
\caption{
Left: Real-world example of packing on fabric produced using industrial software from Mirisys.
Right: Potential sewing patterns for the playclothes arranged on top of a curtain.}
\label{fig:Sound-of-Music}
\end{figure}

The objects may in some applications appear in an \emph{online} fashion, i.e., the pieces are given one after the other, and each of them must be placed before the next one is known. For example, in a scene which was not included in the final cinema edition of~\emph{The Sound of Music}, the butler of the von Trapp family was singing the measurements of the children to Maria from the kitchen (in German), and all the while she was cutting out the pieces for the playclothes one by one in the living room.
This is in contrast to \emph{offline} problems, where all the pieces are known in advance.
Problems related to packing were some of the first for which online algorithms were described and analyzed.
Indeed, the first use of the terms ``online'' and ``offline'' in the context of approximation algorithms was in the early 1970s and used for algorithms for bin-packing problems~\cite{Fiat1998}.

Most existing theoretical research on packing, and all research on online translational packing that we are aware of, is concerned with axis-parallel rectangular pieces.
In this paper, we study online translational packing of convex polygons.  The pieces arrive one by one and have to be placed irrevocably into a horizontal strip (or into bins, a square, the plane) before the next piece is revealed, and only translations of the pieces are allowed. The aim is to minimize the used space depending on the specific problem at hand, e.g., the used length of the strip, the number of bins, etc.

We note that these problems have known $O(1)$-competitive algorithms if the arriving pieces are axis-parallel rectangles; see \cref{sec:pack} for references. In other words, online algorithms exists which produce solutions that are only a constant factor worse than the offline optimum. If the arriving pieces are convex polygons that may be arbitrarily rotated, the task reduces to
packing axis-parallel rectangles by first rotating each piece so that a diameter of the piece is horizontal.
Then the density of the piece in its axis-parallel bounding box is at least $1/2$, and the algorithm for rectangles can be applied to the bounding box, again leading to $O(1)$-competitive algorithms.
We also note that using known results, it is easy to show that the \emph{offline} versions of the translational versions of these problems have $O(1)$-approximation algorithms; see \cref{thm:offline}. It is therefore natural to conjecture that also online translational packing of convex polygons admits $O(1)$-competitive algorithms. The main contribution of this paper is to show that surprisingly this conjecture is false.
In other words, whatever clever strategy Maria would use for placing of the next piece, there exist convex pieces for playclothes that could well fit 
on a single curtain
such that if she were to cut them out in an online fashion as directed by the butler, she would come short even if she had access to all the curtains in the entire von Trapp villa or even in the world.
{Remarkably, this even holds if all the pieces have diameter at most $1$\,cm or any other arbitrarily small fixed size.}

To develop lower bounds for these packing problems, we introduce the problem $\OnlineSorting[\gamma,n]$ which we believe to be of independent interest.
In this problem, we have an empty array $A$ with $\lfloor \gamma n\rfloor$ cells, $\gamma \geq 1$, and receive a stream of real numbers $s_1,\ldots,s_n$, $s_i\in[0,1]$. Each real has to be placed into  an empty cell of $A$ before the next real is known.
The goal is to minimize the sum of differences of consecutive reals in $A$.
The offline optimum is  obtained by placing the reals in sorted order in some $n$ cells of $A$.
We show that \OnlineSorting does not allow for constant factor competitive online algorithms. 

\begin{restatable}{theorem}{TheLowerBound} \label{thm:TheLowerBound}
Suppose that $\gamma,\Delta\geq 1$ are such that \OnlineSorting{}$[\gamma,n]$ admits a $\Delta$-competitive algorithm, then $\gamma\Delta = \Omega(\log n / \log \log n)$.
\end{restatable}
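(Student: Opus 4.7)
I would approach the lower bound via an adversary argument, constructing adaptively a stream $s_1,\dots,s_n \in [0,1]$ on which every online algorithm pays cost $\Omega(\log n/(\gamma\log\log n))$, while the offline optimum remains at most $1$.

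\textbf{Step 1: Hierarchical value space.} Fix a branching factor $b$ and a depth $k$ with $b^{k}\leq n$, to be optimized at the end; the target is $b=\Theta(\log n)$ and $k=\log n/\log b=\Theta(\log n/\log\log n)$. Build a complete $b$-ary tree $T$ of depth $k$: a node $v$ at depth $d$ is associated with an interval $I_v\subseteq[0,1]$ of length $b^{-d}$, and its $b$ children partition $I_v$ into equal sub-intervals. The adversary releases every real at the midpoint of some leaf's interval, which ensures that two reals whose lowest common ancestor in $T$ lies at depth $d$ differ by at least $\Omega(b^{-d-1})$ in value.

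\textbf{Step 2: Cost decomposition.} For any placement produced by the algorithm, categorize each adjacency of occupied cells in $A$ by the depth of the LCA of its two values in $T$, and let $T_d$ denote the number of adjacencies with LCA at depth exactly $d$. Then $\ALG\geq\sum_{d=0}^{k-1} T_d\cdot\Omega(b^{-d-1})$. The objective becomes: show that the adversary can force $T_d=\Omega(b^{d+1}/\gamma)$ for every relevant $d$, which yields $\ALG=\Omega(k/\gamma)$.

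\textbf{Step 3: Adaptive adversary.} The adversary descends $T$ level by level. It maintains, for every node $v$ of $T$, the set of cells of $A$ currently occupied by reals lying in $I_v$. At each step, it chooses the next leaf to release a real from based on this information, prioritizing the branch where the algorithm's commitments look most fragmented, and arranging that its choices at depth $d$ do not destroy its leverage to manufacture bad adjacencies at deeper levels.

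\textbf{Step 4: Per-level lower bound (main lemma).} The heart of the argument is proving $T_d=\Omega(b^{d+1}/\gamma)$. The intuition is a pigeonhole/fragmentation argument: inside any depth-$d$ interval $I_v$, the algorithm's $n/b^d$ reals must occupy some collection of cells in $A$; the adversary can force these cells to fragment into $\Omega(b/\gamma)$ runs in $A$ that mix the $b$ children of $v$, because the algorithm cannot online reserve an uninterrupted block of $\gamma n/b^d$ cells for every depth-$(d+1)$ sub-interval simultaneously. Summing over the $b^d$ depth-$d$ nodes gives $T_d=\Omega(b^{d+1}/\gamma)$.

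\textbf{Step 5: Parameter balancing.} Combining the per-level bounds gives $\ALG=\Omega(k/\gamma)$. With $b=\Theta(\log n)$ we obtain $k=\Theta(\log n/\log\log n)$; since $\OPT\leq1$ on the constructed stream, $\Delta=\Omega(\log n/(\gamma\log\log n))$, and hence $\gamma\Delta=\Omega(\log n/\log\log n)$.

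\textbf{Main obstacle.} The genuinely delicate part is Step 4, the fragmentation/interleaving claim. Because the adversary commits to each value irrevocably, it cannot simply threaten interleavings at every depth in isolation; I would prove the lemma by induction on $d$, maintaining invariants that track both an active subregion of $A$ and an active subtree of $T$, and that guarantee the algorithm has already paid its share of the per-level cost within the active region before the adversary commits to recurse into a child. Balancing the slack consumed at one level against the slack that must remain for deeper levels is what pins the bound on $\gamma\Delta$ to exactly $\log n/\log\log n$ up to constants.
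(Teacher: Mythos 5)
Your plan identifies the right multi-scale skeleton and the right parameters (branching factor $\Theta(\log n)$, depth $\Theta(\log n/\log\log n)$, difficulty-scale rising geometrically), and it is structurally cousin to what the paper does: the paper also presents values from progressively coarser grids $S_1\supset S_2\supset\cdots$ with spacing $s^i/n$ where $s=\log^{\Theta(1)}n$, and also forces the algorithm to pay for ``jumps'' at each scale. So the general shape is right, and you correctly locate where the bound $\gamma\Delta=\Omega(\log n/\log\log n)$ comes from.

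The gap is Step 4, and it is not merely delicate but is the entire theorem, and I do not see how to make it work in the form you state it. You claim the adversary can enforce a \emph{uniform} per-level bound $T_d=\Omega(b^{d+1}/\gamma)$ simultaneously for all $d$, and your sketch for proving it (induction on $d$ while recursing into an ``active subregion of $A$'' and an ``active subtree of $T$'') implicitly assumes you can confine the algorithm's placements to a shrinking subregion of the array. But the algorithm is free to place a real from any subinterval $I_v$ anywhere in $A$, so there is no active subregion to recurse into; an inductive structure that hands the next level a smaller array with the same problem does not exist here. The paper's proof deliberately avoids a per-level accounting for exactly this reason. Instead it keeps the whole array in play, defines a \emph{home} for each grid value (the empty intervals adjacent to copies of that value), repeatedly presents any value with a small home until it is placed outside it (producing a cheap-to-charge ``jump'' at the current scale), and when no such value exists it \emph{coarsens} and marks a ``deserted space'' of $\Omega(n/\delta)$ cells that cannot be filled cheaply by the coarser grid. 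The cost bound then follows from a dichotomy (many reals land in marked cells $\Rightarrow$ one phase alone is expensive; many land in unmarked cells $\Rightarrow$ the accumulated jump costs, summed via a disjoint-jump collection $\mathcal J$, are large), together with the volume argument that the disjoint deserted spaces of size $\Omega(n/\delta)$ cap the number of coarsening rounds at $O(\gamma\delta)$. That volume argument is what bounds the depth, not a direct per-level cost guarantee, and it is the piece your plan is missing. To salvage your approach you would need to either prove the per-level bound holds on average rather than uniformly, or replace the ``active subtree'' recursion with a global bookkeeping device playing the role of the paper's marked deserted cells.
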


We then use this insight to show that various packing problems do not allow for constant factor asymptotically competitive online algorithms.
In \StripPackingConv, we have a horizontal strip of height $1$ which is bounded to the left by a vertical segment and unbounded to the right.
The goal is to place the pieces so that we use a part of the strip of minimum length.
In \BinPacking, the pieces have to be placed in unit squares, and the goal is to use a minimum number of these squares.
In \PerimeterPacking, we can place the pieces anywhere in the plane, and the goal is to minimize the perimeter of their axis-parallel bounding box.
In $\SquarePacking[\delta]$, we receive a stream of pieces with diameter at most $\delta$ and total area at most $\delta$, and the goal is to place them in a unit square.
For more background on each of these packing problems and their relation to previous work, we refer to \cref{sec:pack}.

\begin{restatable}{theorem}{PackingMasterThm}\label{thm:PackingMasterThm}
The following holds, where $n$ is the number of pieces:
\begin{enumerate}[(a)]
\item \label{MP:a} \StripPackingConv\ does not allow for a competitive online algorithm, even if all pieces have diameter at most $\delta$ for any constant $\delta>0$.
In particular, the competitive ratio of any algorithm is $\Omega(\sqrt{\log n/\log\log n})$.

\item \label{MP:b} \BinPacking\ does not allow for a competitive online algorithm, even if all pieces have diameter at most $\delta$ for any constant $\delta>0$.
In particular, the competitive ratio of any algorithm is $\Omega(\sqrt{\log n/\log\log n})$.

\item \label{MP:d} \PerimeterPacking\ does not allow for a competitive online algorithm, even if all pieces have diameter at most $\delta$ for any constant $\delta>0$.
In particular, the competitive ratio of any algorithm is $\Omega(\sqrt[4]{\log n/\log\log n})$.

\item \label{MP:c} $\SquarePacking[\delta]$ does not allow for an online algorithm for any $\delta \in (0,1]$.
In particular, for any algorithm and infinitely many $n$, there exists a stream of $n$ pieces of total area $O(\sqrt{\log\log n/\log n})$ that the algorithm cannot pack in the unit square.
\end{enumerate}
Here, (a) and (b) even hold in the asymptotic sense, i.e., if we restrict ourselves to instances with offline optimal cost at least $C$, for any constant $C>0$.
\end{restatable}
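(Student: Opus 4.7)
The plan is to prove all four parts via a common reduction schema from \OnlineSorting{}, instantiated slightly differently for each packing variant, and then to invoke \cref{thm:TheLowerBound}. Given an online-sorting stream $s_1,\ldots,s_n\in[0,1]$, I would associate to each $s_i$ a thin convex polygon --- concretely a long skinny isosceles triangle (or parallelogram) whose ``spine'' has slope proportional to $s_i$ relative to the strip's long axis. The key geometric property is that stacking two such pieces with slopes $s_i$ and $s_j$ leaves a wedge-shaped gap whose width and area are proportional to $|s_i-s_j|$, whereas pieces of equal slope nest perfectly. Consequently, a column containing $k$ pieces placed in an arbitrary order wastes strip length proportional to the sum of differences of consecutive slopes --- exactly a sorting cost --- while in sorted order the waste telescopes to $O(1)$.

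For \StripPackingConv{} this gives the following reduction. I would subdivide the strip into vertical columns of a fixed small width $w$ and interpret the column into which an online strip-packer drops piece $P_i$ as the cell of the array $A$ into which an online sorter places $s_i$. A strip-packing solution of length $L$ uses at most $\gamma n$ columns with $\gamma=L/(wn)$, and the wasted length summed over columns upper-bounds the sorting cost by $O(L)$. Since the offline strip-packing optimum is $O(1)$ (place the pieces in sorted order as a neat fan), a $\Delta$-competitive packer yields an algorithm for $\OnlineSorting[\gamma,\Delta']$ with $\gamma\cdot\Delta'=O(\Delta^2)$, and \cref{thm:TheLowerBound} then gives $\Delta^2=\Omega(\log n/\log\log n)$ as claimed. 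The \BinPacking{} variant is essentially identical: the offline optimum uses $O(1)$ unit bins and each bin contributes $O(1/w)$ columns, so the same calculation applies.

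For \PerimeterPacking{} I would reuse the same family of pieces, but now the bound on waste comes from the perimeter of the convex hull; since perimeter is only a square-root proxy for the enclosed ``bad'' area, the reduction loses an additional square root, yielding the fourth-root exponent. For $\SquarePacking[\delta]$ I would instead scale the gadget so that the total area of the stream is $O(\sqrt{\log\log n/\log n})$, which is small enough to fit into the unit square when sorted as a fan but which, by \cref{thm:TheLowerBound}, any online algorithm must spread out to enclose area $\Omega(1)$, so eventually the square overflows and the algorithm is forced to fail outright.

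The main obstacle will be making the geometry-to-array correspondence rigorous without losing too much in the parameters: an adversary may place pieces at arbitrary real coordinates rather than in discrete columns, so one has to argue that any online placement can be quantized into $O(L/w)$ columns while only paying a constant factor in cost, and that the diameter-$\delta$ restriction can be enforced by uniformly scaling the gadgets without affecting the calculation. Finally, the asymptotic strengthening of parts (a) and (b) follows by concatenating many independent scaled copies of the hard instance: the offline optima add to exceed any constant $C$, while the per-copy lower bound from \cref{thm:TheLowerBound} still applies to each block.
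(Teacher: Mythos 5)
Your core reduction idea (encode each real $s_i$ as a thin parallelogram of shear/slope $s_i$, quantize the strip into columns that play the role of array cells, and invoke \cref{thm:TheLowerBound}) is exactly the paper's starting point, and the cost-accounting you describe (wasted gaps in a column sum to the sorting cost) is the content of the paper's \cref{lemma:stripToArrayComplete}. The quantization worry you flag is handled cheaply: parallelograms of base width $1/n$ quantize to $\lfloor n x \rfloor$ with no loss.

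The genuine gap is in how you propose to enforce the diameter-$\delta$ restriction. You say it ``can be enforced by uniformly scaling the gadgets without affecting the calculation,'' but that is not true, and this is where the real work in parts (a), (b), and (d) lies. If you scale a height-$1$ parallelogram down to diameter $\leq\delta$, the packer acquires two-dimensional freedom: it can place pieces at arbitrary $y$-coordinates, stacking $\Theta(1/\delta)$ of them vertically, and the correspondence between its $x$-position and a cell of a one-dimensional sorting array collapses. The paper resolves this with a non-obvious geometric trick: cover the unit-height container with thin \emph{vertical} slabs of width $1/c$ (where $c=\lceil 4/\delta\rceil$), and present the packer with rotated, ``$2$-extended'' parallelograms that have width exactly $2/c$ --- wide enough to straddle at least one full slab from left edge to right edge. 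Because each such piece fully crosses some slab $S^*_i$, its intersection with that slab is a congruent copy of the unscaled gadget, and reading off which slab it crosses recovers a one-dimensional packing in a strip of height $1/c$. Without a device of this kind (something that eliminates the packer's $y$-freedom), your reduction does not go through when $\delta<1$.

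Two smaller mismatches. For the asymptotic strengthening, the paper does not concatenate independent copies of the hard instance (which would raise the question of whether the adaptive lower bound still applies block-by-block after the algorithm has seen earlier blocks); it simply appends $\sqrt{f(n)}$ unit squares to a single hard stream, inflating $\OPT$ while leaving the hard part intact. For \PerimeterPacking{}, your ``perimeter is a square-root proxy for area'' heuristic lands on the right exponent, but the paper's actual argument is a clean reduction: a $C$-competitive perimeter packer yields a valid algorithm for $\SquarePacking[1/160C^2]$ (scaling by $1/4C$ and invoking \cref{thm:offline}\ref{OP:c} to bound the offline perimeter), and then part (d) forces $1/160C^2 = O(\sqrt{\log\log n/\log n})$, i.e.\ $C=\Omega(\sqrt[4]{\log n/\log\log n})$. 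That chain relies on the offline square-packing result, which your sketch does not invoke.
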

As indicated above, in the offline setting, all the problems in~\cref{thm:PackingMasterThm} have $O(1)$-approximations (\cref{thm:offline}) and also allow for $O(1)$-competitive algorithms in the online setting if the arriving pieces are axis-parallel rectangles or if rotations are allowed (see \cref{sec:pack}).

On the positive side, we present online algorithms for both online sorting and strip packing.
For \OnlineSorting{}$[\gamma,n]$, we distinguish two scenarios: the case without any extra space, i.e., $\gamma=1$, and the case $\gamma=1+\varepsilon$ a constant $\varepsilon>0$. In the case $\gamma=1$, we can provide an asymptotically tight analysis.

\begin{restatable}{theorem}{SortingUpperBoundBaby}\label{thm:SortingUpperBoundBaby}
There exists an online algorithm for \OnlineSorting{}$[1,n]$ with competitive ratio at most $18\sqrt{n}$. Every online algorithm of \OnlineSorting{}$[1,n]$ has competitive ratio at least $\sqrt{n/2}$.
\end{restatable}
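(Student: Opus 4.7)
\textbf{Upper bound ($18\sqrt{n}$).}
My plan is to partition $[0,1]$ into $k := \lceil\sqrt{n}\rceil$ equal-width intervals $I_1,\ldots,I_k$ and the array $A$ into $k$ consecutive blocks $B_1,\ldots,B_k$ of $k$ cells each. When $s \in I_j$ arrives, I place it in the leftmost empty cell of its ``home'' block $B_j$; if $B_j$ is full, I route $s$ to the nearest block (by index) that still has an empty cell. For the analysis I would bound the within-block cost of $B_i$ by (items in $B_i$)$\cdot$(value range in $B_i$), which is at most $k\cdot |J_i|/k = |J_i|$, where $J_i$ is the set of source intervals feeding $B_i$ (a set I want to keep contiguous). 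Summing, the total within-block cost is $\sum_i |J_i|$, which I bound by $O(\sqrt{n})$ via a double-counting argument: each interval $I_j$ with $n_j$ items occupies $O(n_j/k)+1$ consecutive blocks, and $\sum_j (n_j/k + 1) = n/k + k = O(\sqrt{n})$. The between-block transitions telescope to $O(1)$ because adjacent blocks carry overlapping sub-ranges of $[0,1]$, and tuning the constants yields the factor $18$.

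\textbf{Lower bound ($\sqrt{n/2}$).}
For the lower bound I would use an adaptive adversary that plays roughly $\sqrt{n/2}$ ``phases'' against any deterministic online algorithm. In each phase the adversary sends a small group of items whose values are chosen based on the algorithm's prior placements, with the goal of forcing at least one additional unit of cost. A natural tool is a potential function on the partially-filled array that captures some notion of ``forced future cost'' --- for instance, the number of empty cells whose current filled neighbours in the subsequence have values far apart, or the number of runs of same-valued items. The adversary would then choose each new value so that, regardless of where the algorithm places it, either an immediate jump in cost is incurred or the potential grows in a way that forces later cost. Since the offline optimum is always at most $1$, summing the cost increments across the $\approx\sqrt{n/2}$ phases gives the claimed competitive ratio lower bound.

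\textbf{Main obstacle.}
I expect the hardest step to be the design of the adversary. A naive strategy --- say, sending $n-\sqrt{n/2}$ copies of $1/2$ followed by adaptive values in $\{0,1\}$ --- fails, because an algorithm that leaves its empty cells consecutively at one end of $A$ can internally sort the trailing binary region and incur only $O(1)$ cost. A successful adversary must therefore exploit the constraint $\gamma=1$ (no spare cells for rearrangement), and must interleave extreme and middling values adaptively so that the algorithm never obtains a consolidated sortable region. On the upper-bound side, the chief subtlety is controlling the spillover so that every block receives items from only a short window of consecutive intervals; this is what keeps $\sum_i |J_i|=O(\sqrt{n})$ and keeps the between-block cost bounded by an absolute constant. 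Getting the constant down to $18$ will then be a routine tightening of the book-keeping.
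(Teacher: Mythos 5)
Your upper-bound algorithm is broken: routing spillover to the \emph{nearest} block by index, rather than to a block that \emph{already contains items from the same interval}, lets an adversary force cost $\Theta(n)$. Concretely, take $k=\lceil\sqrt n\rceil$ and proceed in rounds $j=1,\dots,k$. In round $1$, send $k$ reals $\approx 0$; they fill $B_1$. In round $j\ge 2$, send $k$ reals alternating between a value in $I_1$ (so $\approx 0$) and a value in $I_j$ (so $\approx j/k$). Each $I_1$-real finds $B_1,\dots,B_{j-1}$ full and lands in $B_j$, the nearest block with space; each $I_j$-real lands in its home $B_j$. Thus $B_j$ fills with a strictly alternating sequence whose consecutive differences are each $\approx j/k$, contributing within-block cost $\approx (k-1)\cdot j/k\approx j$. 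Summing over $j$ gives total cost $\Theta(k^2)=\Theta(n)$ while $\OPT=1$, so your algorithm is only $\Theta(n)$-competitive. The crucial missing idea is consolidation: the paper partitions $A$ into $2\lfloor\sqrt n\rfloor$ subarrays (twice as many as value intervals, so there is slack), and on receiving $x\in J_i$ it first tries to place $x$ in a non-full subarray that \emph{already holds a real from} $J_i$; only failing that does it open a fresh empty subarray, and when no fresh subarray remains it recurses on the union of still-empty cells. This prevents any interval's reals from scattering, and the recursion $T(n)\le T(n/2)+O(\sqrt n)$ yields $O(\sqrt n)$. (Your claim that between-block transitions ``telescope to $O(1)$'' is also false --- one can force $\Theta(\sqrt n)$ there --- but that is harmless since there are only $O(\sqrt n)$ boundaries each costing at most $1$.)

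Your lower-bound sketch is too vague to assess: ``phases,'' a ``potential function,'' and an adaptive adversary are named but not constructed, and you correctly observe that naive strategies fail, but no concrete strategy is given. The paper's argument is elementary and needs no such machinery. Fix $N=\lfloor\sqrt{2n}\rfloor$ and use only reals of the form $k/N$. Call such a real \emph{expensive} (at the current time) if no cell currently holding it has an empty neighbor. While some $k/N$ is expensive, present it: a newly placed expensive real necessarily differs from both of its eventual neighbors. If all $n$ reals can be presented this way, adjacent filled cells all differ by at least $1/N$, giving cost at least $\sqrt{n/2}$. Otherwise, at some moment no $k/N$ is expensive --- every value $k/N$ already sits next to an empty cell --- so present $0$'s until the array fills; then every value $k/N$ ends up adjacent to a $0$, and the cost is at least $\sum_{k=0}^{N}k/N=(N+1)/2>\sqrt{n/2}$.
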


As we describe in \cref{sec:tsp}, this can be seen as an asymptotically tight analysis of an online version of the travelling salesperson problem (TSP) on the real line. Indeed, we can imagine that we must visit $n$ cities on $[0,1]$ at time steps $1,\dots,n$. The position of each city is revealed to us in an online fashion and we immediately have to decide the time step where we visit this city.
In addition to packing and TSP, we believe that the online sorting problem can be useful when studying other online problems as well.

In contrast to \cref{thm:SortingUpperBoundBaby}, when the available space is a constant factor larger than $n$, there exists an algorithm with competitive ratio $n^{o(1)}$.
\begin{restatable}{theorem}{SortingUpperBound}\label{thm:SortingUpperBound}
Given a sufficiently large $n$ and $ \varepsilon \in [\log n / n , 1]$, there exists an algorithm for \OnlineSorting{}$[1+\varepsilon,n]$  with competitive ratio $2^{O\ld(\sqrt{\log n} \cdot \sqrt{\log \log n + \log (1/\varepsilon})\rd)}$.
\end{restatable}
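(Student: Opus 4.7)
The plan is to design a recursive bucketing algorithm with branching factor $b$ and depth $d$, and to tune $b = 2^{\Theta(\sqrt{\log n \log \log n})}$ and $d = \Theta(\sqrt{\log n / \log \log n})$ at the end so that $b^{d} \ge n$ and these two parameters balance against each other.

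At each recursion level, the algorithm partitions the current sub-interval of $[0,1]$ into $b$ equal buckets and the current sub-array into $b$ contiguous regions whose capacities slightly exceed their ``fair share'' using a fraction $\varepsilon/d$ of the remaining slack; the leftover slack is reserved as an overflow reservoir placed to the side. An arriving value is placed in the region of its bucket if there is room there, and otherwise in the overflow reservoir. Each region then recurses with slack reduced to $\varepsilon(1-1/d)$, after rescaling its bucket to $[0,1]$, while the overflow reservoir is handled by the $O(\sqrt{m})$-competitive algorithm of \cref{thm:SortingUpperBoundBaby}, where $m$ is the number of items landing in it.

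For the analysis I would split the algorithm's cost into boundary cost and overflow cost. At level $k$, each sub-interval has width $b^{-k}$, so the $O(b)$ transitions between sibling regions contribute $O(b \cdot b^{-k})$, and telescoping over $k = 0,\dots,d-1$ gives a total boundary contribution of $O(b)$. For the overflow cost, a pigeonhole argument on the per-level slack budget $\varepsilon/d$ bounds the fraction of items that can be diverted at any level, and invoking \cref{thm:SortingUpperBoundBaby} on each reservoir adds a further polynomial-in-$b,d$ factor that is absorbed into $2^{O(\sqrt{\log n \log \log n})}$ after we optimize the parameters.

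The main obstacle will be upper bounding the algorithm's cost by $\OPT(I)$ rather than by the trivial bound of $1$. When $\OPT(I)$ is small, the values are tightly clustered, so at the top recursion levels only a few regions are actually used and most of the slack propagates, unspent, into the deeper sub-instance where the clustering is resolved. Making this rigorous requires an inductive invariant at every level bounding the algorithm's cost on a sub-instance by a $2^{O(\sqrt{\log n \log \log n})}$ factor times that sub-instance's $\OPT$, which is itself at most the width $b^{-k}$ of the current sub-interval. Once this invariant is established, substituting the chosen values of $b$ and $d$ balances the $O(b)$ boundary term against the depth-$d$ recursion and yields the claimed competitive ratio.
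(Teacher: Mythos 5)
Your high-level plan---recursive bucketing with branching factor $b=2^{\Theta(\sqrt{\log n\log\log n})}$ and depth $d=\Theta(\sqrt{\log n/\log\log n})$, using \cref{thm:SortingUpperBoundBaby} as a base case---matches the paper's Lemma~\ref{lem:foranyk}. But the overflow-reservoir mechanism you propose is, as far as I can tell, broken, and it is exactly where the paper does something different.

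The difficulty is adversarial imbalance. Suppose all $n$ reals land in a single top-level bucket. That bucket's fixed region has capacity roughly $(1+O(\varepsilon/d))\,n/b$, so almost all of the $n$ items must be diverted. There is no pigeonhole constraint preventing this: the adversary freely chooses the bucket distribution, so ``the per-level slack budget $\varepsilon/d$ bounds the fraction of items that can be diverted'' does not hold. This creates two problems. First, correctness: the overflow reservoir lives inside the $(1+\varepsilon)n$-cell array alongside the regions, so it can have only $O(\varepsilon n)$ cells, and the $\approx n(1-1/b)$ diverted items cannot all be placed. Second, even if one patched feasibility, the cost of running the $O(\sqrt m)$-competitive base algorithm on an overflow of size $m=\Theta(n)$ is $\Omega(\sqrt n)\cdot\OPT_{\mathrm{overflow}}$; even in the most favorable case where all diverted reals lie in one bucket so $\OPT_{\mathrm{overflow}}=O(1/b)$, this is $\Omega(\sqrt n/b)=n^{\Omega(1)}$, which is vastly larger than $2^{O(\sqrt{\log n\log\log n})}$.

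The paper avoids an overflow reservoir entirely. Each bucket $i$ maintains a \emph{pointer} $p(i)$ to its current box; when that box fills up, $p(i)$ is advanced to a fresh box at the end of the array. So a heavily loaded bucket simply acquires more boxes rather than spilling into a shared reservoir. The cost is then bounded by (number of boxes)\,$\times\beta$ for transitions between boxes, plus the recursive cost within each box; the number of boxes is controlled ($O(n^{1/(k+1)}/\delta)$ at the top level) by accounting for the empty cells in full versus active boxes, which is where the $\varepsilon$ slack actually gets spent. Replacing your overflow reservoir with this pointer mechanism is the missing ingredient. A minor further simplification: your concern about proving the bound relative to a small $\OPT(I)$ is unnecessary, since $\OPT\geq 1$ always holds because of the sentinels $r_0=0$, $r_{n+1}=1$; the paper proves an \emph{absolute} cost bound of the form $\beta\cdot n^{1/(k+1)}\delta^{-O(k+1)}$ in Lemma~\ref{lem:foranyk} and the competitive ratio follows immediately.
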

We note that there is an exponential gap between the lower and upper bounds in~\cref{thm:TheLowerBound} and~\cref{thm:SortingUpperBound}. It is an interesting open problem to close this gap, say for  \OnlineSorting{}$[2,n]$.

There is a trivial $n$-competitive algorithm \StripPackingConv that places each of the $n$ pieces as deep into the strip as possible.
Improving upon this turns out to be quite challenging.
We present an online algorithm with competitive ratio $O(n^{\log 3-1}\log n)=O(n^{0.59})$, where $\log x$ denotes the base-$2$ logarithm of $x$.
The algorithm relies on a new technique for recursively subdividing the strip into parallelograms of varying height, thickness and slope. Each piece is then placed in a parallelogram of a suitable form and size.

\begin{restatable}{theorem}{PackingUpperBound}\label{thm:PackingUpperBound}
There exists an algorithm for \StripPackingConv\ with competitive ratio $O(n^{\log 3-1}\log n)$, where $n$ is the number of pieces.
\end{restatable}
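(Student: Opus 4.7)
The plan is to design a recursive algorithm whose analysis yields a recurrence of the form $f(n) \leq \tfrac{3}{2} f(n/2) + O(1)$, since this solves to $f(n) = O(n^{\log_2(3/2)}) = O(n^{\log 3 - 1})$. The additional $\log n$ factor will come from running the algorithm in parallel on $O(\log n)$ dyadic size classes. Unlike online sorting, where an $\Omega(\log n/\log \log n)$ lower bound rules out sub-polynomial ratios, strip packing appears to leave room only for a polynomial improvement over the trivial $n$-competitive ``place as deep as possible'' algorithm, and the target exponent is a strong hint toward a $3$-way branching, $2$-way splitting structure.

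First, I will classify each arriving convex polygon by a dyadic partition of its height, assigning class $k$ to pieces of height in $(2^{-k-1}, 2^{-k}]$. Each class receives its own disjoint stack of shelves inside the strip of height $1$, so that pieces from different classes cannot interfere and can be analyzed independently. This reduces the problem to packing pieces of comparable height, which is the regime where the fan-like offline optimum is most easily emulated.

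Second, within each class I will maintain a small, bounded number of candidate placement slots and route each incoming piece to one of these slots based on the slope of its spine (the segment joining its leftmost and rightmost vertices). When no slot admits the incoming piece, the algorithm consumes a fresh section of shelf and subdivides it into three new slots that will serve the next round of pieces in this class. This branching factor of three, combined with the doubling of pieces per round, is the combinatorial source of the $3/2$ in the recurrence.

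Third, I will carry out the competitive analysis. Per class, I aim to show that the total length used to place $m$ pieces of that class is at most $O(m^{\log 3 - 1})$ times the length the offline optimum would need for these $m$ pieces alone; the key geometric input is that convex polygons of comparable height and matched spine slope nest into a single slot with only constant-factor overhead relative to $\OPT$ restricted to those pieces. Summing over the $O(\log n)$ classes and using convexity of $m \mapsto m^{\log 3 - 1}$ (together with the fact that $\sum_k m_k = n$) yields the claimed $O(n^{\log 3 - 1}\log n)\cdot \OPT$ bound. The hardest step I anticipate is designing the slope-based routing rule so that against an adversarial slope sequence the ``three successor slots suffice'' invariant is preserved at every level of the recursion; this seems to require exploiting convexity in a way that goes beyond standard shelf-packing arguments for rectangles, and is precisely where the connection to \OnlineSorting breaks in our favor — we only need to ``sort'' pieces approximately into three buckets per level, rather than into a total order.
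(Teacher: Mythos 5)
Your high-level instinct is right and does point toward the paper's construction: the exponent $\log 3 - 1$ really does come from a $3$-way geometric branching structure (here: a ternary ``box type tree'' where each parallelogram-shaped box is subdivided into three sub-parallelograms, each of one-third the width, determined by which third of the top edge a piece's spine hits) combined with a $2$-way degree bound (the paper makes the tree balanced so that every internal node has $2$ or $3$ children, giving depth $\leq \log_2 n$ and hence smallest-piece area $\Omega(3^{-\log_2 n}) = \Omega(n^{-\log_2 3})$). Your height-class decomposition also matches the paper.

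However, there are two genuine gaps. First, you misattribute the extra $\log n$ factor to the $O(\log n)$ height classes; in the paper this factor already appears for a \emph{single} height class (Lemma~\ref{lemma:oneclass}), and it comes from bounding the total wasted area in \emph{near-empty} boxes: boxes that contain only one sub-box. The paper shows that \onlinepacker{} never creates more than two near-empty boxes of each type (Lemma~\ref{lemma:atmosttwo}), and then ``patches'' each near-empty box with a filler piece so that the combined filler area is $O(\log n)$; the height-class and width-class decompositions are then argued \emph{not} to cost additional log factors via a density argument rather than a sum of per-class competitive ratios. Second, and more importantly, the step you yourself flag as the hardest---establishing that ``three successor slots suffice'' against adversarial slopes---is exactly the missing content. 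The paper resolves it with a concrete geometric lemma (encoded in the box type tree together with Lemma~\ref{lemma:containSegment} and Lemma~\ref{lemma:suitablebox}): every height-$1$ spine segment fits in some box type at every dimension, and for any two distinct children of a box type $T$, unless one is $T\oplus[-1]$ and the other is $T\oplus[+1]$ they can coexist inside a single copy of $T$; this is what caps the number of near-empty boxes at two per type. Without this invariant your recurrence $f(n) \leq \tfrac{3}{2} f(n/2) + O(1)$ is not actually derived, only guessed from the target exponent. You would also need the reduction from general convex pieces to horizontal parallelograms (enclose each piece in a parallelogram of at most twice the area and width, which only costs constant factors), which your sketch omits but which is routine.
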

Another interesting open problem is to improve upon this.
Is it, for example, possible to obtain an $n^{o(1)}$-competitive algorithm for \StripPackingConv as we have for \OnlineSorting{}$[1+\varepsilon,n]$?
Because the sorting problem is much simpler than the packing problem, the lower bound from \cref{thm:TheLowerBound} implies a lower bound for \StripPackingConv, but the algorithm behind~\cref{thm:SortingUpperBound} does not lead to any packing algorithm.

\subsection{The necessity of sorting pieces by slope}
Our results in \cref{thm:PackingMasterThm} are in contrast to translational offline packing of convex polygons for which constant factor approximations exist. 
In a recent paper, Alt, de Berg, and Knauer~\cite{alt_convexOffline_JoCG,alt_convexOffline_corr} gave a constant-factor approximation algorithm for offline translational packing of convex polygons so as to minimize the area of their bounding box.
The algorithm works by first grouping the pieces into exponentially increasing height classes and then sorting the pieces in each height class by the slopes\footnote{The slopes are computed with respect to the $y$-axis, namely they are the inverse of regular slopes.} of their \emph{spine segments}; see \cref{fig:fanSorting}. The spine segment of a piece is the line segment from the bottommost to the topmost corner.
Placing the pieces in rows in this sorted order (so that each row appears as a fan-like pattern) yields a compact packing with constant density.

\begin{figure}[htb]
\centering
\includegraphics[page=10]{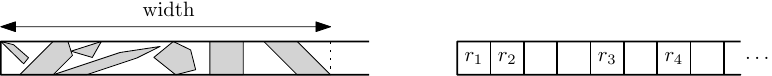}
\caption{A fan-like packing of pieces of nearly equal height.
The pieces are sorted according to the slope of their spine segments (dashed).}
\label{fig:fanSorting}
\end{figure}

We show that similar procedures yield constant factor approximations for the offline version of strip packing, bin packing, square packing and perimeter packing.

\begin{restatable}{theorem}{offline}\label{thm:offline}
There are polynomial-time offline algorithms for the following packing problems:
\begin{enumerate}[(a)]
\item \label{OP:a}
\textsc{Offline}-\StripPackingConv, $51$-approximation algorithm.
\item \label{OP:b} \textsc{Offline}-\BinPacking, $11$-approximation algorithm if the diameter of all pieces is bounded by $1/10$.\footnote{We note that we do not know if there is also a $O(1)$-factor approximation algorithm without a bound on the diameter.}
\item \label{OP:d}
\textsc{Offline}-\PerimeterPacking, $7.3$-approximation algorithm.
\item \label{OP:c}
\textsc{Offline}-\SquarePacking{}$[1/10]$, in particular, every set of convex polygons of diameter and total area at most $1/10$ can be packed into the unit square.
\end{enumerate}
\end{restatable}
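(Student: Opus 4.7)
The plan is to extend the fan-packing strategy of Alt, de Berg, and Knauer~\cite{alt_convexOffline_JoCG,alt_convexOffline_corr} to each of the four objectives. The core observation underpinning fan packing is that if a set $P$ of convex polygons has heights in $(h/2,h]$ and is ordered by the slopes of their spine segments, then sliding them together with aligned baselines packs them inside a rectangle of height $h$ and width $O(\sum_{p\in P}\pwidth(p))$. Combined with the inequality $\area(p)\ge \tfrac12\,\pwidth(p)\,\pheight(p)$, which holds for any convex polygon (its four bounding-box extreme points already span a quadrilateral of area at least half the bounding box), the fan packing achieves constant density.

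Starting from this, I would bucket the pieces into height classes $H_i=\{p:\pheight(p)\in(2^{-i-1},2^{-i}]\}$ and fan-pack each class into a single \emph{super-row} of height $2^{-i}$ and width $O(\sum_{p\in H_i}\pwidth(p))$. The constant-density bound then implies that the super-rows have total area $O(A)$, where $A=\sum_p\area(p)$.

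For part~(a), I would feed the super-rows into Next-Fit-Decreasing-Height on a strip of height $1$: every super-row has height at most $1$ and total area $O(A)$, so the resulting length is $O(A)$, matching $\OPT\ge A$ up to constants. For part~(b), the same recipe applied to unit-square bins succeeds because the diameter bound $1/10$ lets every piece fit in a bin and $\OPT\ge A$ still holds. For part~(d), I would fan-pack all pieces into shelves of a common height $h$ and stack them, obtaining a bounding box of dimensions roughly $h\times A/h$; setting $h\approx\sqrt{A}$ balances the sides and yields perimeter $O(\sqrt{A})$, which matches the lower bound $\Omega(\sqrt{A})$ from the isoperimetric inequality (together with the trivial lower bound $\Omega(\max_p \text{diam}(p))$ by containment, for instances dominated by one large piece). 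For part~(c), the diameter and area thresholds $\delta=1/10$ ensure that every super-row has height at most $\delta$ and that the super-rows' total area remains a small fraction of the unit square, so a greedy shelf packing fits inside it and we obtain the constructive claim.

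The main obstacle is tuning the constants. The asymptotic structure is transparent, but squeezing out the specific numbers $32.7$, $10.1$, $8.9$, and the $1/10$ diameter/area threshold demands careful accounting of the wasted space at three separate stages: the slack inside each fan packing, the end-of-row gaps introduced by the NFDH-style stacking, and the unfilled ribbon above the tallest shelf in the final bin, strip, or unit square. In particular, for part~(c) the algorithm must \emph{never} overflow the unit square, so the relationship between $\delta$ and the fan-packing density constant must be made explicit rather than hidden inside big-$O$ notation.
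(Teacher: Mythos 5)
Your proposal takes essentially the same route as the paper: both reduce every variant to the Alt--de Berg--Knauer fan/mini-container packing, group pieces into geometric height classes, and then arrange the resulting shelves by a next-fit style stacking tuned to the target objective (strip, bin, square, or perimeter). The paper additionally exploits the explicit area bound on the mini-containers (their Equation~(1), a function of the shelf-shrink parameter $\alpha$ and the width-padding parameter $c$) and then numerically optimizes $\alpha$ and $c$ for each objective to extract the concrete constants $32.7$, $10.1$, $8.9$, and the $1/10$ threshold; it also derives (b) by bounding the density achievable in a single unit square (their proof of (c)/square packing) rather than treating bin packing separately. You flag exactly this constant-chasing as the remaining work, so the gap is one of bookkeeping rather than approach.
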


The contrast between \cref{thm:PackingMasterThm} and \cref{thm:offline} indicates that sorting the pieces by the slope of their spine segments is essential for obtaining an efficient packing.
In particular, we use the lower bound from \cref{thm:TheLowerBound} to create an adaptive stream of pieces that will force any packing algorithm to use excessive space.
In the reduction, the numbers to be sorted in the online sorting problem correspond to the slopes of the spine segments in the packing problems, and the impossibility of placing the numbers in nearly sorted order implies that the packing algorithm is forced to produce an arrangement that is far from an optimal fan-like pattern.

\subsection{Relation to previous work on packing}\label{sec:pack}

The literature on online packing problems is vast.
See the surveys of Christensen, Khan, Pokutta, and Tetali~\cite{CHRISTENSEN201763}, Epstein and van Stee~\cite{epstein2018multidimensional}, van Stee~\cite{DBLP:journals/sigact/Stee12,DBLP:journals/sigact/Stee15}, and Csirik and Woeginger~\cite{Csirik1998} for an overview.
Below we survey the most important results related to the packing problems studied in this paper.
Let us highlight that when the pieces are restricted to axis-parallel rectangles, there are online algorithms with constant competitive ratios solving all the problems of \cref{thm:PackingMasterThm}.

\paragraph{Strip packing.}
In strip packing, we have a horizontal strip of height $1$ bounded by a vertical segment to the left but unbounded to the right.
The goal is to place the pieces in the strip so as to minimize the length of the part of the strip that has been used.
Milenkovich~\cite{DBLP:conf/stoc/Milenkovic96,DBLP:journals/algorithmica/Milenkovic97} and Daniels and Milenkovich~\cite{milenkovich1999translational,DBLP:journals/algorithmica/DanielsM97} described exact and approximation algorithms for offline strip packing where the pieces are simple or convex polygons.

Baker and Schwarz~\cite{baker1983shelf} described the first algorithms for online strip packing of rectangular pieces.
The FFS (First Fit Shelf) algorithm has a competitive ratio of $6.99$ under the assumption that the width of each rectangle is at most $1$.
Ye, Han, and Zang~\cite{YeOnlineStrip}
 improved the algorithm and obtained a competitive ratio of $\nicefrac{7}{2}+\sqrt{10}\approx 6.6623$ without the restriction on rectangle heights. 
Restricting the attention to large instances, FFS has an asymptotic competitive ratio that can be made arbitrarily close to $1.7$.
Csirik and Woeginger~\cite{Csirik1997shelf} described an improved algorithm with an asymptotic competitive ratio arbitrarily close to $h_\infty\approx 1.69103$.
This was later improved to $1.58889$ by Han, Iwama, Ye, and Zhang~\cite{han2007strip}.
In contrast, we show that when the pieces are convex polygons, then no competitive algorithm exists (\cref{thm:PackingMasterThm}~\ref{MP:a}).

\paragraph{Bin packing.}
In bin packing, we have an unbounded supply of identical containers, and the goal is to pack the pieces into as few containers as possible.
As mentioned, online bin packing problems have been studied since the early 1970s~\cite{Fiat1998}.
Many papers have been devoted to the study of online translational bin packing axis-parallel rectangular pieces into unit square bins.
In long sequences of papers, the upper bound on the asymptotic competitive ratio for this problem has been decreased from $3.25$ to $2.5545$ and the lower bound has been increased from $1.6$ to $1.907$~\cite{han2011new}.
In this paper, we show that when packing convex polygons instead of axis-parallel rectangles, there is no competitive algorithm (\cref{thm:PackingMasterThm}~\ref{MP:b}).

\paragraph{Perimeter packing.}
In some packing problems, the ``container'' has no predefined boundaries (contrary to the cases of strip and bin packing and the study of critical densities), but the pieces can be placed anywhere in the plane and the container is dynamically updated as the bounding box or the convex hull of the pieces.
The goal is then to minimize the size of the container.
In 2D versions of this problem, natural measures of size are the area or the perimeter of the container.
Many papers have been written about offline versions of these problems~\cite{DBLP:conf/stoc/Milenkovic96,MILENKOVIC19993,milenkovich1999translational,ahn2012aligning, althurtado,leewoo,PARK20161,DBLP:journals/eatcs/Alt16,alt_convexOffline_JoCG,lubachevsky2003dense,SPECHT201358,LUBACHEVSKY20091947,erdos1975packing,chung2019efficient}.

Online versions have received relatively little attention.
Fekete and Hoffmann~\cite{fekete2017online} studied online packing axis-parallel squares so as to minimize the area of their bounding square, and gave an $8$-competitive algorithm for the problem.
Abrahamsen and Beretta~\cite{AbrahamsenBeretta20} gave a $6$-competitive algorithm for the same problem and studied the more general case where the pieces are axis-parallel rectangles and we want to minimize the bounding box, with or without rotations by $90^\circ$ allowed.
They gave a $3.98$-competitive algorithm for minimizing the perimeter and showed that there exists no competitive algorithm for minimizing the area, when the pieces can be arbitrary rectangles.

If the pieces are convex polygons that can be arbitrarily rotated, then the minimum perimeter problem can be reduced to the case of packing axis-parallel rectangles by first rotating each piece so that a diameter of the piece is horizontal.
Then the density of the piece in its axis-parallel bounding box is at least $1/2$, and the algorithm for rectangles can be applied to the bounding box.
An interesting question that remained open was therefore whether there is a competitive algorithm for minimizing the perimeter when the pieces are convex polygons that can \emph{not} be rotated.
We answer this question in the negative (\cref{thm:PackingMasterThm}~\ref{MP:d}).

\paragraph{Critical densities and square packing.}
The study of critical densities dates back at least to the 1930s.
In the famous \emph{Scottish Book}~\cite{mauldin2020scottish}, Auerbach, Banach, Mazur and Ulam gave the following theorem (slightly rephrased) and corollary without a proof.

\begin{theorem*}[Potato Sack Theorem, \cite{mauldin2020scottish}]
If $\{K_n\}_{n=1}^\infty$ is a sequence of convex bodies in $\mathbb R^3$, each of diameter $\leq \delta$ and the sum of their volumes is $\leq V$, then there exists a cube with sidelength $s=f (\delta,V)$ such that all the given bodies can disjointly be placed into it when rotations are allowed.
\end{theorem*}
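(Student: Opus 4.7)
The plan is to reduce the problem, via John's theorem, to packing axis-aligned boxes of total volume $O(V)$ with every side at most $O(\delta)$, and then to pack the boxes using a dyadic shelf scheme. My target bound will be $f(\delta,V) = O(V^{1/3}+\delta)$.

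For the reduction step, I would apply John's theorem in $\mathbb{R}^3$ to each convex body $K_i$, obtaining an ellipsoid $E_i$ with $E_i \subseteq K_i \subseteq 3 E_i$ and semi-axes $a_i \geq b_i \geq c_i$. Rotating so the axes of $E_i$ align with the coordinate axes, $K_i$ sits inside an axis-aligned box $B_i$ of dimensions $6a_i \times 6b_i \times 6c_i$. Since $2a_i \leq \operatorname{diam}(K_i) \leq \delta$, each side of $B_i$ is at most $3\delta$, while $\operatorname{vol}(B_i) = 216\, a_i b_i c_i \leq (162/\pi)\operatorname{vol}(K_i)$, so the total box volume is $O(V)$. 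Because individual rotations are allowed in the theorem, the problem reduces to packing the axis-aligned boxes $B_i$ into a cube.

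Next, I would round each side of each $B_i$ up to the nearest number of the form $3\delta\cdot 2^{-j}$ with $j \geq 0$; this inflates total volume by at most a factor of $8$. Boxes of a fixed dyadic type $(j_1,j_2,j_3)$ tile a \emph{super-cube} of side $3\delta$ exactly, since $3\delta\cdot 2^{-j}$ evenly divides $3\delta$. Packing greedily into super-cubes and then arranging $M$ super-cubes as a grid gives an enclosing cube of side $3\delta \lceil M^{1/3}\rceil$.

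The main obstacle is bounding $M$. Packing each dyadic type independently would waste up to one super-cube per nonempty type, and since the sequence of bodies can be infinite the number of types can be infinite too. To avoid this I would run a global first-fit procedure on boxes sorted by decreasing volume: whenever a new box does not fit in any already-opened super-cube, open a new one. Boxes of type $(j_1,j_2,j_3)$ with $j_1+j_2+j_3$ large have volume at most $(3\delta)^3/2^{j_1+j_2+j_3}$, so they slip into the residual capacity of previously opened super-cubes, and charging the wasted volume against the overall volume budget yields $M = O(V/\delta^3 + 1)$, giving $s = O(V^{1/3}+\delta)$. Making this charging argument rigorous—in particular, formally justifying that each small dyadic box can always be slotted into an already-opened super-cube whose larger boxes leave axis-aligned dyadic-sized gaps—is the technically delicate part and where I would expect to spend the most effort.
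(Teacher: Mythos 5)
The paper does not prove the Potato Sack Theorem; it only states it, attributing the original (proof-free) formulation to the Scottish Book and pointing to Kosi\'{n}ski~\cite{kosinski1957proof} for a proof of the $d$-dimensional generalization, and to Lassak and Zhang~\cite{lassak1991line} for the online variant. So there is no ``paper proof'' to compare against; I can only assess your argument on its own terms.

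Your reduction step is fine: John's theorem gives $E_i\subseteq K_i\subseteq 3E_i$, and from $E_i\subseteq K_i$ you correctly get $2a_i\le\mathrm{diam}(K_i)\le\delta$, so each bounding box $B_i$ has all sides at most $3\delta$ and volume at most $\tfrac{162}{\pi}\,\mathrm{vol}(K_i)$. (A cheaper route, and the one referenced in the paper via Lassak--Zhang, is the classical fact that a convex body in $\mathbb R^d$ can be rotated so that its axis-parallel bounding box has volume at most $d!$ times the body's volume, giving constant $6$ instead of $162/\pi$ in $\mathbb R^3$; but John works too.) The dyadic rounding step and the observation that boxes of a \emph{single} dyadic type tile a super-cube of side $3\delta$ are also correct.

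The genuine gap is exactly where you flagged it. Global first-fit on boxes sorted by decreasing volume does \emph{not} obviously guarantee that a small dyadic box can be slotted into an already-opened super-cube even when that super-cube has plenty of free volume: once boxes of several incompatible dyadic types coexist in one super-cube, the free region is a non-dyadic union of corners and slabs, and nothing in your charging sketch shows that a box whose volume is below the free volume actually \emph{fits}. In fact, mixing dyadic types inside one container is precisely what makes multidimensional bin packing hard; there is no ``residual capacity is dyadic'' invariant to appeal to. A concrete failure mode: a super-cube containing a single $\tfrac12\times\tfrac12\times\tfrac12$ corner box plus a $1\times1\times\tfrac12$ slab has $\tfrac38$ of its volume free, yet cannot accept another $1\times1\times\tfrac12$ slab, so the order in which types arrive matters in ways that ``sorted by volume'' alone does not control. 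The standard way to close this, and essentially what Kosi\'{n}ski-type proofs do, is a \emph{shelf} argument rather than interleaving types inside a cube: sort boxes by one coordinate (say height), group them into height classes $h\in(3\delta\cdot2^{-(j+1)},3\delta\cdot2^{-j}]$, pack each class greedily into horizontal slabs of thickness $3\delta\cdot2^{-j}$ and footprint $3\delta\times3\delta$ (with the 2D sub-packing done by a row/shelf scheme), stack the slabs, and observe that at most one slab per height class is less than half full, so the total wasted volume across all slabs is $O(\delta^3)$. Stacking the resulting column into a cube of side $O(V^{1/3}+\delta)$ then finishes. If you replace the first-fit-across-types step with that shelf decomposition, the rest of your write-up goes through and yields the bound $f(\delta,V)=O(V^{1/3}+\delta)$ you were targeting.
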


\begin{corollary*}
One kilogram of potatoes can be put into a finite sack.
\end{corollary*}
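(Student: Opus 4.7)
My plan is to reduce the problem to packing axis-aligned boxes, exploiting the freedom to rotate together with the convexity of each body. For each $K_n$, first rotate so that some diameter is aligned with the $z$-axis; after translation, its endpoints become $A_n=(0,0,0)$ and $B_n=(0,0,\delta_n)$ with $\delta_n\le\delta$. Then $K_n$ lies in the slab $\{0\le z\le\delta\}$, and its axis-aligned bounding box $R_n$ has side lengths $\ell_n^{(x)},\ell_n^{(y)},\delta_n$. The geometric lemma I would prove is $\operatorname{vol}(R_n)\le 6\,\operatorname{vol}(K_n)$: convexity guarantees that $K_n$ contains the tetrahedron with vertices $A_n,B_n$ and two further points $P_n,Q_n\in K_n$ that realize the full $x$- and $y$-widths of $R_n$, and a short determinant computation gives that this tetrahedron has volume $\ell_n^{(x)}\ell_n^{(y)}\delta_n/6 = \operatorname{vol}(R_n)/6$.

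Once each body has been inflated to its bounding box $R_n$, the task becomes packing a countable sequence of axis-aligned boxes with side lengths at most $\delta$ and total volume at most $6V$ into a cube. I would do this via a three-dimensional shelf-packing scheme: partition the $R_n$ into dyadic size classes $C_k=\{n:\max_i\ell_n^{(i)}\in(\delta 2^{-k-1},\delta 2^{-k}]\}$, dedicate a horizontal layer of height $\delta 2^{-k}$ of the cube to $C_k$, and within that layer run a two-dimensional shelf packing on the $xy$-bases. A standard density analysis shows that each class wastes at most a constant fraction of the volume it occupies, and since the class volumes sum to at most $6V$ the total consumed volume is $O(V+\delta^3)$. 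Taking $f(\delta,V)=\Theta\bigl((V+\delta^3)^{1/3}\bigr)$ as the cube side then finishes the argument; the infiniteness of the sequence is harmless because within each class only the cumulative volume matters, not the number of items.

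The step I expect to be hardest is nailing down the bounding-box-versus-body lemma with a clean absolute constant: sheared, needle-like convex bodies whose diameter is not well aligned with their widest cross-sections force some care in choosing the inscribed tetrahedron and in controlling the constants. Once that geometric input is in hand, the dyadic shelf scheme is routine, and the infinite tail of the sequence is automatically absorbed into a geometric series indexed by $k$. The silly corollary about potatoes then follows by taking $\delta$ to be the largest potato diameter and $V$ the total volume of one kilogram of potatoes, which is finite since potatoes have positive density.
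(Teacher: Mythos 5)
The paper does not prove this corollary at all: both the Potato Sack Theorem and the corollary are quoted from the Scottish Book, with the theorem's proof credited to Kosi\'nski. The corollary is an immediate instantiation of the theorem (take $\delta$ to be the largest potato diameter and $V$ the total volume of one kilogram of potatoes), so what you are really offering is a from-scratch proof of the theorem itself. Your overall strategy --- rotate each body so its axis-aligned bounding box is not much larger than the body, then shelf-pack the boxes by dyadic size classes --- is exactly the approach the paper attributes to Lassak and Zhang for the online version, so in spirit you are on the right track. The shelf-packing half and the geometric-series absorption of the infinite tail are indeed routine once the bounding-box lemma is in hand.

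The gap is precisely where you feared it would be, and it is not merely a matter of constants: aligning a diameter with the $z$-axis does \emph{not} yield $\operatorname{vol}(R_n)\le 6\operatorname{vol}(K_n)$, and the tetrahedron computation as written is incorrect. The tetrahedron $\operatorname{conv}\{A_n,B_n,P_n,Q_n\}$ has volume $\tfrac{1}{6}\,\delta_n\,|p_xq_y-p_yq_x|$, and there is no reason for $|p_xq_y-p_yq_x|$ to equal $\ell_n^{(x)}\ell_n^{(y)}$ --- a single point $P_n$ cannot ``realize the full $x$-width'' (that requires two extremal points), and the cross term can even vanish. Worse, the lemma is false as stated for the diameter rotation: take $K=\operatorname{conv}\{(0,0,0),(0,0,1),(\tfrac12,\tfrac12,\tfrac12),(\tfrac12,\tfrac12+\varepsilon,\tfrac12)\}$. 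Its diameter is the $z$-segment of length $1$, its bounding box has volume about $\tfrac14$, but $\operatorname{vol}(K)=\Theta(\varepsilon)$, so the ratio blows up as $\varepsilon\to 0$. Aligning a diameter controls only one side of the box; the sliver can sit diagonally in the orthogonal plane and waste almost all the bounding volume. To get the $d!$ bound (so $6$ in dimension $3$) you need the standard rotation coming from a maximum-volume inscribed simplex (orthogonalize its edge vectors), or an equivalent recursive width-alignment --- this is what Lassak--Zhang and Macbeath-type arguments do. With that lemma in hand your shelf-packing step and the trivial deduction of the corollary go through.
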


A simple proof of the theorem, generalized to an arbitrary dimension, was given by Kosi{\'n}ski~\cite{kosinski1957proof}.
It was asked in~\cite{mauldin2020scottish} to determine the function $f (\delta,V)$, which, in modern terms, means to determine the \emph{critical density}.
That is, to find the largest value of $V$ such that a sequence of convex bodies of diameters at most~$\delta$ and total volume $V$ can always be placed in the unit cube.
This theorem has sprouted a lot of interest in determining critical densities in various settings.
For instance, Moon and Moser~\cite{moon1967some} proved that any sequence of $d$-dimensional cubes of total volume $1/2^{d-1}$ can be packed into the unit cube.
As two cubes with sidelengths $1/2+\varepsilon$, for any $\varepsilon>0$, cannot be packed in the unit cube, this shows that the critical density of packing cubes into a unit cube is $1/2^{d-1}$ for any $d\geq 1$.
Alt, Cheong, Park, and Scharf~\cite{alt2019packing} showed that there exist $n$ 2D unit disks embedded in 3D (with different normal vectors) such that whenever they are placed in a non-overlapping way, their bounding box has volume $\Omega(\sqrt n)$.
It follows that when rotations are not allowed, the critical density of packing convex bodies of bounded diameter into a cube is $0$, or, in other words, that one kilogram of potatoes cannot always be put into a finite sack by translation.
In contrast to this, the critical density of packing convex polygons of bounded diameter into the unit square by translation is positive when the diameter is sufficiently small, as we prove in \cref{thm:offline}~\ref{OP:c}.

The study of critical densities likewise makes sense when the pieces appear in an online fashion.
A lower bound on the critical density of online packing squares into the unit square has been improved in a sequence of papers~\cite{januszewski1997line,Brubach2014improved,han2008online,fekete2017online} from $5/16$~\cite{januszewski1997line} to $2/5$~\cite{Brubach2014improved}.
Interestingly, Januszewski and Lassak~\cite{januszewski1997line} proved that in dimension $d\geq 5$, the critical density of online packing cubes into the unit cube is $1/2^{d-1}$, just as in the offline case.

Lassak and Zhang~\cite{lassak1991line} proved that the Potato Sack Theorem also holds for any dimension $d\geq 1$ when the convex bodies appear online, if rotations are allowed. In order to achieve this, each convex body of volume $V$ is rotated so that it has an axis-parallel bounding box of volume at most $d!\cdot V$.
The problem is therefore reduced to online packing axis-parallel boxes.
In simplified terms, it is then proved that for some constant $\delta=\delta(d)>0$, any sequence of axis-parallel boxes of diameter and total area at most $\delta$ can be packed online in the $d$-dimensional unit hypercube.
Determining whether the critical density of translational \emph{and} online packing convex 2D polygons is positive remained an interesting question:
On one hand, this packing problem is harder than the 2D offline version which has positive critical density (\cref{thm:offline}~\ref{OP:c}), and on the other hand, it is easier than the 3D online version which has $0$ critical density (since also the 3D offline version has $0$ critical density~\cite{alt2019packing}).
In this paper, we prove that the 2D \emph{online} version also has critical density~$0$ (\cref{thm:PackingMasterThm}~\ref{MP:c}).
\Cref{table:packing} gives a brief overview of how the result relates to the existing results.

\begin{table}
\centering
\begin{tabular}{|l|l|l|l|l|l|l|}
\hline
Objects & Dimension & Online/offline & Rotation allowed? & Critical density & Reference \\
\hline\hline
Convex & $d\geq 1$ & -- & Yes & $>0$ & \cite{lassak1991line} \\
Axis-parallel boxes & $d\geq 1$ & -- & -- & $>0$ & \cite{lassak1991line} \\
Convex & $3$ & -- & No & $0$ & \cite{alt2019packing} \\
Convex & $2$ & Online & No & $0$ & \Cref{thm:PackingMasterThm} \ref{MP:d} \\
Convex & $2$ & Offline & -- & $>0$ & \Cref{thm:offline} \ref{OP:c} \\
\hline
\end{tabular}
\caption{Results on the critical densities for packing convex objects of diameter at most some sufficiently small constant $\delta>0$ in a unit cube.
A dash means that the result holds regardless.}
\label{table:packing}
\end{table}

\paragraph{Packing irregular pieces.}
Compared to the amount of existing work on packing axis-parallel rectangles or cuboids, the theoretical algorithmic work on packing more general objects is still quite scarce. 
In addition to the previously mentioned papers~\cite{alt_convexOffline_JoCG,alt_convexOffline_corr,alt2019packing,lassak1991line,DBLP:conf/stoc/Milenkovic96,DBLP:journals/algorithmica/Milenkovic97,milenkovich1999translational,DBLP:journals/algorithmica/DanielsM97,MILENKOVIC19993}, we refer the reader to the work on the knapsack problem for convex polygons under rigid motions by Merino and Wiese~\cite{DBLP:conf/icalp/MerinoW20}, the $\exists\mathbb R$-hardness of many variants of offline packing problems by Abrahamsen, Miltzow and Seiferth~\cite{DBLP:conf/focs/AbrahamsenMS20}, and the approximation algorithms for offline packing convex polyhedra in three dimensions under rigid motions by Alt and Scharf~\cite{DBLP:journals/ijcga/AltS18}.

The scarcity of theoretical work is in contrast to the staggering amount of work from the more applied angle of operations research, with numerous papers describing and evaluating heuristics for packing irregular objects.
We refer to some surveys for an overview~\cite{bennell2008geometry, bennell2009tutorial, dyckhoff1992cutting, hopper2001review,
leao2020irregular, sweeney1992cutting}.

\subsection{Online sorting, TSP and scheduling}\label{sec:tsp}

\cref{thm:SortingUpperBoundBaby} can be seen as an asymptotically tight analysis of the traveling salesperson problem (TSP) on the real line, following the \emph{online-list} paradigm:
We want to visit $n$ cities in the unit interval $[0,1]$ over the course of $n$ days, one city per day. The positions of the cities are revealed sequentially to us in an online fashion, and for each city, we have to immediately decide which day to visit that city. Our goal is to minimize the total distance travelled. In fact, we could equally well imagine that we had $\gamma n$ days for our tour which corresponds to \OnlineSorting{}$[\gamma,n]$.

The usually studied version of online TSP follows the \emph{online-time} paradigm.
Here, a \emph{server} starts at point $0$ at time $0$ and moves with at most unit speed.
Each request $\sigma_i=(t_i,r_i)$ is revealed at some time $t_i$ and should be visited by the server at time $r_i$ or later.
We want to minimize the time before the server has visited all requests $\sigma_1,\ldots,\sigma_n$.
This problem has been intensely studied~\cite{bjelde2021tight}. 
The distinction between these two paradigms is common in the area of scheduling, but the online-list variant of TSP has apparently not received any attention so far.

Fiat and Woeginger~\cite{fiat1999line} studied a scheduling problem following the online-list paradigm that seems particularly related to online sorting:
The goal is to minimize the average job completion time in a system with $n$ jobs and a single machine.
In every step, a single new job arrives and must be scheduled to its time slot immediately and irrevocably and without knowledge of the jobs that arrive in later steps.
The offline optimum is to schedule the jobs according to their processing times in sorted order.
It was shown that no algorithm can be $\log n$-competitive, but that there is a $O(\log^{1+\varepsilon} n)$-competitive algorithm for all $\varepsilon>0$.
For surveys on (online) scheduling, see~\cite{sgall1998online,pruhs2004online,graham1979optimization,chen1998review,albersOnlineScheduling}.

\subsection{Structure of the paper}

In \cref{sec:terminology}, we introduce our terminology and notation.
In \cref{sec:problems}, we describe the connection between the problems of online sorting and online strip packing.
In \cref{sec:onlinesorting}, we analyze the online sorting problem and prove \cref{thm:TheLowerBound,thm:SortingUpperBoundBaby,thm:SortingUpperBound}.
In \cref{sec:onlinepacking}, we study online packing problems and present proofs for \cref{thm:PackingMasterThm,thm:PackingUpperBound}.
Finally, in \cref{sec:offlinepacking}, we consider the offline versions of the packing problems and prove \cref{thm:offline}.
See \cref{fig:reductions} for an overview of the reductions we make.

\begin{figure}[htb]
\centering
\includegraphics[page=15,width=\textwidth]{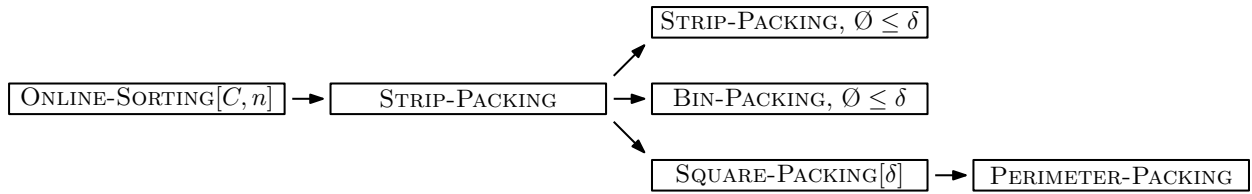}
\caption{An overview of our reductions.
Note that an arrow from problem $A$ to problem $B$ means that an algorithm for $B$ implies an algorithm for $A$.
Here, $Ø \leq\delta$ means that the diameter of each piece is at most an arbitrary constant~$\delta>0$.
}
\label{fig:reductions}
\end{figure}

\section{Terminology}\label{sec:terminology}

In the online problems studied in this paper, the input is a stream $\sigma_1,\ldots,\sigma_n$ of objects, and we need to handle each object $\sigma_i$ before we know the next ones $\sigma_{i+1},\ldots,\sigma_n$.
Here, objects are either real numbers from the unit interval $[0,1]$, in which case we call them \emph{reals}, or they are convex polygons, in which case we call them \emph{pieces}.
The problems will be defined in more detail in Section~\ref{sec:problems}.

Let us now revisit the standard terminology of competitive analysis for an online algorithm $\A$ of a minimization problem $\mathcal P$.
For any instance $I$ of $\mathcal P$, let $\OPT(I)$ denote the cost of the offline optimum solution and $\A(I)$ denote the cost of the solution that $\A$ produces on input $I$.
Let $f$ be a function from the set of instances to the real numbers (the functions $f$ we consider will in fact only depend on $n$, the number of pieces in $I$).
We say that $\A$ has \emph{(absolute) competitive ratio} $f(I)$ if for all instances $I$ it holds that
\[
\A(I)\leq f(I)\cdot \OPT(I).
\]
If $\A$ has competitive ratio $f(I)\leq c$ for some constant $c$, then we say that $\A$ is \emph{competitive}.

Similarly, we say that $\A$ has \emph{asymptotic competitive ratio} $f(I)$ if there exists $\beta>0$ such that for all instances $I$ it holds that
\[
\A(I)\leq f(I)\cdot \OPT(I)+\beta.
\]

For a point $p\in\mathbb R^2$, we denote by $x(p)$ and $y(p)$ the $x$- and $y$-coordinates of $p$, respectively.
For a compact set of points $S\subset\mathbb R^2$, we define the \emph{diameter} of $S$ as $\max_{p,q\in S} |p-q|$.
We furthermore define the \emph{width} and \emph{height} of $S$ as
\[\pwidth(S)\mydef \max_{p,q\in S} |x(p)-x(q)|\quad\text{and} \quad  \pheight(S)\mydef \max_{p,q\in S} |y(p)-y(q)|.\]
If $S$ is a polygon, we denote the area of $S$ as $\area(S)$.

A parallelogram $P$ is \emph{horizontal} if $P$ has a pair of horizontal edges.
The horizontal edges of $P$ are then called the \emph{base} edges.
The \emph{shear} of a horizontal parallelogram $P$ is $x(t)-x(b)$, where $t$ and $b$ are the top and bottom endpoints of a non-horizontal edge of $P$, respectively.

\section{The connection between online sorting and packing}\label{sec:problems}

In (translational) \StripPackingConv, we have a horizontal strip $S$ of height 1 which is bounded to the left by a vertical segment and unbounded to the right; see \cref{fig:problems}.
We have to pack a stream of convex polygonal pieces appearing online.
We must place each piece before we know the next.
Specifically, each piece must be placed in $S$ by a translation such that it is interior disjoint from all previously placed pieces.
The \emph{occupied} part of $S$ is the part from the left end of $S$ to the vertical line through the rightmost corner of a piece placed in $S$.
The \emph{width} or the \emph{cost} of a packing is the width of the part of $S$ that is occupied, i.e., the horizontal distance from the left end of $S$ to the rightmost corner of a piece placed in $S$.

\begin{figure}[htb]
\centering
\includegraphics[page=1]{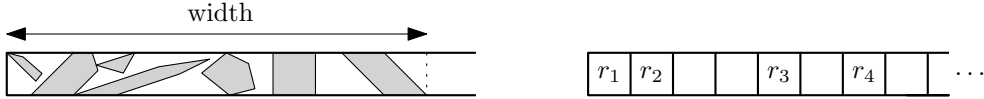}
\caption{Strip packing and online sorting.}
\label{fig:problems}
\end{figure}

In the problem \OnlineSorting{}$[\gamma,n]$, we are given an array $A$ with $\lfloor \gamma n\rfloor$ cells.
Each arriving real is contained in the unit interval $[0,1]$.
After all $n$ reals have been placed, define $\mathbf r\mydef (r_1,\dots, r_n)$ to be the numbers according to their left-to-right order in $A$.
Furthermore, define the sentinel values $r_0\mydef 0$ and $r_{n+1}\mydef 1$.
Then the cost is given by
\[\cost(\mathbf r)\mydef \sum_{i=0}^{n} |r_{i+1}-r_i|.\]
The offline optimum is achieved when the reals are in sorted order and is then exactly $1$.

\cref{fig:reductions} gives an overview of reductions we make.
Arguably, the crucial reduction is that from \OnlineSorting{} to \StripPackingConv{}, as described by the following lemma.

\begin{lemma}\label{lemma:stripToArrayComplete}
If there exists a $C$-competitive algorithm for \StripPackingConv{}, then there exists a $4C$-competitive algorithm for \OnlineSorting{}$[2C,n]$.
\end{lemma}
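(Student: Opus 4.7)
The plan is to reduce online sorting to online strip packing by encoding each real as a thin parallelogram whose shear equals that real, so that the quality of the packing the strip-packing algorithm produces controls the cost of the induced ordering on the array. Let $\A$ be the assumed $C$-competitive algorithm for \StripPackingConv{} and set $\varepsilon := 1/n$. Upon receiving real $s_i$, I would construct a horizontal parallelogram $P_i$ of base width $\varepsilon$, height $1$, and shear $s_i$, and pass it to $\A$. If $\A$ places $P_i$ so that its bottom base has left endpoint at $x$-coordinate $a_i$, the sorting algorithm puts $s_i$ into cell $\lfloor a_i/\varepsilon\rfloor$ of the array.

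The key geometric observation is that for two such parallelograms with shears $s,s'$ placed without overlap with bottom bases starting at $a \le a'$, the inequality $a' - a \ge \varepsilon + (s - s')^{+}$ must hold; this follows by writing the non-overlap condition at every height $y \in [0,1]$ and taking the worst $y$. Letting $\pi$ denote the left-to-right order of the pieces in $\A$'s packing, summing this inequality over consecutive pairs and adding the rightmost piece's contribution yields
\[
W \;\ge\; n\varepsilon + \sum_{i=1}^{n-1}(s_{\pi(i)}-s_{\pi(i+1)})^{+} \;=:\; n\varepsilon + B',
\]
where $W$ is the width of $\A$'s packing. On the other hand, the offline strategy of sorting the pieces by shear and placing them flush yields width at most $n\varepsilon + \max_i s_i \le 2$, so the $C$-competitiveness of $\A$ gives $W \le 2C$.

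Since $a_{\pi(i+1)} - a_{\pi(i)} \ge \varepsilon$, the cells $\lfloor a_i/\varepsilon\rfloor$ are strictly ordered and distinct under $\pi$, and from $a_i \le W - \varepsilon \le 2C - 1/n$ they lie in $\{0,1,\ldots,2Cn-1\}$, fitting inside the array of size $2Cn$. Let $\mathbf r$ denote the resulting left-to-right sequence in the array and set the sentinels $r_0 := 0$, $r_{n+1} := 1$. Writing $|r_{i+1}-r_i| = (r_{i+1}-r_i)^{+} + (r_i-r_{i+1})^{+}$, the telescoping identity
\[
\sum_{i=0}^{n}(r_{i+1}-r_i)^{+} - \sum_{i=0}^{n}(r_i-r_{i+1})^{+} \;=\; r_{n+1}-r_0 \;=\; 1
\]
gives $\cost(\mathbf r) = 2B + 1$ where $B = \sum_{i=0}^{n}(r_i-r_{i+1})^{+}$; since $r_0 = 0 \le r_1$ and $r_n \le 1 = r_{n+1}$, the two boundary terms vanish and $B = B'$. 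Combining everything, $\cost(\mathbf r) = 2B' + 1 \le 2(W - n\varepsilon) + 1 \le 4C - 1 \le 4C \cdot \OPT$.

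The main obstacle is establishing the geometric lemma $a' - a \ge \varepsilon + (s-s')^{+}$ rigorously, and making sure all of the bookkeeping (the choice $\varepsilon = 1/n$, the fit inside exactly $2Cn$ cells, and the vanishing of the sentinel contributions to $B$) lines up to yield the clean constant $4C$; once the geometric inequality is in hand, the rest is arithmetic.
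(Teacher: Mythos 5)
Your proposal is correct and follows the paper's reduction exactly: the same parallelograms (base $1/n$, height $1$, shear $s_i$), the same cell assignment $\lfloor n x_i\rfloor$, the same offline bound of $2$, and the same final bound $\cost(\mathbf r)\le 2W\le 4C$. The only cosmetic difference is in how you relate $\cost(\mathbf r)$ to the packing width $W$: the paper counts disjoint gap segments on the top and bottom boundaries of the strip (top gaps for increasing pairs, bottom gaps for decreasing ones), whereas you encode both cases in the single inequality $a'-a\ge\varepsilon+(s-s')^+$ and recover $\cost(\mathbf r)=1+2B'$ by the telescoping identity --- the same argument in a slightly more algebraic form.
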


\begin{proof}
Suppose that we have a $C$-competitive algorithm $\A_1$ for \StripPackingConv{}.
Let $s_1,\ldots,s_n$, $s_i\in[0,1]$, be a stream of reals that we wish to sort online in an array $A$ of size $\lfloor 2Cn\rfloor$.
For each real $s_i$, we construct a parallelogram $P_i$ with height $1$, base edges of length $1/n$, and shear $s_i$.
We then present $\A_1$ with $P_i$ and observe where the bottom left corner of $P_i$ is placed in the strip.
Let $x_i$ be the $x$-coordinate of this corner (suppose that the line $x=0$ forms the left boundary of the strip).
We then place $s_i$ in the cell of index $\lfloor nx_i\rfloor$ in the array $A$, and denote the resulting algorithm $\A_2$.
Since the base segments of the parallelograms have length $1/n$, this will not cause any collisions in $A$.

By sorting the pieces $P_1,\ldots,P_n$ in order of increasing shear and placing them in this order in the strip, we obtain a packing of width at most $2$, so $\mathcal{A}_1$ will place all pieces within a prefix of size $2C\times 1$ of the strip.
Hence, $\mathcal{A}_2$ will place each real in $A$.

\begin{figure}[htb]
\centering
\includegraphics[page=3]{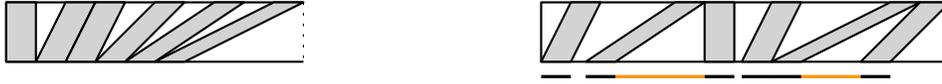}
\caption{
Two strip packings of the same set of parallelograms.
Left: The packing we use as an upper bound for the optimum (which is in fact the optimum).
Right: An arbitrary packing with the gaps and segments that we count shown as segments outside the strip.
}
\label{fig:packingsorting}
\end{figure}

Let $\mathbf r \mydef (r_1,r_2,\ldots,r_n)$ be the numbers in the resulting left-to-right order in $A$ produced by $\A_2$, and let $P'_i$ be the parallelogram corresponding to $r_i$.
We have the following contributions to the width of the resulting packing; see also \cref{fig:packingsorting}.
\begin{itemize}
\item Between the vertical left boundary of the strip and the top edge of $P'_1$, there is a gap of length at least $r_1=|r_1-r_0|$.
\item If $r_i\leq r_{i+1}$, then there is a gap between the top edges of $P'_i$ and $P'_{i+1}$ of length at least $|r_{i+1}-r_i|$.
\item If $r_i> r_{i+1}$, then there is a gap of length at least $|r_{i+1}-r_i|$ between the bottom edges of $P'_i$ and $P'_{i+1}$.
\item The bottom base edges of the pieces have total length $1\geq |r_{n+1}-r_n|$.
\end{itemize}
The sum of all these gaps is at least $\cost(\mathbf r)$, and as half of the sum appear as distances along the top or the bottom edge bounding the strip, we get that the width of the produced packing is at least $\cost(\mathbf r)/2$.
Now, since $\A_1$ is $C$-competitive, we get that $\cost(\mathbf r)/2\leq 2C$, and hence $\cost(\mathbf r)\leq 4C$.
\end{proof}

\section{Online sorting}\label{sec:onlinesorting}
In this section, we present upper and lower bounds for the online sorting problem.
As a warm up, we consider in \cref{sec:baby} the case where we have no extra space, i.e., we are given $n$ reals in $[0,1]$ in an online fashion to be inserted into an array of length $n$.
We prove \cref{thm:SortingUpperBoundBaby}, which gives an asymptotically tight analysis of the optimal competitive ratio in this case.
In \cref{sec:onlineLower}, we proceed to prove \cref{thm:TheLowerBound}, thereby obtaining a general lower bound on the competitive ratio for \OnlineSorting{}$[\gamma,n]$.
Finally, in \cref{sec:upperonline}, we give an upper bound on the competitive ratio of \OnlineSorting{}$[\gamma,n]$ for $\gamma>1$.

\subsection{Tight analysis of \sc{Online-Sorting}\texorpdfstring{$[1,n]$}{1n}}\label{sec:baby}

For online sorting $n$ numbers in an array $A$ of size $n$, we can prove asymptotically tight bounds on the optimal competitive ratio.\ifsa\footnote{We thank Shyam Narayanan for improving our initial $O(\sqrt{n}\log n)$ upper bound to the asymptotically tight $O(\sqrt{n})$.}\fi\ 
We restate~\cref{thm:SortingUpperBoundBaby} below.

\SortingUpperBoundBaby*

\begin{proof}
We start out by proving the lower bound.
Let $N\mydef \lfloor\sqrt{2n} \rfloor$. Consider a fixed but arbitrary online algorithm $\mathcal A$. 
We are taking the role of the adversary and present reals of the form $\nicefrac{k}{N}$ with $k=0,1,\dots, N$.
Clearly, if all reals are placed in an increasing fashion, the resulting cost is 1. This is the value $\mathcal A$ has to compete against.

Recall that we defined the sentinel values $r_0=0$ and $r_{n+1}=1$. Equivalently, we will think of the array as being appended with a cell to the left and to the right initially containing the reals $0$ and $1$. We want to repeatedly pick a real so that there are no adjacent duplicates in the array immediately after the algorithm places our real.
To this end, we define a real to be \emph{expensive} if it does not appear in a cell with an empty adjacent cell.
Our strategy is as follows.
While there exists one, we present an expensive real.
If we are able to present $n$ expensive reals, then any two consecutive entries are distinct and differ by at least $\nicefrac{1}{N}$. Consequently, the cost is at least~$\frac{n+1}{N}\geq\sqrt{n/2}$; recall that we have an additional sentinel entry at each end of the array. 
On the other hand, if we reach a point where no real of the form $\nicefrac{k}{N}$ is expensive, we present $0$'s until all entries are filled. Then for any $k$, a real of the form $\nicefrac{k}{N}$ will appear next to a $0$ in the array, so the total cost is at least $\sum_{k=0}^N \nicefrac{k}{N}
= (N+1)/2> \sqrt{n/2}$.
\medskip

Next we describe an algorithm $\mathcal{A}$ for \OnlineSorting{}$[1,n]$ with competitive ratio at most $18\sqrt{n}$.
Define $N_1\mydef \lfloor\sqrt{n}\rfloor$ and partition $[0,1]$ into $N_1$ intervals, $J_1,\dots,J_{N_1}$, each of length $\nicefrac{1}{N_1}$.
Further define $N_2\mydef 2N_1$ and partition the array $A$ into $N_2$ subarrays of contiguous cells, $A_1,\dots,A_{N_2}$, each of size either $\lfloor n/N_2 \rfloor$ or $\lceil n/N_2 \rceil$. Whenever $\mathcal{A}$ receives a real $x\in J_i$, it checks whether there exists a $j$ such that $A_j$ is not full and already contains a number from $J_i$. If so, $\mathcal{A}$ places $x$ in the leftmost empty cell of $A_j$.
Otherwise,
$\mathcal{A}$ checks if there exists a $j$ such that $A_j$ is empty. If so, $\mathcal{A}$ places $x$ in the leftmost cell of such an $A_j$. 
Finally, if neither of the two above cases occur, then
$\mathcal{A}$ recurses on the subarray formed by the union of the empty cells of $A$.

Let $\mathbf r=(r_1,\ldots,r_n)$ be the numbers in the resulting left-to-right order in $A$.
Define $\cost'(\mathbf r)\mydef\sum_{i=1}^{n-1} |r_{i+1}-r_i|$, and observe that $\cost'(\mathbf r)=\cost(\mathbf r)-|r_1-r_0|-|r_{n+1}-r_n|$, i.e., $\cost'(\mathbf r)$ equals the full cost when we ignore the sentinel values $r_0$ and $r_{n+1}$.
We prove by induction on $n$ that $\cost'(\mathbf r)$ never exceeds $T(n)=\alpha \sqrt{n}$, where $\alpha\mydef \frac{5\sqrt{2}}{\sqrt{2}-1}\approx 17.1$.
This clearly holds for $n\leq \alpha^2$, as then $T(n)\geq n$ which is a trivial upper bound on $\cost'(\mathbf r)$.
So let $n>\alpha^2$ and assume inductively that the result holds for arrays of length $n'<n$.
Let $B\subset\{1,\dots,n\}$ denote the set of indices $i$ such that $A[i]$ is full when the algorithm recurses for the first time. At this point in time, all the subarrays $A_1,\dots,A_{N_2}$ are either filled or partially filled since otherwise, the algorithm would not recurse. Moreover, each partially filled subarray contains reals coming exclusively from a single interval $J_i$ and no two partially filled subarrays contain reals from the same interval $J_i$. Thus, at this point at most $N_1-1$ of the $N_2= 2N_1$ subarrays $A_1,\dots,A_{N_2}$ are not completely filled, and each subarray must contain at least one real. In particular $|B| \geq (N_1+1)\lfloor n/(2N_1)\rfloor + (N_1-1) \geq n/2$. 
Let $B^c\mydef \{1,\dots,n\}\setminus B$ and define $D_1\mydef \{1\leq i <n \mid \{i,i+1\}\subset B^c\}$, and $D_2\mydef \{1,\dots,n-1\}\setminus D_1$.
Define $T_j \mydef \sum_{i\in D_j}|A[i+1]-A[i]|$ for $j\in\{1,2\}$.
We then have $\cost'(\mathbf r)\leq T_1+T_2$.
By the induction hypothesis, $T_1\leq T(|B^c|)\leq T(\lfloor n/2 \rfloor)$.
Furthermore, we can bound $T_2$ from above as
\[
T_2\leq \frac{n}{N_1}+\frac{3N_2}{2}.
\]
Here the first term comes from consecutive entries internal to a subarray that are both filled before the algorithm recurses for the first time (these have pairwise distance at most $1/N_1$).
The second term comes from consecutive entries $A[i],A[i+1]$ where $A[i]$ is in some subarray $A_j$ and $A[i+1]$ is in the next $A_{j+1}$, or $A[i]$ is full when the algorithm recurses for the first time and $A[i+1]$ becomes full in the recursion (in which case we simply bound $|A[i]-A[i+1]|\leq 1$). 
We therefore have
\[
\cost'(\mathbf r)\leq T_1+T_2\leq T(\lfloor n/2 \rfloor)+ \frac{n}{N_1}+\frac{3N_2}{2}\leq \alpha\sqrt{\frac{n}{2}}+5\sqrt{n}=\alpha \sqrt{n}.
\]
It is easy to check that including the sentinel values, the total cost incurred by the algorithm never exceeds
$18\sqrt{n}$.
This completes the proof.
\end{proof}

Having dealt with the case where we have no extra space, we turn to the setting where the array has length $\gamma n$ for some  $\gamma>1$. This is the setting which is important for our reductions to the online packing problems. In the following two sections, we prove lower and upper bounds, respectively, on how good online sorting algorithms can perform in this case.

\subsection{Lower bound for the general case of \sc{Online-Sorting}}\label{sec:onlineLower}
Let us restate our lower bound for the competitive ratio of any algorithm for  \OnlineSorting{}$[\gamma,n]$, $\gamma\geq 1$.

\TheLowerBound*

Let us remark that the adversarial stream of reals leading to the lower bound is chosen in a deterministic and adaptive way, i.e., depending on where the preceding reals have been placed in the array.
Note that a deterministic oblivious adversary cannot lead to a lower bound above $1$ on $\Delta$ since for any fixed stream of reals, there exists an algorithm that places them in sorted order.
It is however an interesting question whether there exists a \emph{randomized} oblivious adversary matching or exceeding our lower bound.

Before delving into the proof, let us also describe the high level idea on how to generate an adversarial stream that incurs a high cost for any given algorithm.
Assume for simplicity that $\gamma$ is a constant, e.g., $\gamma=2$ and that we want to prove a lower bound of $\Delta$ on the total cost for some $\Delta =(\log n)^{\Theta(1)}$.
We will start by presenting the algorithm with reals coming from a set $S$ of the form $S=\left\{i/n_0\mid i \in \{0,\ldots,n_0\}\right\}$ for some $n_0\leq n$. At any point, we consider the set of maximal intervals of empty cells of the array, call them $I_1,\dots, I_\ell$.  For each real $x\in S$, we define the home $H(x)$ as the union of all such intervals $I_j$ such that placing $x$ in $I_j$ incurs no extra cost, i.e.,  the first non-empty cell to the left or right of $I_j$ contains the real~$x$. (In fact, we will be a little more generous in the actual proof and allow for some small extra cost). If $|H(x)|< \frac{n}{\Delta n_0}$ for some $x\in S$, we simply present the algorithm with copies of $x$ until one is placed outside $H(x)$ and thus has distance at least $1/n_0$ to its neighbours. This will essentially contribute a total cost of $1/n_0$ to the objective function, i.e., an average cost of $\Delta/n$ per presented copy of~$x$. Note that this is the correct average cost for a lower bound of~$\Delta$. 
However, it may well be the case that no such $x\in S$ exists (the average size of $H(x)$ for $x\in S$ is $\approx \gamma n /n_0$ which is much larger). In this case, we \emph{coarsen} the set $S$ to a set $S'\subset S$ consisting of every $s$'th element of $S$ for some $s=\text{polylog} \, n$ and only present the algorithm with reals from $S'$ from this point on. 
Now for most $x\in S$, it holds that $H(x)=O(n/n_0)$ and that the distance from $x$ to any real in $S'$ is $\Omega(s/n_0)$. Intuitively, this means that filling up $H(x)$ with elements from $S'$ has a high cost of $s/n\gg \Delta/n$ per presented real. We prove that this implies that we can point to a `deserted space' consisting of $\Omega(n/\Delta)$ empty cells, in which the algorithm can only place a negligible number of  reals without incurring a large total cost of $\Delta$.
Now we continue the process starting with $S'$.
In each coarsening step, we specify
a `deserted space' of size $\Omega(n/\Delta)$ and we can importantly enforce that these spaces be disjoint.
As the array has $2n$ cells in total, this coarsening can happen at most $O(\Delta)$ times. To ensure that we can in fact perform this coarsening $\Omega(\Delta)$ times, we must ensure that $s^\Delta \ll n$, which in turn implies that $\Delta =O(\log n /\log \log n)$. 

\begin{proof}[Proof of \cref{thm:TheLowerBound}]
Let $\A$ denote any online algorithm for \OnlineSorting{}$[\gamma,n]$.
We may assume that $n$ is sufficiently large and that $\gamma\leq \log n /\log \log n$.
We will present the algorithm with an (adaptive) stream that incurs a cost of $\Omega \left({\log n}/{\gamma \log \log n}\right)$. 

Let $C\in [3,4]$ be minimal such that $s\mydef \log^C n$ is an integer and define $\delta\mydef \frac{\log n}{16C \gamma \log \log n}$.

For $i\in \mathbb{N}$, we define 
\[
S_i\mydef \left\{k\cdot \frac{s^i}{n}\,\bigg|\, k \in \{0,\ldots,\lfloor n/s^i\rfloor\}
\right\}
\]
such that $S_1\supset S_2 \supset \cdots$.
We also let $i^*\in \mathbb{N}$ be maximal with $s^{i^*}\leq n$.
In other words, we define $i^*\mydef \lfloor {\log n}/{C\log \log n}\rfloor$.

Our adversarial stream will consist of several phases, where in phase $i\geq 1$, we present $\mathcal{A}$ with reals from $S_i$. If at some point we have presented $\mathcal{A}$ with $n'$ reals and are at some phase $i$, we will say that we are at \emph{time} $t=n'+i$. For each step of the construction of our adversarial stream, the time will increase by one, either because we presented $\mathcal{A}$ with a real, or because we changed to phase $i+1$ from some phase $i$.

By the end of each phase, we will mark a certain set of currently empty cells of the array. The marked cells will represent parts of the array where $\mathcal{A}$ can only place a limited number of
reals
without incurring a high cost. Identifying the array with $[m]$, at any given point in time $t$,
we let $F_t\subset [m]$ denote the full cells, $M_t\subset [m]$ denote the marked cells, and $R_t\mydef [m] \setminus (F_t\cup M_t)$ denote the remaining cells. We remark that $F_t$ and $M_t$ need not be disjoint; even though a cell can only get marked when it is empty, $\mathcal{A}$ might insert a real in that cell at a later point in time. Initially, we let $M_0=\emptyset$, so that $R_0=[m]$ is the entire array.

Suppose that we are at time $t$ and in some phase~$i$. For an empty cell $p\in R_t$, we define $N_t(p)\subset [m]$ to be the set consisting of the first non-empty cell to the left and to the right of $p$. Thus $|N_t(p)|\leq 2$. Furthermore, for $x\in S_i$, we define the \emph{home} of $x$ at time $t$ in phase $i$ to be the set
\[
H_t^{i}(x)\mydef \left\{p\in R_t\,\bigg|\, \exists q\in N_t(p) \text{ such that } |A[q]-x|< \frac{s^i}{2 n} \right\}.
\]
Note that by definition, the home is contained in $R_t$ and thus consists of cells that are unmarked and empty at time $t$. By construction, a cell of $A$ can be contained in at most two homes. We say that the home of $x$ is \emph{small} at time $t$ if $|H_t^{i}(x)|<\frac{s^i}{\delta}$. 
In this case, we say that $x$ is \emph{expensive} at time $t$.

The adversarial stream is generated through a sequence of phases as described below, where we terminate as soon as the array contains exactly $n$ reals.

\medskip

\noindent\textbf{Phase $0$:}

\begin{itemize}[label={},noitemsep,nolistsep]
\item
Present $\A$ with
any real from $S_1$.
Proceed to phase $1$.
\end{itemize}

\noindent\textbf{Phase $i\geq 1$:}

\begin{itemize}[label={},noitemsep,nolistsep]

\item
While there exists an expensive real $x \in S_i$ at some time $t=t_0$: 

\begin{itemize}[noitemsep,nolistsep]
\item
Present $\A$ with copies of $x$ until one is placed in a cell of $R_{t_0}\setminus H_{t_0}(x)$ or $A$ contains $n$ reals.

\item
Terminate if $A$ contains $n$ reals.

\item
Leave the set of marked cells unchanged.
\end{itemize}

\item
If no expensive real exists at the current time $t$: 

\begin{itemize}[noitemsep,nolistsep]
\item
Define the set of \emph{well-sized} reals at phase $i$ as
$
W_i\mydef 
\big\{x\in S_i \,\big|\, \frac{s^i}{\delta} \leq  |H_t^{i}(x)| \leq 4\gamma s^i
\big\}.
$

\item
Define the set of \emph{deserted reals} at phase $i$ as
$
D_i\mydef \big\{x\in W_i\,\big|\, \forall y \in S_{i+1} \text{ it holds that } |y-x|\geq \frac{s^{i+1}}{12 n}\big\}.
$

\item
Define the \emph{deserted space} at phase $i$ as $\mathcal{D}_i\mydef \bigcup_{x\in D_i}H_t^{i}(x)$.

\item
Mark all the cells in $\mathcal{D}_i$ (so that $M_{t+1}\mydef M_{t}\cup \mathcal{D}_i$).
\end{itemize}

\item
Proceed to phase $i+1$.
\end{itemize}

\medskip

Note that ${s^{i^*+2}}/{\delta}\geq {sn}/{\delta}>\gamma n$, so in phase $i^*+2$, the real $0$ is always expensive. In particular, the process will stop by phase $i^*+2$, if not earlier. 
We remark for later use that the sets $\mathcal{D}_i$ are pairwise disjoint. Indeed, $\mathcal{D}_i$ is a union of homes at time $t$, and therefore consists of currently unmarked cells. However, the cells of  $\mathcal{D}_i$ are then immediately marked.

We prove three lemmas below which combine to give our desired result.
\begin{lemma}\label{lemma:marked}
Let $i\leq i^*-2$. If the algorithm does not terminate in phase $i$, then $|\mathcal{D}_i|\geq \frac{n}{8 \delta}$.
\end{lemma}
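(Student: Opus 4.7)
The plan is to show that at the moment phase $i$ stops producing expensive reals, a constant fraction of $S_i$ still lies in $W_i$, and even after discarding reals that are too close to $S_{i+1}$, the remaining set $D_i$ is large enough that the (almost disjoint) union of their homes exceeds $n/(8\delta)$ in size. The two key ingredients I will exploit are (i) each empty cell is contained in at most two homes, which both caps $\sum_x |H_t(x)|$ and allows me to pass from $|D_i|$ to $|\mathcal D_i|$, and (ii) the separation $s = \log^C n$ between consecutive scales, which keeps the number of ``bad'' (non-deserted) reals small.

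First I would bound the number of reals in $S_i \setminus W_i$. Since at the end of phase $i$ no real is expensive, every $x \in S_i$ already satisfies $|H_t(x)| \geq s^i/\delta$, so the only way $x \notin W_i$ is if $|H_t(x)| > 4\gamma s^i$. The ``each cell in at most two homes'' observation gives $\sum_{x\in S_i} |H_t(x)| \leq 2\gamma n$, and a Markov-type argument then shows that at most $n/(2s^i)$ reals can violate the upper bound. Combined with $|S_i| \geq n/s^i$, this yields $|W_i| \geq n/(2s^i)$ (up to an additive $1$).

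Next I would count how many $x \in W_i$ fail to be deserted. Such an $x$ lies within distance $s^{i+1}/(12n)$ of some $y \in S_{i+1}$. Since consecutive elements of $S_i$ are spaced $s^i/n$ apart, each fixed $y$ captures at most $s/6 + 1$ reals of $S_i$. Multiplying by $|S_{i+1}| \leq n/s^{i+1} + 1$, the total number of non-deserted reals in $S_i$ is at most $n/(6s^i)$ plus lower-order terms. Here the hypothesis $i \leq i^* - 2$ enters: it ensures $s^{i+1} \leq n/s$, so that $|S_{i+1}|$ is genuinely of order $n/s^{i+1}$ and the ``$+1$'' corrections are negligible against $n/s^i$. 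Subtracting, I obtain $|D_i| \geq n/(4s^i)$ for $n$ sufficiently large.

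Finally, $\mathcal D_i = \bigcup_{x \in D_i} H_t(x)$, and since each cell lies in at most two homes,
\[
|\mathcal D_i| \;\geq\; \frac{1}{2}\sum_{x \in D_i} |H_t(x)| \;\geq\; \frac{1}{2}\cdot \frac{n}{4s^i}\cdot \frac{s^i}{\delta} \;=\; \frac{n}{8\delta},
\]
using that each $x \in D_i \subseteq W_i$ has $|H_t(x)| \geq s^i/\delta$ and that the $s^i$ factors cancel. The main delicate point is the bookkeeping of the constants: the ``$4\gamma$'' threshold in the definition of $W_i$, the ``$1/12$'' slack in the definition of $D_i$, and the factor $2$ from overlapping homes must combine to give the clean bound $n/(8\delta)$. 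The restriction $i \leq i^* - 2$ is what suppresses the additive error terms coming from $|S_i|$ and $|S_{i+1}|$ not being exact powers; verifying that these errors are negligible is the only real obstacle in an otherwise direct counting argument.
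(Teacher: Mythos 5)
Your proposal is correct and follows essentially the same path as the paper's proof: bound $|W_i|$ via the ``each cell in at most two homes'' observation, bound the non-deserted reals via the $s^{i+1}/(12n)$ separation (using $i\leq i^*-2$ so the $+1$ rounding terms are negligible against $n/s^i$), and combine to get $|D_i|\gtrsim n/(4s^i)$, then apply the two-homes observation once more to pass to $|\mathcal D_i|$. The only cosmetic difference is that you subtract the bad reals directly from $W_i$ rather than invoking $|D_i'|\geq\frac34|S_i|$ and intersecting, but the constants land in the same place.
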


\begin{proof}
We consider the situation in the end of a phase $i$ where the algorithm has not terminated.
In this case, for each $x\in S_i\setminus W_i$, we have $\vert H_t^{i}(x)\vert >4\gamma s^i$. In particular, the number of pairs $(x,p)$ such that $x\in S_i\setminus W_i$ and $p\in H_t^{i}(x)$ can be lower bounded by $4\gamma s^i |S_i\setminus W_i|$. On the other hand a cell can lie in at most two homes, so the number of such $(x,p)$ pairs is upper bounded by 
\[
\sum_{p\in [\gamma n]}|\{x\in S_i: p \in H_t^{i}(x)\}|\leq \sum_{p\in [\gamma n]}2=2\gamma n.
\]
It follows that 
\[
|S_i\setminus W_i|\leq   \frac{ n}{2 s^i}\leq \frac{1}{2}\left(\left\lfloor \frac{n}{s^i} \right\rfloor+1\right)= \frac{|S_i|}{2},
\]
so that $|W_i|\geq \frac{|S_i|}{2}$.
Now $S_{i+1}$ can be obtained from $S_i$ by including every $s$'th elements from $S_i$ in increasing order starting with $0$, and thus, $|S_{i+1}|=\lceil |S_{i}|/s \rceil$.
Let $D_i'\mydef \left\{x\in S_i\mid \forall z \in S_{i+1} \text{ it holds that } |z-x|\geq \frac{s^{i+1}}{12 n}\right
\}$. Then 
\[
|S_i\setminus D_i'|\leq |S_{i+1}|\left(2\left\lfloor \frac{s}{12}\right\rfloor +1\right) \leq |S_{i}|\left(\frac{1}{6}+\frac{1}{s}\right)\left(1+\frac{s}{|S_i|}\right).
\]
By the definition of $i^*$ and using that that $i\leq i^*-2$, we have that $|S_i|\geq \frac{n}{s^i}\geq s^2$. Using the assumption that $n$ and hence $s$ is sufficiently large, it thus follows that $|S_i\setminus D_i'|\leq \frac{|S_i|}{4}$ and thus, $|D_i'|\geq \frac{3|S_i|}{4}$. Combining the lower bounds on $|W_i|$ and $|D_i|$, we obtain that 
\[
|D_i|=|W_i\cap D_i'|=|W_i|+|D_i|-|W_i\cup D_i'|\geq \frac{|S_i|}{2}+\frac{3|S_i|}{4}-|S_i|=\frac{|S_i|}{4}
\]
Again using that each cell is contained in at most two homes, we have that 
\[|\mathcal{D}_i|\geq \frac{1}{2} \cdot\frac{|S_i|}{4}\cdot \frac{s^i}{\delta}\geq \frac{n}{8 \delta}.\qedhere\]
\end{proof}
\begin{lemma}\label{lemma:marked2}
Let $\alpha>0$ and assume that the algorithm does not terminate in phase $i$. If at least $56{\alpha n \gamma}/{s}$ reals from $S_{i+1}$ are placed in $\mathcal{D}_i$, then the total cost is at least $\alpha$.
\end{lemma}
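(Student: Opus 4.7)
The plan is to decompose $\mathcal{D}_i$ into its constituent maximal empty intervals at the end of phase $i$ and then lower bound the cost contributed inside each such interval that receives at least one $S_{i+1}$-real. First I would observe that since all cells in a single maximal empty interval share the same neighbor pair $N_t(\cdot)$, each home $H_t(x)$, and hence $\mathcal{D}_i$, is a disjoint union of maximal empty intervals. Write $\mathcal{D}_i = I_1 \cup \dots \cup I_M$ with each $I_j$ contained in $H_t(x_j)$ for some $x_j \in D_i$. Because $x_j$ is well-sized, $|I_j|\le|H_t(x_j)|\le 4\gamma s^i$, and by the definition of home at least one of the two cells bordering $I_j$ carries a value within $s^i/(2n)$ of $x_j$; I call that cell the $x_j$-endpoint of $I_j$.

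Next I would count the used intervals. Since all cells of $\mathcal{D}_i$ were empty at the end of phase $i$, every real placed in $\mathcal{D}_i$ afterwards belongs to $S_{i+1}\supseteq S_{i+2}\supseteq \dots$ Packing $K \mydef 56\alpha n\gamma/s$ such reals into intervals of size at most $4\gamma s^i$ forces at least $K/(4\gamma s^i)=14\alpha n/s^{i+1}$ of the $I_j$ to receive at least one real. For each such used interval, let $e$ be the $x_j$-endpoint and let $v\in S_{i+1}$ be the content of the cell of $I_j$ closest (in position) to $e$ in the final array. Then $v$ and the value $r_e$ at $e$ are consecutive in the left-to-right ordering of occupied cells, so they contribute $|v-r_e|$ to the total cost. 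Using $|v-x_j|\ge s^{i+1}/(12n)$ (because $x_j$ is deserted) and $|r_e-x_j|<s^i/(2n)$, the triangle inequality yields
$$|v-r_e|\ge \frac{s^{i+1}}{12n}-\frac{s^i}{2n}=\frac{s^i(s-6)}{12n}\ge \frac{s^{i+1}}{14n},$$
where the last step uses $s\ge 42$, which holds for $s=\log^C n$ with $C\ge 3$ once $n$ is large enough.

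Finally I would combine the two estimates. The consecutive pairs produced at distinct used intervals are pairwise distinct (a boundary cell shared by two intervals contributes only as the right neighbor of one and as the left neighbor of the other), so these contributions sum without overlap, giving
$$\text{cost}\ge \frac{14\alpha n}{s^{i+1}}\cdot\frac{s^{i+1}}{14n}=\alpha.$$
The main obstacle is the bookkeeping underneath this simple-looking inequality: verifying that $v$ is truly adjacent to $e$ in the final array (handled by taking $v$ to be the closest real to $e$ within $I_j$), keeping track of what counts as ``one contribution'' across intervals, and matching the hypothesis constant $56$ with the cost-per-transition constant $14$, which is exactly what forces the threshold $s\ge 42$.
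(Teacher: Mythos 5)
Your proof is correct and takes essentially the same route as the paper's: decompose $\mathcal{D}_i$ into maximal empty intervals of size at most $4\gamma s^i$, conclude that at least $14\alpha n/s^{i+1}$ of them receive a real from $S_{i+1}$, and use the triangle inequality with the home and desertedness bounds to extract a jump of at least $s^{i+1}/(14n)$ per used interval. Your extra bookkeeping — identifying the actual consecutive pair $(r_e,v)$ in the final array and checking that these contributions are edge-disjoint across intervals — is a more explicit rendering of what the paper dismisses with ``since the $J_\ell$ are disjoint, it easily follows.''
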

\begin{proof}
Write $\mathcal{D}_i=\bigcup_{\ell}J_\ell$ as a disjoint union of maximal intervals of empty cells in $\D_i$.
For each $\ell$, we have that $|J_\ell|\leq 4\gamma s^i$.
If at least $56 {\alpha n \gamma}/{s}$ reals are placed in $\mathcal{D}_i$, it follows that at least ${14\alpha n}/{s^{i+1}}$ of the intervals receive at least one real from $S_{i+1}$.
For such an interval $J_\ell$, we have by the definition of a home, that one of the (up to two) non-empty cells immediately to the left and right of $J_\ell$ contains a real $x$ of distance at most $\frac{s^i}{2n}$ to an element of~$D_i$.
However, any $x'\in S_{i+1}$ placed in $J_\ell$ has distance at least $\frac{s^{i+1}}{12n}$ to any element of $D_i$.
It follows that $|x-x'|\geq \frac{s^i}{2n}(\frac{s}{6}-1)\geq \frac{s^i}{2n} \frac{s}{7}= \frac{s^{i+1}}{14n}$, assuming that $n$ and hence $s$ are sufficiently large.
Since the $J_\ell$ are disjoint, it easily follows that the total cost is at least $\frac{s^{i+1}}{14n}\cdot \frac{14 \alpha n}{s^{i+1}}=\alpha$.
\end{proof}

\begin{lemma}\label{lemma:unmarked}
If $\mathcal{A}$ assigns $n_0$ reals to unmarked cells, during phases $1,\dots, i^*$,
then the total cost is at least $\frac{\delta n_0}{4n}-1$.
\end{lemma}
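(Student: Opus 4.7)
My plan is to partition the unmarked placements by the expensive round in which they occur, and to show that each completed round in phase $i$ (i) consumes at most $O(s^i/\delta)$ unmarked cells while (ii) forcing at least $s^i/(2n)$ of cost into the final array, yielding an amortized cost of at least $\delta/(4n)$ per unmarked placement.

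The technical heart is a shrinkage invariant: each time a copy of $x\in S_i$ is placed in a cell $p\in H_t(x)$, the home size $|H_t(x)|$ decreases by exactly one. I will argue this by considering the maximal empty interval $[a,b]$ containing $p$: since $p\in H_t(x)$, the definition of the home forces at least one of the boundary cells $a-1,b+1$ to have value within $s^i/(2n)$ of $x$, which means every cell of $[a,b]\cap R_t$ lies in $H_t(x)$. After the placement the interval splits into $[a,p-1]$ and $[p+1,b]$ with $p$ as a new boundary cell of value exactly $x$, so both sub-intervals remain entirely in the home and only $p$ itself leaves it. Combined with $|H_{t_0}(x)|<s^i/\delta$ (the expensive condition), this bounds the number of in-home placements in a round by $s^i/\delta$; the next unmarked placement is then forced into $R_t\setminus H_t(x)$, ending the round with one out-of-home placement. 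On the cost side, the forced out-of-home placement at $p$ has a neighbor $q$ at time of placement with $|A[q]-x|\ge s^i/(2n)$, and the triangle inequality guarantees that the stretch of the final array between $q$ and $p$ contributes at least $s^i/(2n)$ no matter what is later inserted there. A careful charging that attributes each round's cost to a specific side of its forced placement and orders overlapping rounds by phase then yields a total final cost of at least $\sum_r s^{i(r)}/(2n)$ over completed rounds.

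Putting the two bounds together gives a per-placement cost of at least $\delta/(4n)$ for the unmarked placements in completed rounds. At most one round is incomplete (the one interrupted when the process terminates), contributing at most $s^{i^*}/\delta\le n/\delta$ unmarked placements and thereby a correction of at most $\delta/(4n)\cdot n/\delta=1/4\le 1$, which yields the claimed bound of $\delta n_0/(4n)-1$.

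The main obstacle will be the cost-charging step: making sure that when multiple rounds leave out-of-home gaps in overlapping regions of the array, the per-round contributions sum cleanly without overcounting the actual final cost. A secondary subtlety is that copies placed by the algorithm in marked cells could, in principle, expand $H_t(x)$ and let a round exceed $s^i/\delta$ in-home placements; handling this requires noting that each such expansion is paid for by a marked placement (which does not contribute to $n_0$) and that the total marked cells available are bounded, so the extra unmarked placements can be absorbed into the accounting.
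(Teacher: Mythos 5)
Your high-level plan (partition unmarked placements into rounds, use the home-shrinkage invariant to cap the number of unmarked placements per round by roughly $s^i/\delta$, charge each round's forced out-of-home placement a cost of $s^i/(2n)$, then amortize) is exactly the structure of the paper's proof, and your argument that an in-home placement of $x$ shrinks $|H_t(x)|$ by exactly one (via the split of the maximal empty interval around $p$, with $p$ itself now a boundary cell of value $x$) is correct and is what the paper implicitly relies on. However, the step you label as ``a careful charging that attributes each round's cost to a specific side of its forced placement and orders overlapping rounds by phase'' is precisely the nontrivial technical content of the paper's Claim, and your proposal does not actually carry it out. The paper does this with an explicit inductive construction of a family of pairwise-disjoint \emph{jumps}: an interval $[p_1,p_2]$ is an $i$-jump if $|A[p_1]-A[p_2]|\ge s^i/(2n)$, and when a new out-of-home placement $p$ in phase $i$ lands \emph{inside} an existing $j$-jump $[q_1,q_2]$ (with $j\le i$), both $[p_1,p]$ and $[p,p_2]$ (where $N_t(p)=\{p_1,p_2\}$) are $i$-jumps, so one can replace $[q_1,q_2]$ by $[p_1,p]$ in the class of $j$-jumps and insert $[p,p_2]$ as a new $i$-jump. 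This monotone-replacement argument is what makes the per-round costs sum without overcounting, and it is not evident from your sketch that you would arrive at it; saying ``orders overlapping rounds by phase'' papers over exactly the case that needs handling.

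On the secondary point you raise: you are right to worry whether filling a \emph{marked} cell during a round could enlarge $H_t(x)$, but the resolution is not the ``absorb extra unmarked placements against the marked-cell budget'' accounting you propose (which does not obviously close, since one marked placement could in principle reveal $\Theta(n)$ new home cells). The actual reason the issue never arises is a structural invariant: at every moment each maximal empty interval of $A$ is either entirely marked or entirely unmarked. This holds because $\mathcal{D}_i$ is by construction a union of whole sets $[a,b]\cap R_t$ over maximal empty intervals $[a,b]$ (cells in the same interval share $N_t$ and hence share home membership), and because placements only split intervals, preserving homogeneity. Consequently, when $\mathcal{A}$ fills a marked cell $p$, the entire maximal empty interval containing $p$ is marked, so no cell of $R_t$ acquires $p$ as a new neighbor, and $H_t(x)$ does not grow. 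This invariant is also left implicit in the paper, so it is good that you flagged the issue, but your proposed fix should be replaced by this observation.
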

\begin{proof}
Write $a_i$ for the number of reals that the algorithm assigns to unmarked cells during phase $i$, so that $\sum_{i\leq i^*}a_i=n_0$.
Let $b_i\mydef \frac{a_i}{\lceil s^i/\delta\rceil}$ and $c_i\mydef \lfloor b_i \rfloor$.
We make the following claim.
\begin{claim}
The total cost is at least $\sum_{i\leq i^*} c_i\cdot \frac{ s^i}{2n}$.
\end{claim}
\begin{proof}[Proof of Claim]
Note that during the while-loop of our algorithm, the assumption that the home of $x$ is small implies that it must happen at least $c_i$ times during phase $i$ that a real from $S_i$ is placed outside its home.
When an real $x$ gets placed in a cell $p\in R_t\setminus H_t^{i}(x)$ outside its home, the reals stored in the cells in $N_t(p)$ have distance at least $\frac{s^i}{2n}$ to $x$.
We say that an interval of cells $[p_1,p_2]$ form an \emph{$i$-jump} if $|A[p_1]-A[p_2]|\geq \frac{s^i}{2n}$. We say that an $i$-jump $[p_1,p_2]$ and a $j$-jump $[p_1',p_2']$ are disjoint if $(p_1,p_2)\cap(p_1',p_2')=\emptyset$. We will show that we can find a collection $\mathcal{J}$ of  such jumps such that (1) the jumps in $\mathcal{J}$ are pairwise disjoint and (2) we can make a partition $\mathcal{J}=\bigcup_{i\geq 1}\mathcal{J}^{(i)}$ such that $\mathcal{J}^{(i)}$ contains at least $c_i$ $i$-jumps. The claim then follows immediately by the triangle inequality. 

To show the existence of such a collection $\mathcal{J}$, suppose we at step $t$ in some phase $i$ are given a collection of $(\leq i)$-jumps $\mathcal{J}_t$  and a partition $\mathcal{J}_t=\bigcup_{j\leq i}\mathcal{J}_t^{(j)}$ such that $\mathcal{J}_t^{(j)}$ consists of $j$-jumps. 
Suppose further that we  place a real $x$ in a cell $p\in R_t\setminus H_t^{i}(x)$ at step $t$. Since $x$ was placed outside of its home, any of the (up to two) intervals with one endpoint at $p$ and the other at a cell of $N_t(p)$ will form an $i$-jump.
Now if $p$ is not contained in a jump of $\mathcal{J}_t$, we can easily extend to a collection $\mathcal{J}_{t+1}=\bigcup_{j\leq i}\mathcal{J}_{t+1}^{(j)}$ which still satisfies (1) and where $\mathcal{J}_{t+1}^{(i)}$ contains one further $i$-jump; we simply add the $i$-jump between $p$ and either of its neighbours in $N_t(p)$. 
Suppose on the other hand that $p$ is contained in a $j$-jump $[q_1,q_2]$ of $\mathcal{J}_t^{(j)}$ for some $j\leq i$.
In this case, we in particular have that $|N_t(p)|=2$ and we write $N_t(p)=\{p_1,p_2\}$ where $p_1< p_2$. Then $[p_1,p]$ and $[p,p_2]$ are both $i$-jumps and in particular $j$-jumps for $j\leq i$. We then replace $[q_1,q_2]$ with $[p_1,p]$ in $\mathcal{J}_{t+1}^{j}$ and include $[p,p_2]$ in $\mathcal{J}_{t+1}^{i}$ to obtain a collection $\mathcal{J}_{t+1}=\bigcup_{j\leq i}\mathcal{J}_{t+1}^{(j)}$ having the same number of $j$-jumps for $j<i$ but one extra $i$ jump compared to $\mathcal{J}_t$. 
The existence of the collection $\mathcal{J}$ follows immediately from these observations. 
\end{proof}
Now assuming that  $n$ is sufficiently large, we have that $b_i\geq\frac{\delta a_i}{2s^i}$. Moreover, using the definition of $i^*$, it is easy to check that $\sum_{i\leq i^*}s^i\leq 2n$. Combining this with the claim, we can lower bound the total cost by
\[
\sum_{i\leq i^*}c_i\cdot \frac{s^i}{2n}\geq \sum_{i\leq i^*}(b_i-1) \cdot\frac{s^i}{2n}\geq \sum_{i\leq i^*} \frac{\delta a_i}{4n}-\sum_{i\leq i^*} \frac{s^i}{2n}\geq \frac{\delta n_0}{4n}-1,
\]
as desired.
\end{proof}

We now combine the two lemmas to prove a lower bound on the cost incurred by the algorithm on our adversarial stream.
Recall that $C= 3$ and $\delta= \frac{\log n}{16C \gamma \log \log n}$. 
Suppose for contradiction that the algorithm has not terminated by the end of phase $i^*-2$.
Using Lemma~\ref{lemma:marked} and the fact that the sets $(\mathcal{D}_j)_{j\leq i^*-2}$ are disjoint, we obtain that 
\[
\left|\bigcup_{j\leq i^*-2}\mathcal{D}_j\right|\geq (i^*-2)\frac{n}{8\delta} > \frac{\log n}{2C \log \log n} \frac{n}{8\delta}\geq \gamma n,
\]
assuming that $n$ is sufficiently large. This is a contradiction as there are only $\gamma n$ cells in $A$. Thus, the algorithm must terminate during some phase $i_0\leq i^*-2$.
Now the algorithm must either place at least $n/2$ reals in marked cells or $n/2$ reals in unmarked cells. In the former case, there must be a phase $i$ in which $\frac{n}{2i^*}$ reals a placed in $\D_i$, and hence it follows from Lemma~\ref{lemma:marked2} that the total cost is $\Omega(\frac{s}{i^*\gamma})=\Omega(\log n)$.
In the latter case, it follows from \cref{lemma:unmarked} that the total cost is at least $\frac{\delta}{8}-1=\Omega\left( \frac{\log n}{\gamma \log \log n} \right)$.
\end{proof}

\subsection{Upper bound for the general case of \sc{Online-Sorting}}\label{sec:upperonline}
In this section, we design an algorithm that shows the following theorem. 
\SortingUpperBound*

In order to prove \Cref{thm:SortingUpperBound}, we present an algorithm that uses a divide and conquer approach. We divide the set of reals $[0, 1]$ into smaller intervals, and place reals from the same interval in some given sub-arrays that we call boxes. In order to place reals from a given interval into the proper box, we invoke our algorithm recursively. 
In \Cref{lem:foranyk}, we parameterize our algorithm by its recursion depth, then we choose the optimal recursion depth (given the parameters $n$ and $\varepsilon$) in order to obtain \Cref{thm:SortingUpperBound}.

\begin{lemma} \label{lem:foranyk}
Given $\delta \in (0, 1/2)$, $[\alpha, \alpha + \beta) \subseteq [0, 1]$ and $k\geq 1$ with $k\leq 1/(2\delta)+1$, there exists an algorithm that solves $\OnlineSorting \ld[1+2k\delta, n\rd]$ over any stream of reals $r_1, \dots, r_n\in[\alpha, \alpha + \beta)$ achieving a cost of $\beta \cdot n^{1/(k+1)}  \delta^{-O(k+1)}$. 
\end{lemma}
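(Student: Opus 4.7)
The proof is by induction on $k$, using \cref{thm:SortingUpperBoundBaby} as the base case. For $k=1$ I would rescale the substream from $[\alpha,\alpha+\beta)$ to $[0,1]$ and run the baby algorithm; it consumes only $n\leq (1+2\delta)n$ cells and attains cost at most $18\beta\sqrt{n}$, which fits within $\beta n^{1/2}\delta^{-O(1)}$.

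For the inductive step $k\geq 2$, the plan is: set $m=\lceil n^{1/(k+1)}\rceil$ and partition $[\alpha,\alpha+\beta)$ into $m$ equal subintervals $I_1,\dots,I_m$ of length $\beta/m$; partition the array $A$ into $M=\lceil (1+2k\delta)m/(2\delta)\rceil$ consecutive slots $\sigma_1,\dots,\sigma_M$, each of size $s\approx 2\delta n/m$. Each slot will run a private recursive copy of $\mathcal{A}_{k-1}$ for a single subinterval, with capacity $p=\lfloor s/(1+2(k-1)\delta)\rfloor$ reals. For every arriving $r\in I_j$, I hand it to the $\mathcal{A}_{k-1}$ instance in $I_j$'s currently active slot when that slot has free capacity, and otherwise I activate the leftmost still-unassigned slot as the new active slot for $I_j$ and spawn a fresh copy of $\mathcal{A}_{k-1}$ there.

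Two things then need to be checked. For feasibility, the number of slots consumed equals $\sum_j\lceil n_j/p\rceil\leq n/p+m$, and the parameters $s,M$ are calibrated precisely so that $n/p+m\leq M$, even when an adversary concentrates all $n$ reals in a single subinterval. For the cost analysis, the inductive hypothesis gives internal cost at most $(\beta/m)\,n_i^{1/k}\,\delta^{-O(k)}$ inside slot $\sigma_i$, where $n_i$ is the number of reals routed there. Using $n_i\leq p$, the concavity of $t\mapsto t^{1/k}$, and $\sum_i n_i=n$, the worst case of $\sum_i n_i^{1/k}$ is at most $(n/p+m)\,p^{1/k}$; substituting $m=n^{1/(k+1)}$ and $p=\Theta(\delta n/m)$ reduces this to $O(n^{2/(k+1)}\delta^{-(k-1)/k})$, so the total internal contribution is $\beta\,n^{1/(k+1)}\,\delta^{-O(k)}$. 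Transitions between consecutive slots contribute at most $M\beta=O(\beta n^{1/(k+1)}/\delta)$, so the grand total meets the target $\beta\,n^{1/(k+1)}\,\delta^{-O(k+1)}$.

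The main obstacle is pinning down $s$ and $M$ so that three conflicting constraints balance at the same time: the total slot space must fit into $(1+2k\delta)n$ cells; each slot's ratio $s/p$ must be at least $1+2(k-1)\delta$ so that the recursive invocation of $\mathcal{A}_{k-1}$ is valid; and $M$ must be large enough that an adversary who dumps every real into a single subinterval still cannot run the algorithm out of slots. The choice $s\approx 2\delta n/m$ is the unique value (up to constants) that makes all three compatible: the extra $2\delta n$ cells available at level $k$ are spent on $\Theta(m)$ \emph{surplus} slots beyond the $\sim n/p$ slots needed just to hold the reals, and those surplus slots are exactly what absorbs adversarial concentration. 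Everything else — the base case, the cost aggregation, and the bookkeeping of the $\delta^{-1}$ factors accumulating one per recursion level — follows mechanically once $s$ is fixed.
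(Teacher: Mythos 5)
Your proposal is correct and takes essentially the same approach as the paper: the paper sets $b=\lfloor n^{1/(k+1)}\rfloor$ buckets with box capacity $n'=\delta n^{k/(k+1)}$ and box width $(1+2(k-1)\delta)n'$, maintains a pointer per bucket to its active box, and advances the pointer to the next fresh box on overflow — which is your slot scheme with the same asymptotic parameters, the same feasibility accounting (your slot-count bound $n/p+m\leq M$ is the paper's empty-cell bound $\leq 2k\delta n$ viewed from the other side), and the same cost aggregation (the paper uses the cruder $\ell\cdot\max$ bound in place of your $\sum n_i^{1/k}$, but both give $\beta n^{1/(k+1)}\delta^{-O(k+1)}$).
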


\begin{proof}
We prove the statement by induction on $k$. The base case of $k=1$ follows directly from \cref{thm:SortingUpperBoundBaby}: in fact, it is sufficient to apply the mapping $x \mapsto \alpha + \beta x$ to the reals in our stream and notice that the resulting cost also shrinks by a factor $\beta$. We call this version of the algorithm $\onlinesorter{1}$.

For the induction step, we define the algorithm $\onlinesorter{k}$ using $\onlinesorter{k-1}$ as a subroutine.
Let $b \mydef \lfloor n^{1/(k+1)} \rfloor$, $n'\mydef \lfloor \delta n^{k/(k+1)} \rfloor$ and $w \mydef \lfloor (1+2(k-1)\delta) \cdot n'\rfloor$.
For $i \geq 1$, we define the box 
\[
B_i \mydef \ld[(i-1)\cdot w, i \cdot w\rd).
\]
We define the pointer vector $p:[b] \longrightarrow \mathbb{N}$ and initialize $p(i)=i$ for each $i \in [b]$. We define the set of active boxes $\A \mydef \{B_{p(i)} \,|\, i \in [b]\}$ and let  $\R \mydef \max_{i\in[b]} p(i) \cdot w$ denote the rightmost cell index in any active box.
We say $B_{p(i)}$ is \emph{full} if it contains $n'$ reals.
Given a real $x \in [\alpha, \alpha + \beta)$, let $i \in [b]$ be such that $(i-1) \cdot \beta \leq (x-\alpha) b < i \cdot \beta$. If $B_{p(i)}$ is not full, we place $x$ into $B_{p(i)}$ using $\onlinesorter{k-1}$. Otherwise, we assign a new active box and set $p(i)\mydef \max_{j\in[b]} p(j) + 1$, update $\A$ and $\R$ accordingly, and then place $x$ into $B_{p(i)}$ using $\onlinesorter{k-1}$.

By the inductive hypothesis, $\onlinesorter{k-1}$ can place $n'$ reals into an array of size $w$.
Hence, to prove correctness we only have to ensure that the set of used boxes are contained in the array, i.e., that $\R \leq (1 + 2k\delta) n$. We show that the total number of empty cells in $[1, \R]$ is at most $2k\delta n$ and therefore there must be $n$ full cells in $[1, (1 + 2k\delta) n]$.

Let $\{B_1, \dots, B_\ell\}$ be the set of boxes we used (either full or active), so that $[1, \R] = \bigcup_{i=1}^\ell B_i$. We partition~$[\ell]$ into $F$ and $E$ so that $F\mydef\ld\{ i \in [\ell] \,|\, B_i \text{ is full}  \rd\}$ and $E \mydef [\ell] \setminus F$.  Denote with $\E (B_i)$ the number of empty cells in box $B_i$. 
If $i\in F$, we have 
$\E(B_i) \leq  w - n' = 2 (k-1) \delta \cdot  n'$. Moreover,
$
|F| \leq n/n'
$
and therefore 
\[
\sum_{i\in F} \E(B_i) \leq \frac{n}{n'} \cdot 2 (k-1) \delta\cdot  n' = 2(k-1)\delta n.
\]
If $i\in E$, we use the trivial bound $\E(B_i) \leq  w$, moreover $j \in E$ implies $B_j \in \A$, hence $|E| \leq b$. This yields
\[
\sum_{i\in E} \E(B_i) \leq wb 
\leq (1+2(k-1)\delta)n'\cdot n^{1/(k+1)}
= (1+ 2(k-1) \delta)\cdot \delta n.\]
Putting everything together, we obtain
\[
\sum_{i\in [\ell]} \E(B_i) \leq 2(k-1) \cdot \delta n + (1 + 2(k-1) \delta)\cdot \delta n \leq 2k \delta n,
\]
where the last inequality holds because $2(k-1)\delta \leq 1$ by assumption on $k$. 

We are left to bound the total cost for which we likewise use induction. By $\cost^{(k)} (r_1, \dots, r_n)$, we denote the cost incurred by algorithm $\onlinesorter{k}$ when facing the stream of reals $r_1, \dots, r_n$. 
We prove that there exists $C > 0$ such that $\cost^{(k)} (r_1, \dots, r_n) \leq \beta \cdot n^{1/(k+1)}  \delta^{-C \cdot (k+1)}$ for any stream of reals  $r_1, \dots, r_n$ with $ r_i\in [\alpha, \alpha + \beta)$.
For $k=1$, we already explained above how \cref{thm:SortingUpperBoundBaby} implies  that $\onlinesorter{1}$ places a stream $r_1, \dots, r_n \in [\alpha, \alpha + \beta)$ with a cost of $\cost^{(1)} (r_1, \dots, r_n) = 18 \beta \sqrt{n}$, and we can choose $C$ accordingly.

Now suppose $k>1$,
and let $\{B_1, \dots, B_\ell\}$ be the set of full or active boxes. We have $\ell \leq \R/w \leq (1+2k\delta) n / w \leq 3 n^{1/(k+1)} \delta^{-1}$. We can think of our algorithm as partitioning the stream $r_1, \dots, r_n$ into 
substreams according to the box in which each real is placed. For $i \in [\ell]$, denote the 
substream of length~$L_i$ placed in~$B_i$ with 
$y^i_1, \dots, y^i_{L_i}$ so that we have $\{r_1, \dots, r_n\} = \bigcup_{i=1}^\ell \{y^i_1, \dots, y^i_{L_i}\}$.
Moreover, $L_i \leq n'$ for each $i \in [\ell]$.
Define $\alpha' \mydef (i-1) \cdot \beta n^{-1/(k+1)}$ and $\beta' \mydef \beta n^{-1/(k+1)}$, then for each $j \in [L_i]$ it holds $\alpha' \leq y^i_j < \alpha' + \beta'$.
By the induction hypothesis, the cost induced by the recursive call of $\onlinesorter{k-1}$ on box $B_i$ is bounded by \vspace{-12pt}

\begin{align*}
\cost^{(k-1)} \ld(y^i_1, \dots, y^i_{L_i}\rd) 
&\leq \beta^\prime 
\cdot \ld(L_i\rd)^{{1}/{k}}
\cdot \delta^{-Ck}\\
&\leq \beta n^{-1/(k+1)} 
\cdot \ld(\delta n^{k/(k+1)}\rd)^{{1}/{k}} \cdot \delta^{-C k} \\
&\leq \beta \cdot \delta^{-C k}.
\end{align*}
Now we are ready to estimate the total cost of $\onlinesorter{k}$ as the sum of costs generated inside a box~$B_i$ or between any two consecutive boxes: 
\vspace{-12pt}

\begin{align*}
\cost^{(k)}\ld(r_1 \dots r_n\rd) 
&\leq \ell\beta + \sum_{i=1}^\ell \cost^{(k-1)} \ld(y^i_1, \dots, y^i_{L_i}\rd) \\
&\leq \ell \beta \cdot \ld(1 + \delta^{-C k} \rd) \\
&\leq 3 n^{1/(k+1)} \delta^{-1} \beta \cdot \ld(1 + \delta^{-C k} \rd) \\
&\leq \beta \cdot n^{1/(k+1)} \delta^{-C \cdot (k+1)}
\end{align*}
where the last inequality holds if we choose $C$ large enough.
\end{proof}

Finally, we can prove \Cref{thm:SortingUpperBound}.
\begin{proof}[Proof of \Cref{thm:SortingUpperBound}]
 We apply \Cref{lem:foranyk} choosing $\alpha \mydef 0$, $\beta \mydef 1$, $k \mydef \ld\lfloor\sqrt{\log n} / \sqrt{\log \log n + \log (1/\varepsilon)} \rd\rfloor \geq 1$ and $\delta \mydef \varepsilon / (2k)$. It uses $(1 + 2k\delta) n \leq (1+\varepsilon) n$ memory cells and yields a cost of 
\[
n^{1/(k+1)} \cdot \delta^{-O(k+1)} = 2^{O\ld({\log n}/{k} + k \log ({2k}/{\varepsilon
})\rd)} = 2^{O\ld(\sqrt{\log n} \cdot \sqrt{\log \log n + \log (1/\varepsilon)}\rd)}.\qedhere
\]
\end{proof}

\section{Online packing}\label{sec:onlinepacking}

In this section, we consider various online packing problems.
In \cref{sec:PackingLower}, we show that \StripPackingConv{} does not allow for a competitive online algorithm.
In fact, the argument generalizes to the other important online packing problems \SquarePacking{}, \PerimeterPacking{} and \BinPacking{}.
This holds even when all pieces have diameter less than an arbitrarily small constant $\delta>0$.

In \cref{sec:PackingAlgo}, we present an online algorithm for convex strip packing. A naive greedy algorithm for this problem places each new piece as deep into the strip as possible, and this algorithm is $n$-competitive, where $n$ is the number of pieces.
Our algorithm is $O(n^{0.59})$-competitive.

\subsection{Lower bounds --- no competitive algorithms}\label{sec:PackingLower}

\cref{thm:TheLowerBound} and \cref{lemma:stripToArrayComplete} yield the following corollary.

\begin{corollary}\label{cor:StripPacking}
\StripPackingConv{} does not allow for an asymptotically competitive online algorithm, even when the diameter of each piece is at most $2$.
Specifically, for every online algorithm $\A$, there exists a stream of $n$ parallelograms of diameter at most $2$ such that $\A$ produces a packing of width $\Omega(\sqrt{\log n/\log\log n})$, while the  optimal offline packing has width at most $2$. 
\end{corollary}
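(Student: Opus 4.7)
The plan is to compose the two ingredients established earlier in the excerpt: the reduction of \cref{lemma:stripToArrayComplete} (which turns an online algorithm for \StripPackingConv{} into one for \OnlineSorting{}) and the sorting lower bound of \cref{thm:TheLowerBound}. Concretely, let $\A$ be any online algorithm for \StripPackingConv{} with competitive ratio $C=C(n)$; we may assume $C\ge 1$, since otherwise the claim is vacuous. By \cref{lemma:stripToArrayComplete}, $\A$ induces an algorithm $\A'$ for \OnlineSorting{}$[2C,n]$ with competitive ratio $4C$: on input $s_i\in[0,1]$, $\A'$ feeds $\A$ the parallelogram $P_i$ of height~$1$, base~$1/n$, and shear~$s_i$, and derives its cell index from the $x$-coordinate where $\A$ places~$P_i$. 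Applying \cref{thm:TheLowerBound} with $\gamma=2C$ and $\Delta=4C$ yields $8C^2 = \Omega(\log n/\log\log n)$, and hence $C = \Omega(\sqrt{\log n/\log\log n})$. The witnessing stream of parallelograms is simply the image, under the encoding $s_i\mapsto P_i$, of the adversarial stream produced by \cref{thm:TheLowerBound} when run against $\A'$.

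Next I would verify the two quantitative claims in the statement. Each $P_i$ constructed above has corners $(0,0),(1/n,0),(s_i,1),(s_i+1/n,1)$ with $s_i\in[0,1]$, so its longest diagonal has length $\sqrt{(1+1/n)^2+1}$, which is at most $2$ for every $n\ge 2$, proving the diameter bound. For the offline optimum, placing the $P_i$ in the strip in order of increasing shear gives a valid translational packing of total width $1 + \max_i s_i - \min_i s_i \le 2$, matching the asserted upper bound on $\OPT$.

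The main (and essentially only) conceptual point is the double role played by $C$: it appears both as the competitive ratio $\Delta=4C$ and in the space parameter $\gamma=2C$ of the induced sorting algorithm. Consequently \cref{thm:TheLowerBound} produces a quadratic inequality in $C$, which is exactly what yields the square root in the final bound. All of the intricate combinatorics---homes, well-sized and deserted reals, jumps, and the deserted space---is already carried out inside the proof of \cref{thm:TheLowerBound}, so no additional work on that side is required, and the corollary is essentially a bookkeeping exercise on top of the two previously proved results.
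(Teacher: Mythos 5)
Your proof is correct and follows the same route as the paper: compose \cref{lemma:stripToArrayComplete} with \cref{thm:TheLowerBound} so that the competitive ratio $C$ appears both in the space parameter $\gamma=2C$ and in $\Delta=4C$, yielding $8C^2=\Omega(\log n/\log\log n)$, and then read off the witnessing stream from the adversarial construction inside the theorem's proof. Your explicit verification of the diameter bound and the sorted-shear offline packing is a welcome unpacking of what the paper relegates to the proof of \cref{lemma:stripToArrayComplete}, though note that since the strip's left wall is fixed the offline width is $1+\max_i s_i$ rather than $1+\max_i s_i-\min_i s_i$ --- both are $\le 2$, so the conclusion is unaffected.
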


\begin{proof}
Suppose that \StripPackingConv{} has a $C$-competitive algorithm $\A$.
By \cref{lemma:stripToArrayComplete}, we get a $4C$-competitive algorithm $\A_2$ for \OnlineSorting{}$[2C,n]$.
\cref{thm:TheLowerBound} implies that $8C^2=\Omega(\log n /\log \log n)$.
In particular, $C=\Omega(\sqrt{\log n /\log \log n})$. 

Specifically, the proof of \cref{thm:TheLowerBound} yields a stream of $n$ reals where $\A_2$ incurs a cost of $\Omega(\sqrt{\log n /\log \log n})$.
By the proof of \cref{lemma:stripToArrayComplete}, this translates to a stream of $n$ parallelograms that can be packed in a strip of width $2$, while $\A_1$ produces a packing of width $\Omega(\sqrt{\log n /\log \log n})$.
\end{proof}

The insights of 
\cref{cor:StripPacking}
imply negative answers also for various other online packing problems.

\PackingMasterThm*
\begin{proof}
For each of the four packing problems, we consider an arbitrary algorithm $\A^*$, where $\A^*$ packs pieces into a container $S^*$.
Here, $S^*$ may be a strip, a set of bins etc., depending on the problem at hand.
The main idea in all the proofs is to \emph{cover} $S^*$ by (rotated) substrips of a strip $S$ of height $1/c$ for a (large) constant $c$, so that we get a correspondence between points in $S^*$ and $S$.
By feeding $\A^*$ with a stream of parallelograms and observing where they are placed in $S^*$, we get an online algorithm $\A$ for packing (slightly modified) parallelograms into $S$.
We can then by \cref{cor:StripPacking} choose a stream that forces $\A$ to produce a packing much larger than the optimal one, which implies that $\A^*$ has likewise produced a bad packing.
This gives a lower bound on the competitive ratio of $\A^*$.

For a horizontal parallelogram $P$, a \emph{2-extended} copy is a parallelogram obtained by taking two copies of $P$ and identifying the bottom base segment of one with the top segment of the other copy.
We now prove each statement in the theorem.
\begin{enumerate}[(a)]
\item
Suppose that $\A^*$ is an
algorithm for \StripPackingConv\ restricted to pieces of diameter at most $\delta>0$.
The algorithm $\A^*$ packs pieces into a strip $S^*$ of height $1$.
We define $c\mydef \lceil 4/\delta\rceil$.
Recall that $S$ is a strip of height $1/c$.
We cut $S$ into substrips $S_1,S_2,\ldots$, each of which is a rectangle of size $1\times 1/c$.
We rotate these by $90^\circ$ and use them to cover the strip $S^*$.
Let $S^*_i$ be the part of $S^*$ corresponding to $S_i$, so that the rectangles $S^*_1,S^*_2,\ldots$ appear in this order from left to right in $S^*$; see \cref{fig:Covering}.

\begin{figure}[htb]
\centering
\includegraphics[page=5]{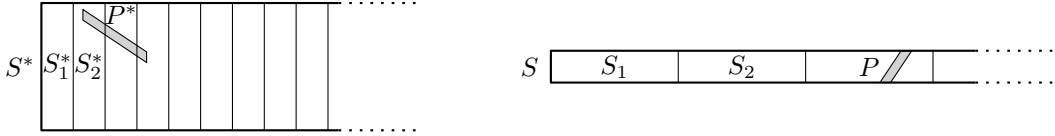}
\caption{The figure shows the correspondence between the strips $S^*$ and $S$.}
\label{fig:Covering}
\end{figure}

We now define an algorithm $\A$ for strip packing horizontal parallelograms of height $1/c$ and diameter at most $2/c$ into $S$ as follows.
Let $P$ be a piece to be packed in $S$, and let $P^*$ be the piece obtained from rotating the $2$-extended copy of $P$ by $90^\circ$.
Then $P^*$ has diameter at most $4/c\leq \delta$.
We now feed $P^*$ to~$\A^*$.
We observe where $\A^*$ places $P^*$ in $S^*$.
Since $P^*$ has width $2/c$, it intersects both the left and right vertical edge of a substrip $S^*_i$.
Then, in particular, $P'\mydef P^*\cap S^*_i$ is congruent to $P$.
We now place $P$ in the substrip $S_i$ as specified by the placement of $P'$ in $S^*_i$.
As $\A^*$ does not place pieces in $S^*$ so that they overlap, this approach will produce a valid packing in $S$.

By \cref{cor:StripPacking}, there exists for arbitrarily large values $\beta>0$ and $n\in \mathbb N$ a stream $I$ of $n$ pieces that can be packed within a rectangle in $S$ of size $2/c\times 1/c$, whereas $\A(I)=\Omega(\sqrt{\log n/\log \log n})$.
Let $I^*$ be the stream of $2$-extended pieces which we feed to $\A^*$.
To get an upper bound for $\OPT(I^*)$, we consider the aforementioned packing of the pieces $I$ in a rectangle of size $2/c\times 1/c$.
Enlarging the pieces, we get a packing of the 2-extensions in a rectangle of size $3/c\times 2/c$.
Rotating the rectangle, we get a packing of the pieces $I^*$ in a rectangle of size $2/c\times 3/c\subset 1\times 3/c$, assuming $c\geq 2$.
Hence, $\OPT(I^*)\leq 3/c$.
Since $\A(I)\leq c\cdot \A^*(I^*)$, we also have $\A^*(I^*)=\Omega(\sqrt{\log n/\log \log n})$.
The statement then follows.

\item
The proof is almost identical to that of (a), so we only describe the parts that are different.
Suppose that $\A^*$ is an
algorithm for \BinPacking\ restricted to pieces of diameter at most $\delta>0$.
Here, $\A^*$ packs pieces into unit square bins $B_1,B_2,\ldots$.

\begin{figure}[htb]
\centering
\includegraphics[page=16]{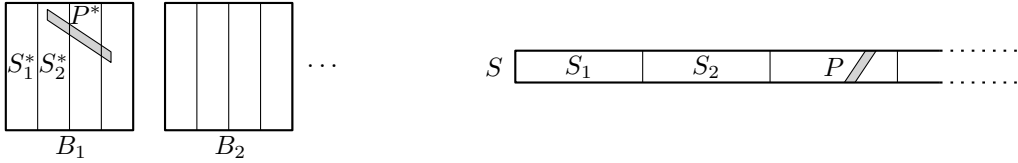}
\caption{The figure shows the correspondence between the bins $B_1,B_2,\ldots$ and $S$.}
\label{fig:Covering2}
\end{figure}

We again partition $S$ into substrips and cover the boxes $B_1,B_2,\ldots$ with these, as shown in \cref{fig:Covering2}.
We then get a strip packing algorithm $\A$ in $S$ in a similar way as in (a), and obtain
a lower bound on the asymptotic competitive ratio of $\A^*$ in a similar way.

\item
Suppose there exists a $C$-competitive online algorithm $\A^*$ for \PerimeterPacking.
We prove that this implies the existence of an algorithm for \SquarePacking{}$[1/160C^2]$, and it follows from (d) that $C=\Omega(\sqrt[4]{\log n/\log\log n})$.
By \cref{thm:offline} (c), every set of convex polygons of diameter at most $1/10$ and total area at most $1/10$ can be packed in a $1\times 1$ square.
Scaling by $1/4C$, we get that convex polygons of diameter at most $1/40C$ and total area at most $1/160C^2$ can be packed in a $1/4C\times 1/4C$-square.
In particular, the same holds if we bound the diameters to be at most $1/160C^2$.
Hence, such a set of polygons can be packed in a box of perimeter $1/C$.
It follows that $\A^*$ will produce a packing with a bounding box of perimeter at most $1$.
Therefore, the packing will be contained in the unit square centered at any corner of the first piece $P_1$.
In other words, we have defined an algorithm for the problem \SquarePacking{}$[1/160C^2]$, and the claim follows.

\item
Consider an online algorithm $\A^*$ for packing a stream of pieces of diameter at most $\delta\in (0,1]$ into the unit square.
For arbitrarily large values of $n$, we prove that there exists a stream of pieces of diameter at most $\delta$ and total area $O(\sqrt{\log\log n/\log n})$ that $\A^*$ cannot pack into the unit square.
Let $d>0$ be a lower bound on the multiplicative constant hidden in the $\Omega$-symbol of Corollary~\ref{cor:StripPacking}.
We use the same setup as in (b), just with a single box $B$, which is covered by the $c\mydef \lfloor \sqrt[4]{d^2\log n/\log\log n}\rfloor$ first substrips of $S$.
By Corollary~\ref{cor:StripPacking}, there exists a stream $P_1,\ldots,P_n$ of pieces such that $\A$ produces a packing of width $\sqrt[4]{d^2\log n/\log\log n}$, so that $\A^*$ cannot fit the stream $I^*$ of $2$-extended pieces in $B$.
We may without loss of generality assume that $n$ is sufficiently large that $4/c\leq \delta$, so that the $2$-extended pieces have diameter at most $\delta$.
Each piece in $I^*$ has area $1/c^2n$, so the pieces have total area $1/c^2=O(\sqrt{\log\log n/\log n})$, and the statement follows.\qedhere
\end{enumerate}
\end{proof}

\subsection{A better-than-naive algorithm for strip packing}\label{sec:PackingAlgo}

You are asked to suggest an algorithm for online translational strip packing of convex pieces.
What is the first approach that comes to your mind?
It may very well be the following greedy algorithm:
For each piece $P_i$ that appears, place $P_i$ as far left into the (horizontal) strip as possible.
This algorithm is $n$-competitive:
Indeed, it will occupy no more than $n\cdot \max_i \pwidth (P_i)$ of the strip, and the optimum must occupy at least $\max_i \pwidth (P_i)$.
This bound is unfortunately also essentially tight:
Consider a sequence of very skinny pieces of height $1$ and width $1$, but with slopes alternating between $1$ and $-1$.
The algorithm will produce a packing of width $n$, while the optimum has width slightly more than $2$ (depending on the actual fatness of the pieces).

We found it surprisingly difficult to develop an algorithm provably better than the naive algorithm, but in the following, we will describe an $O(n^{\log 3-1}\log n)$-competitive algorithm (note that $\log 3-1<0.59$).
We denote the algorithm \onlinepacker.

We first describe the algorithm and carry out the analysis when all the pieces are parallelograms of a restricted type and then show, in multiple steps, how the method generalizes to arbitrary convex pieces.
By rescaling, we may without loss of generality assume that the first piece presented to the algorithm has width~$1$.

\subsubsection{Algorithm for horizontal parallelograms of height $1$ and width $\leq 1$}

In the following we describe and analyze the algorithm \onlinepacker\ for online translational strip packing under the assumption that all pieces $P$ are horizontal parallelograms where $\pheight (P)=1$ and $\pwidth (P)\leq 1$.

\begin{figure}[htb]
\centering
\includegraphics[page=6]{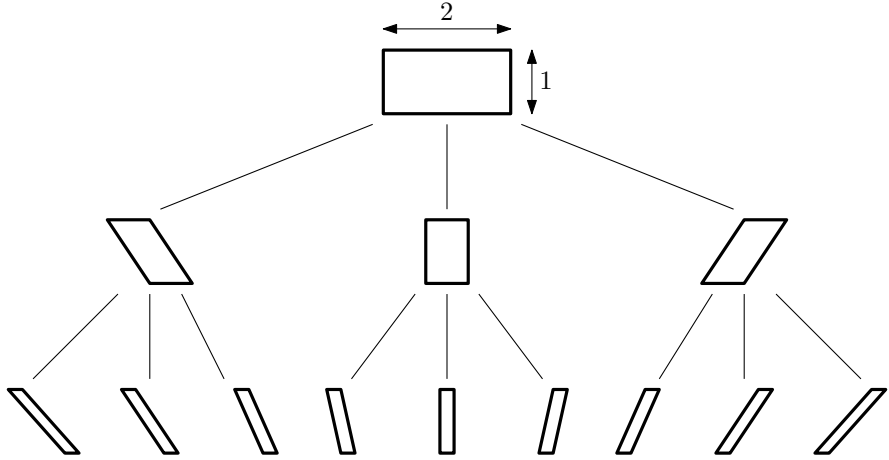}
\caption{The top three layers of the box type tree.}
\label{fig:boxtypes}
\end{figure}

We define an infinite family of \emph{box types} as follows.
The box types can be represented by an infinite ternary \emph{box type tree}; see \cref{fig:boxtypes}.
The empty vector $[\;]$ denotes the \emph{basic box type}, which is simply a rectangle of size $2\times 1$.
Each box type will be defined as a horizontal parallelogram of height~$1$.
A box type is represented by a vector $[x_1,\ldots,x_d]\in\{-1,0,+1\}^d$.
Given a $d$-dimensional box type $T\mydef[x_1,\ldots,x_d]\in\{-1,0,+1\}^d$, we define $(d+1)$-dimensional box types
$T\oplus[x_{d+1}]=[x_1,\ldots,x_d,x_{d+1}]$ for $x_{d+1}\in\{-1,0,+1\}$, as follows.
Let $b$ and $t$ be the bottom and top edges of $T$.
We partition $b$ and $t$ into three equally long segments $b_{-1},b_0,b_{+1}$ and $t_{-1},t_0,t_{+1}$, respectively, in order from left to right.
For $x_{d+1}\in\{-1,0,+1\}$, we then define $T\oplus[x_{d+1}]$ to be the parallelogram spanned by the segments $b_0$ and $t_{x_{d+1}}$.
It follows that a box of type $[x_1,\ldots,x_d]$ has base edges of length $2\cdot 3^{-d}$ and the other pair of parallel edges have width $2\sum_{i=1}^d \nicefrac{x_i}{3^i}$. 

\begin{lemma}\label{lemma:containSegment}
For every $d\geq 0$ and every line segment $s$ of height $1$ and width at most $1$, there exists a $d$-dimensional box type $T$ that can contain $s$ when the lower endpoint of $s$ is placed at the midpoint of the bottom segment of $T$.
\end{lemma}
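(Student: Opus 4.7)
The plan is to reduce the geometric statement to the standard balanced-ternary approximation of reals in $[-1/2,1/2]$. First I would verify by induction on $d$, directly from the recursive definition $T \oplus [x_{d+1}]$, that the box $[x_1,\ldots,x_d]$ is a horizontal parallelogram whose bottom edge is the segment from $(-3^{-d},0)$ to $(3^{-d},0)$ (so its midpoint is at the origin), and whose top edge has length $2 \cdot 3^{-d}$ and is centered at $(\sigma, 1)$ with $\sigma = 2\sum_{i=1}^{d} x_i/3^i$. The key step in this induction is that taking the middle third $b_0$ of the bottom edge preserves its midpoint, while replacing $t_0$ by $t_{x_{d+1}}$ shifts the new top-edge center by exactly $x_{d+1} \cdot 2 \cdot 3^{-(d+1)}$.

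Next I would place the lower endpoint of $s$ at the origin, so its upper endpoint sits at $(w, 1)$ for some $w \in [-1,1]$ (the signed horizontal displacement). Because the parallelogram is convex and its lower endpoint already lies in $T$ by construction, we have $s \subseteq T$ iff the upper endpoint $(w,1)$ lies on the top edge of $T$, i.e.\ iff $|w - \sigma| \leq 3^{-d}$. Thus the lemma is equivalent to the claim that for every $w \in [-1,1]$ and every $d \geq 0$ one can choose $x_1,\ldots,x_d \in \{-1,0,+1\}$ with
\[
\Bigl| \frac{w}{2} - \sum_{i=1}^d \frac{x_i}{3^i} \Bigr| \leq \frac{1}{2\cdot 3^d}.
\]

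This final statement is the standard balanced-ternary approximation on $[-1/2, 1/2]$, which I would prove by induction on $d$. The base case $d=0$ is immediate since $|w/2| \leq 1/2$. For the step, partition $[-1/2,1/2]$ into $[-1/2,-1/6]$, $[-1/6,1/6]$, $[1/6,1/2]$ and pick $x_1 \in \{-1,0,+1\}$ accordingly so that $w/2 - x_1/3 \in [-1/6,1/6]$; then the rescaled quantity $3(w/2 - x_1/3)$ lies in $[-1/2,1/2]$ and the inductive hypothesis with $d-1$ digits supplies $x_2,\ldots,x_d$ with the required error bound. The only real work is the geometric bookkeeping for the box-type tree in the first paragraph; the approximation itself is entirely routine, so I do not anticipate a serious obstacle.
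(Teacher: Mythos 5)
Your proof is correct and, at its core, is the same induction on $d$ as the paper's: at each step you pick the ternary digit corresponding to which third contains the relevant point. The paper argues the containment directly in geometric terms (choose $x_{d+1}$ so that $t_{x_{d+1}}$ contains the upper endpoint of $s$, then note the convex hull of $b_0$ and $t_{x_{d+1}}$ contains the segment), whereas you make the geometry explicit via coordinates and factor out the combinatorial content as a balanced-ternary approximation lemma; both are fine, and yours is just a more computational rendering of the same idea.
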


\begin{proof}
We prove the lemma by induction on $d$.
The claim clearly holds for $d=0$, because the basic box type is a rectangle of size $2\times 1$ and $s$ has height $1$ and width at most $1$.
Suppose that for some dimension $d\geq 0$, there is a box type $T$ that satisfies the lemma.
We then partition the bottom and top edges of $T$, as described in the definition of the box types.
We choose $x_{d+1}\in\{-1,0,+1\}$ such that $t_{x_{d+1}}$ contains the upper endpoint of $s$. Then $T\oplus[x_{d+1}]$ is a $(d+1)$-dimensional box type that satisfies the claim.
\end{proof}

\begin{lemma}\label{lemma:suitablebox}
For every horizontal parallelogram $P$ where $\pheight (P)=1$ and $\pwidth (P)\leq 1$, there exists a box type $T$ such that $P$ can be packed into $T$ and $\area (T) \leq 6\area (P)$.
\end{lemma}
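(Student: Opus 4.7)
The plan is to reduce everything to a single clean containment inequality, then pick $d$ to match the base length of $P$.

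Let $b \mydef \area(P)$ denote the base length of $P$ (so, since $\pheight(P)=1$, this equals the area) and let $s$ denote the shear of $P$. Then $\pwidth(P) = b + |s| \leq 1$. Recall that a $d$-dimensional box type $T = [x_1,\ldots,x_d]$ is a horizontal parallelogram of height $1$, base length $W \mydef 2\cdot 3^{-d}$, and shear $2S$ with $S \mydef \sum_{i=1}^{d} x_i 3^{-i}$, and that $\area(T)=W$. The first step is to write down the four inequalities that a translated copy of $P$ must satisfy in order to lie inside $T$ (bottom-left, bottom-right, top-left, top-right corner constraints). Solving this little system shows that $P$ fits into $T$ by translation if and only if
\[
|s - 2S| + b \;\leq\; \frac{2}{3^d}.
\]

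Next I choose the box. Let $d \mydef \lfloor \log_3(1/b) \rfloor$, which gives $b \leq 3^{-d} < 3b$. The achievable shears of $d$-dimensional box types form the set $\{2j\cdot 3^{-d} : j \in \mathbb{Z},\ |j| \leq (3^d-1)/2\}$, i.e.\ a $2\cdot 3^{-d}$-spaced lattice covering $[-(1-3^{-d}),\,1-3^{-d}]$. I will show that some lattice point $2S$ satisfies the containment condition by splitting on whether $|s|$ lies in the lattice range or in the small boundary strip. If $|s| \leq 1 - 3^{-d}$, then the nearest lattice point gives $|s - 2S| \leq 3^{-d}$, so $|s - 2S| + b \leq 3^{-d} + 3^{-d} = 2\cdot 3^{-d}$. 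Otherwise (say $s > 1 - 3^{-d}$; the other sign is symmetric), the extremal choice $2S = 1 - 3^{-d}$ gives $|s - 2S| = s - (1 - 3^{-d}) \leq (1 - b) - (1 - 3^{-d}) = 3^{-d} - b$, so $|s - 2S| + b \leq 3^{-d} \leq 2\cdot 3^{-d}$. In both cases $P$ fits in the corresponding $T$.

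Finally, the area bound is immediate: $\area(T) = 2\cdot 3^{-d} < 6b = 6\,\area(P)$.

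The only subtle point — the potential obstacle — is the boundary case when $|s|$ is so close to $1$ that no $d$-dimensional box type has a shear close enough to $s$. This is handled by observing that in that regime the constraint $b + |s| \leq 1$ forces $b$ to be correspondingly small, so the slack $3^{-d} - b$ in the base dimension absorbs the shear approximation error. I do not use Lemma~\ref{lemma:containSegment} directly: asking a centered spine to lie inside $T$ would lose a factor of $2$ and yield the weaker inequality $2|s-2S|+b \leq 2\cdot 3^{-d}$, whereas allowing $P$ to be placed anywhere inside $T$ yields the tight condition above and is what makes the constant $6$ work.
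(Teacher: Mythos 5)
Your proof is correct and self-contained, and the derivation of the exact translational-containment criterion $|s - 2S| + b \le 2\cdot 3^{-d}$ is a clean reformulation. You choose the same $d = \lfloor \log_3(1/b)\rfloor$ as the paper, then verify the criterion directly against the shear lattice $\{2j\cdot 3^{-d} : |j| \le (3^d-1)/2\}$, using $b+|s|\le 1$ to handle the boundary regime; the constant $6$ falls out identically. The paper instead routes through \cref{lemma:containSegment}: it anchors a segment parallel to $P$'s non-horizontal edges at the midpoint of $T$'s base, selects a $d$-dimensional type containing it (which amounts to $|s-2S|\le 3^{-d}$), and then places $P$ entirely on whichever side of that segment has room (which uses $b\le 3^{-d}$). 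These two sufficient conditions are strictly stronger than your single inequality, but they are still met for the same $d$, so the paper reaches the same constant. Your closing remark is therefore a misattribution: the inequality $2|s-2S|+b\le 2\cdot 3^{-d}$ is what you would get by horizontally \emph{centering} $P$ itself in $T$, which is not what the paper does --- placing $P$ to one side of a centered segment yields $\max(|s-2S|, b)\le 3^{-d}$, and \cref{lemma:containSegment} already delivers $|s-2S|\le 3^{-d}$ for every $|s|\le 1$ because the extremal lattice shears $\pm(1-3^{-d})$ absorb the boundary strip. So neither approach loses a factor of $2$; what your exact criterion genuinely buys is more slack per $d$ (it certifies more box types as admissible), which is aesthetically nicer and could help if one wanted to tighten the constant, but it is not needed to obtain $6$.
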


\begin{proof}
Let $\ell$ be the length of the horizontal segments of $P$.
Choose $d\geq 0$ as large as possible such that $3^{-d}\geq \ell$; this is possible because $\ell\leq 1$.
Let $s$ be a segment of height $1$ parallel to the non-horizontal edges of $P$ and consider the $d$-dimensional box type that contains $s$ as described in \cref{lemma:containSegment}.
If the top endpoint of $s$ is in the right half of the top edge of $T$, then we place $P$ to the left of $s$, i.e., with the right edge of $P$ coincident with $s$.
Otherwise, we place $P$ to the right of $s$.
Since the horizontal segments of $T$ have length $2\cdot 3^{-d}\geq 2\ell$, it follows that $T$ can contain $P$.
Moreover, by maximality of $d$, the base edges have length less than $6\ell$, so it follows that $\area (T) \leq 6\area (P)$.
\end{proof}

If a piece $P$ can be packed into a box type $T$ and $\area (T)\leq 6\area (P)$, as in the lemma, then we say that $T$ \emph{matches} $P$.
We say that a box type $T$ is \emph{suitable} for a piece $P$ if $T$ is an ancestor in the box type tree of a box type that matches $P$.
In particular, $P$ can be packed in a box $B$ of a suitable type, but the area of $B$ may be much more than $6\area(P)$.
We consider a type $T$ to be an ancestor of itself.

Our algorithm will allocate space of the strip for boxes of the various box types.
Each box of type $T$ contains either:
\begin{itemize}
\item a piece $P$ that $T$ matches, or

\item
one, two, or three boxes of types that are children of $T$ in the box type tree.
\end{itemize}

\begin{figure}[htb]
\centering
\includegraphics[page=7]{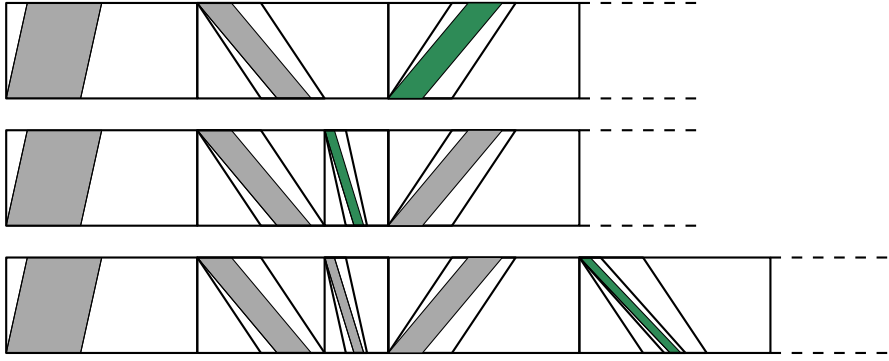}
\caption{An evolving packing produced by the algorithm.
In each case, the green piece has just been placed.
In the third step, there was no box suitable for the piece which also had room for it, so we allocated a new basic box.}
\label{fig:algpack}
\end{figure}

We now explain the behavior of our algorithm \onlinepacker\ when a new piece $P$ arrives; see \cref{fig:algpack}.
Let $T=[x_1,\ldots,x_d]$ be a type that matches $P$.
Consider first the special case that $d=0$, which means that the basic box type matches $P$.
We then allocate a new box $B_0$ of the basic type and place it as far left in the strip as possible (that is, without overlapping any already allocated box) and place $P$ there.

Otherwise, for $i\in\{0,\ldots,d\}$, let $T_i\mydef [x_1,\ldots,x_i]$, so that $T_0,\ldots,T_d$ is the path in the box type tree from the basic box type $T_0 \mydef [\; ]$ to $T_d\mydef T$.
Suppose first that for some $i<d$, there exists a box $B_i$ of type $T_i$ in the packing that has room for one more box of type $T_{i+1}$.
Choose $i$ to be maximum with this property, which means that $B_i$ is chosen as small as possible.
We then do the following for each $j=i+1,\ldots,d$:
Allocate a box $B_j$ of type $T_j$ in $B_{j-1}$, which is placed as far left as possible.
At last, we place $P$ in $B_d$.
Thus, each of the new boxes $B_{i+1},\ldots,B_{d-1}$ will contain a single box, and $B_d$ will contain $P$.

If such a dimension $i$ does not exist, we allocate a new box $B_0$ of the basic type and place it as far left in the strip as possible.
It then holds for $i=0$ that $B_i$ has type $T_i$ and has room for a box of type $T_{i+1}$, so we can proceed as described above; allocating a chain of nested boxes until we get to a box of type $T_d$ and place $P$ there.

We say that a $d$-dimensional box is \emph{near-empty} if there is exactly one $(d+1)$-dimensional box allocated in it.
A crucial property of \onlinepacker\ is that it does not produce an excessive number of near-empty boxes, as described in the following lemma.

\begin{figure}[htb]
\centering
\includegraphics[page=8]{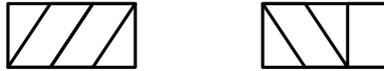}
\caption{For any type $T$ it holds that a box of type $T$ can contain two boxes $B_1,B_2$ of types that are children of $T$ if $B_1$ and $B_2$ have the same type or one of them has type $T\oplus [0]$.
Here, it is shown for the base type $T\mydef[\; ]$.}
\label{fig:twoChildren}
\end{figure}

\begin{lemma}\label{lemma:atmosttwo}
There can be at most two near-empty boxes of each type in a packing produced by \onlinepacker.
\end{lemma}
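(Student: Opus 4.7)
The plan is a proof by contradiction. Suppose that at some moment during the execution of \onlinepacker{} there are three near-empty boxes $B^1,B^2,B^3$ of the same type $T$, listed in order of creation. Since the number of sub-boxes inside any given box never decreases, each $B^i$ has been near-empty since it was born, and its unique child has some type $T\oplus[x_i]$ with $x_i\in\{-1,0,+1\}$. I will establish the following pairwise statement: for every pair of indices $i<j$ we must have $\{x_i,x_j\}=\{-1,+1\}$. Applying this to $(1,2),(1,3),(2,3)$ forces $x_1,x_2,x_3$ to be three pairwise distinct elements of $\{-1,+1\}$, which is impossible and yields the contradiction.

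To prove the pairwise statement, fix $i<j$ and consider the step in which $B^j$ is first allocated while \onlinepacker{} serves some piece $P$. Let $T_1,T_2,\ldots,T_k$ be the chain of types used for $P$, with $T_k$ matching $P$. The chain necessarily contains $T$, and the successor of $T$ in the chain is $T\oplus[x_j]$. At this moment $B^i$ is already present and still near-empty, and its type $T$ is a weak ancestor of $T_k$, so $B^i$ is suitable for $P$. The key sub-claim is that $B^i$ cannot have room for an additional child of type $T\oplus[x_j]$. Indeed, if it did, $B^i$ would itself be an eligible choice for the starting box at dimension $\dim(T)$, and the maximum-dimension selection rule would force the chosen $B_1$ to satisfy $\dim(B_1)\ge\dim(T)$. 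But then every newly created chain box $B_2,\ldots,B_k$ sits strictly below $B_1$ in the box-type tree and thus has dimension strictly greater than $\dim(B_1)\ge\dim(T)$, while $B_1$ itself either has type $T$ and is not newly created (a brand-new basic box is only produced when no existing suitable box has room, and $B^i$ has room by assumption) or already sits deeper than $T$; either way no new type-$T$ box can appear, contradicting the creation of $B^j$. Hence $B^i$ has no room for $T\oplus[x_j]$, which by the two-children geometric fact underlying Figure~\ref{fig:twoChildren} is exactly $\{x_i,x_j\}=\{-1,+1\}$.

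The main technical care is in the chain-dimension bookkeeping of the sub-claim: one must verify that the inequality $\dim(B_1)\ge\dim(T)$ really does rule out the creation of any new type-$T$ box, carefully distinguishing the cases where $B_1$ is an existing box at dimension strictly above $\dim(T)$, $B_1$ is an existing box of type exactly $T$, and $B_1=B^j$ being a freshly allocated basic box (only possible when $T$ is the basic type). Once the sub-claim is secured, assembling the three pair constraints $\{x_1,x_2\}=\{x_1,x_3\}=\{x_2,x_3\}=\{-1,+1\}$ yields $x_2=-x_1$, $x_3=-x_1=x_2$, and hence $\{x_2,x_3\}=\{x_2\}\ne\{-1,+1\}$, the desired contradiction.
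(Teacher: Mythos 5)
Your proof is correct and uses the same two core ingredients as the paper: the greedy maximum-dimension selection rule (a new $T$-box is allocated only if every existing near-empty $T$-box lacks room for the required next child type) together with the geometric fact that two child boxes of a $T$-box coexist when they share a type or one is $T\oplus[0]$. The only difference is framing: you derive a contradiction from the three pairwise constraints $\{x_i,x_j\}=\{-1,+1\}$ on a hypothetical third near-empty box, while the paper shows directly that two near-empty boxes must have children of types $T\oplus[-1]$ and $T\oplus[+1]$ and then observes that any further child type fits in one of them, so a third is never allocated — the same argument in slightly different clothing.
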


\begin{proof}
Suppose that there are two near-empty boxes $B_1$ and $B_2$ of type $T$, where $B_1$ was created first.
Let $P$ be the piece that caused $B_2$ to be allocated.
We first see that one of the two boxes, say $B_1$, must contain a box $B_1'$ of type $T\oplus[-1]$ and the other, $B_2$, must contain a box $B_2'$ of type $T\oplus[+1]$:
If $B'_1$ and $B'_2$ had the same type or one of them had type $T\oplus[0]$, then they could be packed in the same box of type $T$; see \cref{fig:twoChildren}.
Hence, $P$ could have been placed in $B_1$ instead of allocating the new box $B_2$, contradicting that we allocate new boxes in a smallest possible existing box.

But since $B'_1$ has type $T\oplus[-1]$ and $B'_2$ has type $T\oplus[+1]$, it follows that the next box of type $T\oplus[x]$, for $x\in\{-1,0,+1\}$, can be placed in $B_1$ or $B_2$.
Therefore, the algorithm will never allocate a third near-empty box of type $T$.
\end{proof}

We will now analyze the density under the assumption that there are no near-empty boxes.
We shall then reduce the general case to this restricted case.

\begin{lemma}\label{lemma:no-near-empty}
Suppose that \onlinepacker\ has produced a packing of $n$ horizontal parallelograms of height $1$ and width at most $1$, where there are no near-empty boxes.
Then the density of the pieces in the occupied part of the strip is at least $\Omega(n^{1-\log 3})$.
\end{lemma}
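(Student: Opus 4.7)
The plan is to represent the packing produced by \onlinepacker{} as a forest of boxes in which the roots are the top-level (basic-type) boxes, the internal nodes are boxes that directly contain other boxes, and the leaves are boxes that directly contain a piece. By the no-near-empty hypothesis, every internal node has $2$ or $3$ children. Three elementary facts drive the argument: (i) each child box has area exactly $1/3$ of the area of its parent (the base shrinks by a factor $3$ while the height stays $1$); (ii) by \cref{lemma:suitablebox} each leaf contains a piece of area at least $1/6$ of the leaf's area; and (iii) every top-level box is a $2\times 1$ rectangle of area $2$, and these are placed flush against each other at the left end of the strip, so the occupied strip area equals $2k$, where $k$ is the number of top-level boxes.

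Next I would prove by induction on the number of leaves that for any subtree rooted at a box of area $R$ with $m$ leaves, the sum of leaf-box areas is at least $R\cdot m^{1-\log 3}$. The case $m=1$ is immediate. For the inductive step, let the root have $j\in\{2,3\}$ children with $m_1,\ldots,m_j$ leaves ($\sum_i m_i=m$); each child has area $R/3$, so the inductive hypothesis gives a total leaf area of at least $(R/3)\sum_{i=1}^j m_i^{1-\log 3}$. It therefore suffices to show
\[
\sum_{i=1}^{j} m_i^{1-\log 3}\geq 3\,m^{1-\log 3}.
\]
Setting $\alpha\mydef 1-\log 3<0$, the map $x\mapsto x^\alpha$ is convex on $(0,\infty)$ (its second derivative $\alpha(\alpha-1)x^{\alpha-2}$ is positive), so Jensen's inequality yields $\sum_i m_i^\alpha\geq j\cdot(m/j)^\alpha=j^{\log 3}\cdot m^\alpha$. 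Since $2^{\log 3}=3$ and $\log 3>1$, we have $j^{\log 3}\geq 3$ for every $j\geq 2$, completing the induction. The branching factor $j=2$ is the tight case and is precisely where the exponent $\log 3-1$ arises.

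To conclude, applying the inductive bound to each of the $k$ top-level trees and using $m_i\leq n$ together with $1-\log 3<0$, the total leaf area is at least $\sum_{i=1}^{k} 2\cdot m_i^{1-\log 3}\geq 2k\cdot n^{1-\log 3}$. Combining with fact (ii) gives a total piece area of at least $\tfrac{k}{3}\,n^{1-\log 3}$, and dividing by the occupied strip area $2k$ yields a density of at least $\tfrac{1}{6}\,n^{1-\log 3}=\Omega(n^{1-\log 3})$, as required. The main obstacle is the Jensen-based recursive estimate, where the interplay between branching factors in $\{2,3\}$ and the factor-of-three area shrink from parent to child is what determines the exponent; everything else is bookkeeping.
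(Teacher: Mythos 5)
Your proof is correct, and it takes a genuinely different route from the paper's. The paper's argument is a tree-rebalancing reduction: it repeatedly moves pairs of deepest leaves under a shallower leaf, verifying at each step that the density can only decrease, until the tree is balanced (all leaves at two adjacent depths); it then bounds the density of the balanced tree directly. Your argument instead establishes the recursive inequality $\text{(leaf area under a box of area $R$ with $m$ leaves)} \geq R\,m^{1-\log 3}$ by strong induction, with the inductive step reduced to $\sum_{i=1}^{j} m_i^{1-\log 3}\geq 3\,m^{1-\log 3}$ for $j\in\{2,3\}$, which follows from Jensen applied to the convex map $x\mapsto x^{1-\log 3}$ (the exponent being negative) and the identity $2^{\log 3}=3$. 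This sidesteps the rebalancing machinery entirely and exposes exactly why the exponent $\log 3 - 1$ appears: branching factor $2$ is the tight case, and the power-mean step $j^{\log 3}\geq 3$ is an equality precisely at $j=2$. The one point worth stating slightly more carefully in your fact~(iii) is that the occupied width is \emph{at most} $2k$ (the rightmost piece need not touch the right edge of the last basic box), but since you use $2k$ only as a denominator this only strengthens the density bound, so the conclusion $\Omega(n^{1-\log 3})$ stands. Your approach is cleaner and more self-contained; the paper's gives a somewhat more concrete picture of what the extremal (balanced-tree) configuration looks like.
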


\begin{proof}
We prove that when there are no near-empty boxes, then the density in each basic box is at least $n^{1-\log 3}$, and the lemma follows.
So consider a basic box $B$ and let $n_B$ be the number of pieces packed in $B$.
The box $B$ and the boxes allocated in it can be represented as a rooted tree $\mathcal T_1$. 
Here, the root is $B$ and the children of a $d$-dimensional box are the $(d+1)$-dimensional boxes that it contains, and each leaf is a box that contains a piece.
Since there are no near-empty boxes, each internal node in $\mathcal T_1$ has two or three children.
To get a lower bound on the density in $B$, we construct a sequence of trees $\mathcal T_1,\mathcal T_2,\ldots,\mathcal T_k$, where each tree corresponds to a packing in a basic box.
The density of these packings decreases, and we will see in the end that the density in $\mathcal T_k$ is at least $\Omega(n^{1-\log 3})$.

The final tree $\mathcal T_k$ is balanced in the sense that all leafs are contained in two neighbouring levels $d$ and $d+1$.
If a tree $\mathcal T_i$ is not balanced, we construct $\mathcal T_{i+1}$ in the following way.
Let $d$ be the maximum dimension of a box in $\mathcal T_i$.
We can find two or three leafs of dimension $d_u$ that have the same parent~$u$. 
If there are three, we remove one of them and the unique piece it contains from the packing, which decreases the density, and then proceed as in the case where there are two, as described in the following.
Let $u_1,u_2$ be the two leafs.
Suppose that there is also a leaf $v$ at level $d_v\leq d_u-2$. 
Recall that a $d$-dimensional box has area $ 2\cdot 3^{-d}$. Thus, the area of pieces in the boxes $u_1,u_2,v$ combined is at least
\[
A\mydef 2\cdot 2\cdot 3^{-d_u}/6+2\cdot 3^{-d_v}/6=2\cdot 3^{-(d_u+1)}+ 3^{-(d_v+1)}.
\]

We now make a replacement argument:
We ``move'' the leafs $u_1,u_2$, so that they become children of $v$ instead of $u$, and argue that this only decreases the density.
So, we construct a tree $T_{i+1}$ that is similar to $\mathcal T_i$, except $u$ is now a leaf and $v$ has two children $v_1,v_2$. For this to be a conceivable packing, we must remove the pieces that were before in $u_1,u_2,v$ and place new pieces in $u, v_1, v_2$.
We place pieces in $u, v_1, v_2$ that fills the boxes with density $1/6$.
These new pieces have total area
\[
A'\mydef 2\cdot 3^{-d_u+1}/6+2\cdot 2\cdot 3^{-d_v-1}/6=3^{-d_u}+ 2\cdot 3^{-(d_v +2)}.
\]
It is now straightforward to verify that since $d_v+2\leq d_u$, we have $A'\leq A$. 
Hence, the density in $\mathcal T_{i+1}$ is smaller than the density in $\mathcal T_i$.

In the end, we obtain the packing represented by a balanced tree $\mathcal T_k$, where all leafs are $d$- or $(d+1)$-dimensional for some value $d\geq 0$.
We then have $d\leq \log n_B+1$.
Therefore, each piece has area at least $2\cdot 3^{-\log n_B-1}/6=\Omega(3^{-\log n})$, and the total area of the pieces is $\Omega(3^{-\log n_B}\cdot n_B)=\Omega(n_B^{1-\log 3})=\Omega(n^{1-\log 3})$, which is also a lower bound on the density.
\end{proof}

We now prove that when the total area of pieces is at least $1$, then the density of an arbitrary packing produced by the algorithm, i.e., where there may also be near-empty boxes, is not much smaller than stated in \cref{lemma:no-near-empty}.
If the total area is not bounded from below, it is easy to see that no bound can be given on the density; a single skew and sufficiently thin parallelogram proves that.
We don't know, however, if the algorithm is anyway $O(n^{\log 3-1}\log n)$-competitive.

\begin{lemma}\label{lemma:oneclass}
Suppose that \onlinepacker\ has packed $n$ horizontal parallelograms of height $1$ and width at most $1$.
If the total area of the pieces is at least $1$, the resulting packing has density $\Omega(n^{1-\log 3}/\log n)$.
\end{lemma}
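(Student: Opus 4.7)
My plan is to extend the tree-balancing argument of \cref{lemma:no-near-empty} by accounting for the extra area consumed by near-empty boxes, using \cref{lemma:atmosttwo} for a global count. The extra $\log n$ factor (relative to \cref{lemma:no-near-empty}) should come from a pigeonhole over dimension classes of pieces, coupled with a global counting of near-empty waste.

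First, I would restrict attention to \emph{medium} pieces. Setting $D := \lceil \log_3(4n)\rceil = O(\log n)$, any piece matched to a box of dimension $d > D$ has area at most $2\cdot 3^{-(D+1)} \leq 1/(6n)$, so the (at most $n$) such small pieces collectively contribute total area at most $1/6$. Hence medium pieces (matching dimension at most $D$) contribute total area at least $5/6$, and by pigeonhole some single dimension class $d^* \leq D$ contains pieces of total area $\Omega(1/\log n)$. Since each class-$d^*$ piece has area in $[3^{-(d^*+1)}, 2\cdot 3^{-d^*}]$, the number $n_{d^*}$ of such pieces satisfies $n_{d^*} \cdot 3^{-d^*} = \Theta(A_{d^*}) = \Omega(1/\log n)$.

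Second, I would apply the balanced-tree replacement argument of \cref{lemma:no-near-empty} to the \emph{contracted} tree of each basic box, obtained by collapsing maximal near-empty chains into single edges. Every internal node of the contracted tree has $2$ or $3$ children, so the replacement argument applies verbatim and shows that the basic boxes containing class-$d^*$ pieces occupy total area $O(A_{d^*} \cdot n^{\log 3 - 1})$ in the contracted packing (i.e.\ ignoring near-empty waste). To then bound the near-empty waste itself, I would use \cref{lemma:atmosttwo}: there are at most $2\cdot 3^d$ near-empty boxes of dimension $d$, each of area $2\cdot 3^{-d}$, contributing at most $4$ to the total area per dimension; summed over the $O(\log n)$ relevant dimensions this yields total near-empty waste $O(\log n)$. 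Combining, the total strip area is $O(A_{d^*} \cdot n^{\log 3 - 1} + \log n)$, and dividing the total piece area (at least $1$) by this bound gives density $\Omega(n^{1-\log 3}/\log n)$.

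The main obstacle is rigorously executing the second step, where the contracted tree's depth governs the branching-waste bound but its leaves correspond to pieces whose \emph{real} areas are controlled by the matching dimension rather than the contracted depth. I would circumvent this mismatch by restricting the contracted-tree analysis to the sub-forest induced by class-$d^*$ pieces, where the real matching dimension is uniformly $d^*$; then the balanced-tree argument applies directly to this sub-forest, and the extra $\log n$ loss in the final density bound is precisely the one paid in the pigeonhole step above.
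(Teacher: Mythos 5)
Your proposal takes a genuinely different route from the paper, and I believe it has unresolvable gaps in the form you describe.

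The paper's proof is a clean reduction to \cref{lemma:no-near-empty}: it \emph{adds} pieces to the packing, replacing each maximal near-empty box of dimension $i$ with a single piece that fills it. This decreases the piece count and produces a packing with no near-empty boxes, so \cref{lemma:no-near-empty} applies. The added area $F$ is bounded by $O(\log n)$: for dimensions $i\le d$, where $d$ is the largest dimension with at least $2^d$ near-empty boxes, \cref{lemma:atmosttwo} gives at most $2\cdot 3^i$ near-empty boxes of area $O(3^{-i})$ each, i.e., $O(1)$ per level; the key observation $d\le \log n$ (each near-empty $d$-dimensional box contains a distinct piece) makes the number of such levels $O(\log n)$; and the tail $i>d$ is handled by the geometric series $\sum_{i>d} 2^i\cdot O(3^{-i})=O(1)$. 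Finally $\Sigma/A = (\Sigma+F)/A \cdot \Sigma/(\Sigma+F) = \Omega(n^{1-\log 3})\cdot \Omega(1/\log n)$ using $\Sigma\ge 1$.

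Your argument has three issues. First, the contracted tree breaks the depth-to-area link that the replacement argument in \cref{lemma:no-near-empty} crucially relies on: in the contracted tree, a leaf at contracted depth $k$ can have arbitrarily large real dimension (and hence arbitrarily small piece area), so balancing the contracted tree does not yield the needed area lower bound. Your fix of restricting to the sub-forest of class-$d^*$ pieces does not repair this: in that sub-forest all leaves already have the same area $\Theta(3^{-d^*})$, so there is nothing to ``balance,'' and the claim that ``basic boxes containing class-$d^*$ pieces occupy total area $O(A_{d^*}\cdot n^{\log 3-1})$'' is asserted but not derived --- indeed, the replacement argument of \cref{lemma:no-near-empty} gives a lower bound on total piece area in a basic box, not an upper bound on the number of boxes relative to one dimension class. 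Second, even granting that claim, your bound $A=O(A_{d^*}\cdot n^{\log 3-1}+\log n)$ omits the area of basic boxes that contain no class-$d^*$ piece at all; these are entirely unaccounted for. Third, your near-empty waste bound sums over ``the $O(\log n)$ relevant dimensions,'' but your threshold $D$ only caps the dimensions of \emph{pieces}, not of near-empty \emph{boxes}, which can have arbitrarily large dimension; you would need the paper's observation that at most $2^d$ near-empty boxes exist at dimensions $d>\log n$, plus the geometric tail bound, to make this work.

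In short, your pigeonhole over dimension classes and contracted-tree argument is not a viable alternative as written; the paper's ``fill in near-empty boxes and re-apply \cref{lemma:no-near-empty}'' step is the mechanism that avoids all of these difficulties.
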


\begin{proof}
Let $A$ be the area of the strip used by the algorithm, and let $S$ be the total area of the pieces.
We show that by replacing some boxes and pieces, so that the total area of pieces increases by at most $F=O(\log n)$, we obtain a packing with $m$ pieces, where $m\leq n$, and no near-empty boxes.
We then get from \cref{lemma:no-near-empty} that the resulting packing has density
\[
\frac{S+F}{A}=\Omega(m^{1-\log 3})=\Omega(n^{1-\log 3}).
\]
Using that $S\geq 1$, it then follows that the original density is 
\[
\frac{S}{A}=\frac{S+F}{A}\cdot \frac{S}{S+F}
\geq \frac{S+F}{A}\cdot \frac{S}{S+ O(\log n)}
\geq \frac{S+F}{A}\cdot \frac{S}{S\cdot O(\log n)}=
\Omega(n^{1-\log 3}/\log n).
\]

Consider a maximal near-empty box $B_1$, i.e., a near-empty box that is not contained in a larger near-empty box.
We remove all boxes contained in $B_1$ and the pieces they contain and instead place a single piece in $B_1$ that completely fills $B_1$.
This operation cannot increase the number of pieces, because there was at least one piece contained in $B_1$ before.

Let $d$ be maximum such that there are at least $2^d$ near-empty $d$-dimensional boxes.
We first observe that $d\leq \log n$:
Each of the near-empty $d$-dimensional boxes contains a distinct piece.
We therefore have $2^d\leq n$, so that $d\leq \log n$.

Let $F_{\leq d}$ be the total area of pieces that we add to eliminate maximal near-empty boxes of dimension at most $d$ and let $F_{> d}$ be the remaining ones, so that $F=F_{\leq d}+F_{>d}$.

In order to bound $F_{\leq d}$, we note that there are $3^i$ types of $i$-dimensional boxes, each of which has area $O(3^{-i})$.
For each maximal near-empty $i$-dimensional box, we add a piece of area $O(3^{-i})$.
As there are at most two near-empty boxes of each type by Lemma~\ref{lemma:atmosttwo}, we add boxes of total area $O(3^i\cdot 3^{-i})=O(1)$ for each $i\leq d$.
As $d=O(\log n)$, we have $F_{\leq d}=O(\log n)$.

By similar arguments, we get
\[
F_{>d}<\sum_{i=d+1}^\infty 2^i\cdot O(3^{-i})=\sum_{i=d+1}^\infty O((2/3)^i)=O(1),
\]
so we conclude that $F=O(\log n)$, which finishes the proof.
\end{proof}

\subsubsection{Algorithm for horizontal parallelograms of extended width $\leq 1$}\label{sec:smallWidth}

We now describe an extension of \onlinepacker\ in order to handle horizontal parallelograms of arbitrary height (at most $1$) and bounded extended width, to be defined shortly.
Here, we partition the pieces into height classes, so that a piece $P$ belongs to class $h$ if $2^{-h-1}< \pheight (P)\leq 2^{-h}$.
For height class $h$, the base type is a rectangle of size $2\times 2^{-h}$, and in the strip we allocate boxes of this size in which to pack pieces from the height class.
We define an infinite ternary box type tree for each height class, exactly as when all the pieces have height $1$.

When a piece $P$ arrives, we determine the height class $h$ of $P$.
We extend the non-horizontal segments of $P$ until we obtain a horizontal parallelogram of height exactly $2^{-h}$, and we denote the resulting piece $P'$ so that $P\subset P'$.
We then define the \emph{extended width} of $P$ as $\pextwidth(P)\mydef \pwidth(P')$.
Note that since $2^{-h-1}< \pheight (P)$, we have $\pheight (P')<2\pheight (P)$ and $\pextwidth(P)=\pwidth(P')<2\pwidth(P)$.
We then pack $P'$ (including $P$) into a minimum suitable box that has already been allocated, if possible, and otherwise allocate a new basic box.

We stack the basic boxes of different height classes if possible.
When a new basic box of height class $h$ is created, we stack it on the leftmost pile of basic boxes that has room for it.
This choice implies that there can be at most one stack of basic boxes that is less than half full. 

\begin{lemma}\label{lemma:boundedWidth}
Suppose that \onlinepacker\ 
has packed $n$ horizontal parallelograms of width at most $1/2$ and total area more than $2$.
The resulting packing has density $\Omega(n^{1-\log 3}/\log n)$.
\end{lemma}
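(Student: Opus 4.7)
The plan is to apply the analysis of Lemma~\ref{lemma:oneclass} separately within each height class and then glue the bounds together using the stacking rule for basic boxes.

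Fix a height class $h$. Its box type tree is simply a vertical rescaling (by a factor $2^{-h}$) of the tree used in the previous subsection, and a piece of width at most $1/2$ has extended width at most $1$, so Lemmas~\ref{lemma:containSegment}--\ref{lemma:oneclass} transfer verbatim to class~$h$. Extracting the intermediate estimate inside the proof of Lemma~\ref{lemma:oneclass} (the bound on $A$ in terms of $\Sigma + F$, before the final step that uses $\Sigma \geq 1$), one gets for each class $h$ containing $n_h$ pieces of total area $\Sigma_h$ packed into basic boxes of total area $A_h$:
\begin{equation*}
A_h \;\leq\; \frac{\Sigma_h + O\bigl(2^{-h} \log n_h\bigr)}{n_h^{1 - \log 3}}.
\end{equation*}
The $2^{-h}$ factor appears because a class-$h$ basic box has area $2^{1-h}$ instead of $2$, so the pieces that Lemma~\ref{lemma:oneclass} conceptually adds to eliminate near-empty boxes are scaled accordingly.

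Summing over all $h$ and using $n_h \leq n$ together with the geometric-series bound $\sum_{h \geq 0} 2^{-h} \leq 2$ gives $\sum_h A_h \leq n^{\log 3 - 1}\bigl(\Sigma + O(\log n)\bigr)$, and since $\Sigma > 2$ this is $O\bigl(n^{\log 3 - 1}\, \Sigma \log n\bigr)$. Translating from basic-box area to strip area is then immediate: the stacking rule ensures that at most one pile of basic boxes has total height below $1/2$, so if $S$ piles are used the strip occupies width $2S$ while the basic boxes together have area at least $S - 1$. Hence the occupied strip area~$A$ satisfies $A \leq 2\sum_h A_h + 2 = O\bigl(n^{\log 3 - 1}\, \Sigma \log n\bigr)$, yielding density $\Sigma/A = \Omega\bigl(n^{1 - \log 3}/\log n\bigr)$.

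The main obstacle I anticipate is tracking how the near-empty correction $O(\log n)$ scales across height classes: one must retain the $2^{-h}$ factor coming from the basic-box size in each class so that the correction terms telescope to $O(\log n)$ overall, rather than blowing up with the (potentially large) number of classes that happen to contain pieces. The hypothesis $\Sigma > 2$ is then exactly what is needed to absorb this additive $O(\log n)$ into a multiplicative $O(\log n)$ factor on $\Sigma$.
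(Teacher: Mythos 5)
Your proof is correct and follows the same overall strategy as the paper's: work height class by height class, invoke the single-class analysis after a vertical rescaling, and then use the stacking rule (at most one less-than-half-full pile) to convert a bound on base-box area into a bound on occupied strip area. The one place where you depart is in how Lemma~\ref{lemma:oneclass} is applied. The paper makes the lemma's hypothesis (scaled class area $\geq 1$) hold by feeding an extra $1 \times 2^{-h}$ rectangle to every non-empty height class, at a total added cost $F < 2$, and then applies Lemma~\ref{lemma:oneclass} as a black box to the expanded packing; you instead re-open the proof of that lemma and extract the intermediate bound $A_h \leq \bigl(\Sigma_h + F_h\bigr)\cdot O\bigl(n_h^{\log 3 - 1}\bigr)$ with $F_h = O(2^{-h}\log n_h)$, which holds without the $\Sigma \geq 1$ hypothesis, and sum it directly. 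Both routes are valid and give the same bound; yours avoids perturbing the input stream and makes the $2^{-h}$ scaling of the near-empty correction explicit (which is the right thing to track, and you track it correctly), at the mild cost of not being a modular invocation of the stated lemma. The geometric-series summation, the absorption of the additive $O(\log n)$ into the multiplicative $O(\log n)$ using $\Sigma > 2$, and the final translation $A \leq 2\sum_h A_h + 2$ all match the paper's reasoning.
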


\begin{proof}
Let $U$ be the union of the base boxes (across all height classes), and let $A$ be the area of the part of the strip occupied by the packing.
Since the total area of pieces is more than $2$, we also have $\area(U)\geq2$.
Suppose first it holds that the density in $U$ is at least $\Omega(n^{1-\log 3}/\log n)$.
We have that $\area(A)\leq 2\area(U)+2$, since there is at most one stack of base boxes of height less than $1/2$ in the packing.
Since $2<U$, we then also have $\area(A)<3\area(U)$.
We can therefore also conclude that the density of the occupied part of the strip is $\Omega(n^{1-\log 3}/\log n)$.

We now prove the density bound in $U$.
Let $S$ be the total area of the pieces.
For each height class $h$ for which there are some pieces, we feed the algorithm with a rectangle of size $1\times 2^{-h}$.
Let $F$ be the total area of these pieces, and we have $F<\sum_{h=0}^\infty 2^{-h} = 2$.
We therefore have $2S\geq S+F$.
Let $U'$ be the union of the base boxes in this expanded packing.
We then have that the original density is
\[
\frac{S}{\area(U)}\geq \frac{S+F}{2\area(U)}\geq \frac{S+F}{2\area(U')}.
\]

Since the area of pieces in each non-empty height class $h$ is at least $2^{-h}$, we can now apply Lemma~\ref{lemma:oneclass} (by scaling the $y$-coordinates by a power of $2$, we obtain a packing of parallelograms of height $1$).
Let $n_h$ be the number of pieces in height class $h$.
We get that the density in the base boxes of height $2^{-h}$ is $\Omega(n_h^{1-\log 3}/\log n_h)=\Omega(n^{1-\log 3}/\log n)$, so this is a bound on the density in all of $U'$.
Hence we have $\frac{S}{\area U}=\Omega(n^{1-\log 3}/\log n)$, and this concludes the proof.
\end{proof}

\subsubsection{Algorithm for all horizontal parallelograms}\label{sec:allHorizontal}

We now describe an extension of \onlinepacker\ that handles horizontal parallelograms of arbitrary height and arbitrary width.
We partition the parallelograms into \emph{width classes}.
Pieces of width class $i$ are packed in base boxes of width $2^i$.
Consider a parallelogram $P$.
If $\pextwidth(P)\leq 1$, then the width class of $P$ is $i=1$.
Otherwise, $P$ belongs to the width class $i$ such that $2^{i-1} < 2\pextwidth(P) \leq 2^i$.
We handle each width class independently, so that pieces from each class are packed in stacks as described in Section~\ref{sec:smallWidth}.
In particular, each width class $i$ is subdivided into height classes, and we allocate in the strip rectangles of size $2^i\times 1$, where we can place a stack of base boxes of sizes $2^i\times 2^{-h}$, for various values of $h\geq 0$. 

When a new piece $P$ arrives, we determine its width class $w$ and height class $h$. Then, we pack it into a rectangle of size $2^w\times 1$, in a base box of size $2^w\times 2^{-h}$ that has room for it. If no rectangle of that size has room for $P$, a new rectangle of size $2^w \times 1$ is created and placed as far left in the strip as possible.

\begin{lemma} \label{lem:better_than_naive_all_classes}
The competitive ratio of \onlinepacker\ when applied to $n$ horizontal parallelograms of arbitrary width is $O(n^{\log 3-1}\log n)$.
\end{lemma}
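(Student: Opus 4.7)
The plan is to decompose the packing by width class and apply \cref{lemma:boundedWidth} to each class. Because \onlinepacker\ treats width classes independently and places each new rectangle as far left as possible in the strip, all allocated rectangles are pairwise interior-disjoint, and the total width of the packing equals $\sum_i A_i$, where $A_i = k_i\cdot 2^i$ and $k_i$ is the number of rectangles of width $2^i$ created by the algorithm. Write $\Sigma_i$ and $n_i$ for the total area and number of pieces in class $i$, and let $W\mydef\max_j\pwidth(P_j)$. The normalization $\pwidth(P_1)=1$ gives $W\geq 1$ and $\OPT\geq\max(\Sigma,W)\geq 1$.

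The crux is a per-class bound
\[
A_i \;\leq\; O\bigl(\Sigma_i + 2^i\bigr)\cdot n^{\log 3-1}\log n.
\]
Horizontally rescaling the strip by the factor $2^{-i}$ maps class $i$ into the setting of \cref{lemma:boundedWidth}: a unit-height strip packed with horizontal parallelograms of width at most $1/2$. If $\Sigma_i>2^{i+1}$, the scaled total area exceeds $2$, \cref{lemma:boundedWidth} applies directly, and using $n_i^{1-\log 3}/\log n_i\geq n^{1-\log 3}/\log n$ (since $1-\log 3<0$) yields $A_i\leq O(\Sigma_i)\cdot n^{\log 3-1}\log n$. If $\Sigma_i\leq 2^{i+1}$, I would pad the input by appending a constant number of fake pieces in class $i$---say four copies of a $2^{i-1}\times 1$ rectangle, each of zero shear---so that the class-$i$ area exceeds $2^{i+1}$. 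Since \onlinepacker\ is online and processes width classes independently, its behaviour on the original stream is unaffected and $A_i$ can only grow, so the padded bound $O(2^i)\cdot (n+O(1))^{\log 3-1}\log(n+O(1)) = O(2^i)\cdot n^{\log 3-1}\log n$ obtained from \cref{lemma:boundedWidth} upper bounds the original $A_i$.

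Summing the per-class bounds gives $\ALG=\sum_i A_i\leq O\bigl(\Sigma+\sum_{i\text{ non-empty}}2^i\bigr)\cdot n^{\log 3-1}\log n$. Let $i^*$ be the largest non-empty class. If $i^*\geq 2$, some piece in class $i^*$ satisfies $\pextwidth>2^{i^*-2}$, and since $\pextwidth\leq 2\pwidth$ (the shear grows by a factor less than $2$ when the height is extended to $2^{-h}$) we obtain $W>2^{i^*-3}$ and $\sum_{i\leq i^*}2^i<2^{i^*+1}\leq 16W$. If $i^*=1$, the sum equals $2\leq 2W$. In either case $\sum 2^i\leq O(W)\leq O(\OPT)$, and together with $\Sigma\leq\OPT$ this yields $\ALG\leq O(\OPT)\cdot n^{\log 3-1}\log n$, i.e., competitive ratio $O(n^{\log 3-1}\log n)$.

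The main obstacle is the small-$\Sigma_i$ case, where \cref{lemma:boundedWidth} cannot be invoked directly; the fix is the padding argument above, which crucially uses the monotonicity fact that appending pieces to the stream cannot shrink any $A_j$ and, because classes are handled independently, affects only the class of the appended pieces. A secondary subtlety is the bound $2^{i^*}=O(W)$, which relies on the precise width-class boundaries defined in \cref{sec:allHorizontal}, together with $\pextwidth\leq 2\pwidth$.
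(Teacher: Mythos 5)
Your proposal is correct and follows essentially the same strategy as the paper: pad each nonempty width class with $O(1)$ height-$1$ rectangles so that \cref{lemma:boundedWidth} applies after rescaling, use the observation that \onlinepacker\ handles width classes independently and adding pieces only increases the space used, and absorb the padding area $\sum 2^i = O(2^{i^*}) = O(W) \leq O(\OPT)$ via the largest width class. The paper organizes the bookkeeping slightly differently---it pads all classes at once and compares $\ALG^+$ against $\OPT^+ < 17\,\OPT$, rather than deriving a per-class bound on $A_i$ and summing---but the underlying ideas and the reduction to \cref{lemma:boundedWidth} are the same.
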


\begin{proof}
After \onlinepacker\ has packed the $n$ parallelograms, we denote the cost of the produced solution as $\ALG$ and the optimal offline solution as $\OPT$.
We then feed the algorithm with four rectangles of size $2^{i-2}\times 1$ for each width class $i$ for which there are some pieces.
We denote the cost of the resulting packing as $\ALG^+$ and the cost of the optimal offline solution as $\OPT^+$.
Suppose that the largest width class is class $k$.
We now claim that some piece $P$ has extended width more than $2^{k-2}$.
If $k>1$ this holds for any piece in width class $k$, and if $k=1$, it follows since we assume that the first piece presented to the algorithm has width $1$.
We then have $\OPT\geq\pwidth(P)>\pextwidth(P)/2> 2^{k-3}$.
Since the added pieces are rectangles of height $1$, the optimal packing is similar to the optimal packing for the original instance with the extra pieces added in the end.
We therefore have
\[
\OPT^+\leq \OPT+\sum_{i=1}^k 2^i<\OPT+2^{k+1}< 17\cdot \OPT.
\] 

Let $n_i$ and $S_i$ be the number and total area of the pieces in width class $i$, respectively.
Note that $S_i>2^i$, so that we can apply Lemma~\ref{lemma:boundedWidth} to each width class (under proper scaling).
The bound on the competitive ratio becomes
\[
\frac{\ALG}{\OPT} \leq \frac{\ALG^+}{\OPT^+/17}
\leq \frac{17\sum_{i=1}^k n_i^{\log 3-1}\log n_i\cdot S_i}{\OPT^+}\leq
\frac{17n^{\log 3-1}\log n\cdot S}{S}
=O(n^{\log 3-1}\log n).
\]
\end{proof}

\subsubsection{Algorithm for all convex pieces}

\begin{figure}
\centering
\includegraphics[page=9]{ConvexPacking}
\caption{Left: The horizontal parallelogram $P'$ has area at most twice that of the piece $P$.
Right: We have $\pwidth(P')\leq \pwidth(P_b\cup P\cup P_t)\leq 3\pwidth(P)$.}
\label{fig:makeParallelogram}
\end{figure}

We now describe the extension of \onlinepacker\ to handle arbitrary convex pieces.
When a new piece $P$ arrives, we find a horizontal parallelogram $P'$ such that $P\subset P'$, $\area (P')\leq 2\area(P)$ and $\pwidth(P') \leq 2 \pwidth(P)$; then we apply \onlinepacker\ to the parallelogram $P'$ (with $P$ inside).

We define $P'$ as follows; see \cref{fig:makeParallelogram} (left). Let $\ell_b$ and $\ell_t$ be the lower and upper horizontal tangent to $P$, respectively, and let $c_b\in P\cap \ell_b$ and $c_t\in P\cap \ell_t$.
Let $\ell_l$ and $\ell_r$ be the left and right tangent to $P$ parallel to the segment $c_bc_t$, respectively.
We then define $P'$ to be the horizontal parallelogram enclosed by the lines $\ell_b,\ell_t,\ell_l,\ell_r$.

It is straightforward to check that $\area (P')\leq 2\area (P)$, because $P$ is convex.
Now we prove that $\pwidth(P')\leq 3\pwidth(P)$; see \cref{fig:makeParallelogram} (right).
Define the translations $P_t\mydef P+(c_t-c_b)$ and $P_b\mydef P-(c_t-c_b)$.
Since $P$ shares points $c_t$ and $c_b$ with $P_t$ and $P_b$, respectively, we have $\pwidth(P_b\cup P\cup P_t)\leq 3 \pwidth(P)$.
Since $P_t$ contain points above $\ell_t$ on both $\ell_l$ and $\ell_r$ and $P_b$ contain points below $\ell_b$ on both $\ell_l$ and $\ell_r$, we have $\pwidth(P_b\cup P\cup P_t)\geq\pwidth(P')$, and we conclude $\pwidth(P')\leq 3 \pwidth(P)$.

In the proof of \Cref{lem:better_than_naive_all_classes} we only bound $\OPT$ using the width of the widest piece and the total area $S$.
With respect to both width and area, the value for an arbitrary convex pieces $P$ is at least $1/2$ of that of the containing parallelogram $P'$.
Therefore, the bound on the competitive ratio carries over up to constant factors when the algorithm is applied to the parallelograms $P'$, as stated in the following.

\PackingUpperBound*

\section{Constant-factor approximations for offline packing}\label{sec:offlinepacking}

In this section we show how to obtain offline approximation algorithms for the packing problems of \cref{thm:PackingMasterThm}.

\offline*

\begin{proof}
Alt, de Berg, and Knauer~\cite{alt_convexOffline_JoCG,alt_convexOffline_corr} presented an algorithm that packs any set $\mathcal P$ of convex polygons (with $n$ vertices in total, using $O(n \log n)$ time) into an axis-aligned rectangular container~$B$ such that area$(B)\leq 23.78\cdot \textsc{opt}$, where \textsc{opt} is the minimum area of any axis-aligned rectangular container for $\mathcal P$.
As an intermediate step, they obtain a collection of rectangular mini-containers that together contain all objects of $\mathcal P$.
Let $w_{\max}$ and $h_{\max}$ denote the maximum width and height among all objects of~$\mathcal P$, respectively.
For some fixed constants $c>0,\alpha\in(0,1)$, each mini-container has width $(c+1)w_{\max}$ and a height $h_i$ where $h_i\mydef \alpha^i h_{\max}$ for some appropriate $i$.
The total area $A_C$ of all mini-containers  can be bounded by 
\begin{equation}\label{eq:offlineeq}
A_C\leq
(1+\nicefrac{1}{c})\cdot \left[\frac{2}{\alpha}\cdot \area(\mathcal P)+ \frac{c+\nicefrac{2}{\alpha}}{1-\alpha} \cdot h_{\max} w_{\max}\right].
\end{equation}
In order to prove (a), (b), (c), and (d), we repeatedly make use of the mini-containers and this inequality.\medskip

We first consider strip packing and prove (a).
When packing a set of convex polygons $\mathcal P$ into a horizontal strip of height 1, the minimal width \textsc{opt-w} is at least $\max\{w_{\max}, \area(\mathcal P)\}$.
Therefore,

\[
A_C\leq
(1+\nicefrac{1}{c})\cdot \left[\frac{2}{\alpha}+ \frac{c+\nicefrac{2}{\alpha}}{1-\alpha}\right]\cdot\textsc{opt-w}.
\]

We group the mini-containers greedily into stacks of height at most 1, which we place in the strip.
Note that all but possibly one stack have a height of  at least $\nicefrac{1}{2}$; otherwise two of these should have been put together. 
Therefore, the number of stacks is at most $\frac{2A_C}{(c+1)w_{\max}}+1$.
Together with $\alpha\mydef 0.49$ and $c\mydef 2.2$, this translates into a width of 

\[
2A_C+(c+1)w_{\max}\leq
\left(2\cdot (1+\nicefrac{1}{c})\cdot \left[\frac{2}{\alpha}+ \frac{c+\nicefrac{2}{\alpha}}{1-\alpha}\right]+c+1\right)\cdot\textsc{opt-w}<51\cdot\textsc{opt-w}.
\]
\medskip

We now turn our attention to (d), and our findings will later be used to prove (b).
Let $S$ denote the unit square in which we want to pack a given set $\mathcal P$ of convex polygons, each of diameter at most $\delta$.
We choose $\alpha\mydef 1/2$ and choose the width of the mini-containers to be 1, i.e., $(c+1)w_{\max}=1$.\footnote{In fact, we can choose $\alpha$ depending on $\delta$ to get a denser packing.
It turns out that $\alpha\mydef 1-\frac{\sqrt{3\delta^3 - 4\delta^2 + \delta}}{1-\delta}$ is the optimal value, but we stick to $\alpha\mydef 1/2$ to keep the analysis simpler.}
We stack all mini-containers on top of each other.
Suppose 
that the total height of the mini-containers exceeds 1.
We prove that the total area of pieces is then more than the constant $\rho\mydef (1-5\delta)(1-2\delta)/4$.
For $\delta\mydef 1/10$, we get $\rho\mydef 1/10$, and the claim in the theorem follows.

We call a mini-container \emph{full} if the bounding box of all contained pieces has width more than $1-\delta$; otherwise there is room for a further piece.
A mini-container that is not full is called \emph{near-empty}.
We can pack the pieces such that in each height class, there is at most one near-empty mini-container.
Therefore, the total height of near-empty mini-containers is at most $\sum_i h_i=h_{\max}\sum_i \alpha^i \leq \delta /(1-\alpha)=2\delta$.
Since the mini-containers (full and near-empty) have a total height of more than $1$, the full mini-containers in $S$ have a total height of more than $1-2\delta$, and a total area of more than $1-2\delta$.

Let $B$ be the bounding box of a full mini-container of height $h_i$ (and area $h_i$) and denote the set of contained polygons by $\mathcal P'$. By~\cite{alt_convexOffline_JoCG,alt_convexOffline_corr}, we obtain
\[(1-\delta)h_i
\leq
\area(B)
\leq
 2/\alpha\cdot \left(\Area{\mathcal P'}+h_i w_{\max}\big)
 \leq 
 4\big(\Area{\mathcal P'}+h_i \delta\right).\]
Consequently, 
$
 \Area{\mathcal P'}
 \geq
 (1-5\delta)h_i/4
$ and thus the density in any full mini-container is at least $(1-5\delta)/4$.
As the total area of the full mini-containers in $S$ is more than $1-2\delta$, the total area packed into $S$ is more than $\rho\mydef (1-5\delta)(1-2\delta)/4$.

We use the described algorithm for square packing to prove statement (b).
We can guarantee a density of at least $\rho$ in all but one bin.
It follows that the number of bins is at most $\area(\mathcal P)/\rho+1$.
Clearly, the number of bins in the optimal solution is at least $\lceil \area(\mathcal P)\rceil$.
Therefore, the approximation ratio is at most
\[
\frac{\area(\mathcal P)/\rho+1}{\lceil \area(\mathcal P)\rceil}\leq 1/\rho+1/\lceil \area(\mathcal P)\rceil\leq 1/\rho + 1.
\]
Using $\delta\mydef 1/10$ yields the ratio $11$ as stated in the theorem.

We finally prove (c).
In order to obtain a bounding box with small perimeter, 
we consider the mini-containers in a greedy fashion (from largest to smallest height) and pack them on top of each other into stacks of height at most $H\mydef \sqrt{A_C}+h_{\max} $ and width $(c+1)w_{\max}$. Clearly the height of each stack except possibly the last one is at least $\sqrt{A_C}$.
Consequently,  the number of stacks is at most $\frac{\sqrt{A_C}}{(c+1)w_{\max}}+1$.
Hence, we obtain a bounding box with perimeter of at most $4\sqrt{A_C}+2 h_{\max}+2(c+1) w_{\max}$.
Note that the optimal perimeter \textsc{opt-p} is at least $\max\{2 w_{\max}+2h_{\max}, 4\sqrt{\area(\mathcal P)}\}$.
Using the AM--GM inequality, we get $h_{\max}w_{\max}\leq \textsc{opt-p}^2/16$, and we also have $\area(\mathcal P)\leq \textsc{opt-p}^2/16$.
We then get from \cref{eq:offlineeq} that
 \[
 A_C\leq
(1+\nicefrac{1}{c})\cdot \left[\frac{2}{\alpha}\cdot \frac 1{16}+ \frac{c+\nicefrac{2}{\alpha}}{1-\alpha} \cdot \frac 1{16}\right]\textsc{opt-p}^2.
 \]

 Finally, choosing $\alpha\mydef 0.5$ and $c\mydef 0.975$, we obtain
 \[
4\sqrt{A_C}+2 h_{\max}+2(c+1) w_{\max}\leq
\left(
4\sqrt{(1+\nicefrac{1}{c})\cdot \left[\frac{2}{\alpha}\cdot \frac 1{16}+ \frac{c+\nicefrac{2}{\alpha}}{1-\alpha} \cdot \frac 1{16}\right]}+1+c
\right)\cdot\textsc{opt-p}
 < 7.3\cdot \textsc{opt-p}.
 \]
 This completes the proof.
\end{proof}

\ifsa
\section*{Acknowledgements}

We thank Shyam Narayanan for improving the upper bound algorithm in \cref{thm:SortingUpperBoundBaby} to the asymptotically tight upper bound of $O(\sqrt{n})$.

We highly appreciate the BARC espresso machine, the comfy couches in the Creative Room, and the cucumbers.
We also thank Linda's travel grant, Denmark for opening the borders in the right moment after a COVID-19 lockdown, and the lenient train officer who trusted the great authority of Mikkel's letter of invitation, allowing for travelling to Copenhagen and making this work possible. We finally thank Rosie Cohen for introducing us to the beautiful American musical with strong connections to packing algorithms.

\fi

\newpage
\printbibliography

@Inbook{sgall1998online,
author="Sgall, Ji{\v{r}}{\'i}",
editor="Fiat, Amos
and Woeginger, Gerhard J.",
title="On-line scheduling",
bookTitle="Online Algorithms: The State of the Art",
year="1998",
pages="196--231",
doi="10.1007/BFb0029570"
}

@Inbook{pruhs2004online,
  title={Online scheduling},
  author={Pruhs, Kirk and Sgall, Ji\v{r}\'{i} and Torng, Eric},
  booktitle={Handbook of Scheduling: Algorithms, Models, and Performance Analysis},
  editor = {Joseph Y-T. Leung},
  year={2004}
}

@incollection{graham1979optimization,
title = {Optimization and Approximation in Deterministic Sequencing and Scheduling: a Survey},
editor = {P.L. Hammer and E.L. Johnson and B.H. Korte},
series = {Annals of Discrete Mathematics},
volume = {5},
pages = {287-326},
year = {1979},
booktitle = {Discrete Optimization II},
doi = {10.1016/S0167-5060(08)70356-X},
author = {R.L. Graham and E.L. Lawler and J.K. Lenstra and A.H.G.Rinnooy Kan}
}

@Inbook{chen1998review,
author="Chen, Bo
and Potts, Chris N.
and Woeginger, Gerhard J.",
editor="Du, Ding-Zhu
and Pardalos, Panos M.",
title="A Review of Machine Scheduling: Complexity, Algorithms and Approximability",
bookTitle="Handbook of Combinatorial Optimization: Volume1--3",
year="1998",
pages="1493--1641",
doi="10.1007/978-1-4613-0303-9_25"
}

@incollection{epstein2018multidimensional,
  title={Multidimensional packing problems},
  author={Epstein, Leah and van Stee, Rob},
  booktitle={Handbook of Approximation Algorithms and Metaheuristics},
  edition={Second},
  pages={553--570},
  year={2018},
  editor={Teofilo F. Gonzalez},
  publisher={Chapman and Hall/CRC},
  doi={10.1201/9781351236423-31}
}

@article{CHRISTENSEN201763,
title = "Approximation and online algorithms for multidimensional bin packing: A survey",
journal = "Computer Science Review",
volume = "24",
pages = "63--79",
year = "2017",
issn = "1574-0137",
doi = "10.1016/j.cosrev.2016.12.001",
author = "Henrik I. Christensen and Arindam Khan and Sebastian Pokutta and Prasad Tetali"
}

@article{DBLP:journals/sigact/Stee12,
  author    = {Rob van Stee},
  title     = {{SIGACT} news online algorithms column 20: the power of harmony},
  journal   = {{SIGACT} News},
  volume    = {43},
  number    = {2},
  pages     = {127--136},
  year      = {2012},
  doi       = {10.1145/2261417.2261440},
  timestamp = {Tue, 06 Nov 2018 12:51:30 +0100},
  biburl    = {https://dblp.org/rec/journals/sigact/Stee12.bib},
  bibsource = {dblp computer science bibliography, https://dblp.org}
}

@article{DBLP:journals/sigact/Stee15,
  author    = {Rob van Stee},
  title     = {{SIGACT} News Online Algorithms Column 26: Bin packing in multiple
               dimensions},
  journal   = {{SIGACT} News},
  volume    = {46},
  number    = {2},
  pages     = {105--112},
  year      = {2015},
  doi       = {10.1145/2789149.2789167},
  timestamp = {Tue, 06 Nov 2018 12:51:30 +0100},
  biburl    = {https://dblp.org/rec/journals/sigact/Stee15.bib},
  bibsource = {dblp computer science bibliography, https://dblp.org}
}

@incollection{Csirik1998,
author="Csirik, J{\'a}nos
and Woeginger, Gerhard J.",
editor="Fiat, Amos
and Woeginger, Gerhard J.",
title="On-line packing and covering problems",
bookTitle="Online Algorithms: The State of the Art",
year="1998",
publisher="Springer",
pages="147--177",
isbn="978-3-540-68311-7",
doi="10.1007/BFb0029568"
}

@article{bjelde2021tight,
author = {Bjelde, Antje and Hackfeld, Jan and Disser, Yann and Hansknecht, Christoph and Lipmann, Maarten and Mei\ss{}ner, Julie and Schl\"{O}ter, Miriam and Schewior, Kevin and Stougie, Leen},
title = {Tight Bounds for Online TSP on the Line},
year = {2021},
volume = {17},
number = {1},
doi = {10.1145/3422362},
journal = {ACM Transactions on Algorithms}
}

@article{lassak1991line,
  title={An on-line potato-sack theorem},
  author={Lassak, Marek and Zhang, Jixian},
  journal={Discrete \& Computational Geometry},
  volume={6},
  number={1},
  pages={1--7},
  year={1991},
  doi = {10.1007/BF02574670}
}

@article{januszewski1997line,
  title={On-line packing sequences of cubes in the unit cube},
  author={Januszewski, Janusz and Lassak, Marek},
  journal={Geometriae Dedicata},
  volume={67},
  number={3},
  pages={285--293},
  year={1997},
  publisher={Springer},
  doi = {10.1023/A:1004953109743}
}

@inproceedings{Brubach2014improved,
  author    = {Brian Brubach},
  title     = {Improved Bound for Online Square-into-Square Packing},
  booktitle = {Approximation and Online Algorithms - 12th International Workshop 
               ({WAOA} 2014)},
  pages     = {47--58},
  year      = {2014},
  doi       = {10.1007/978-3-319-18263-6_5}
}

@article{han2008online,
  title={Online removable square packing},
  author={Han, Xin and Iwama, Kazuo and Zhang, Guochuan},
  journal={Theory of Computing Systems},
  volume={43},
  number={1},
  pages={38--55},
  year={2008},
  publisher={Springer},
  doi={10.1007/s00224-007-9039-0}
}

@InProceedings{han2007strip,
author="Han, Xin
and Iwama, Kazuo
and Ye, Deshi
and Zhang, Guochuan",
title="Strip Packing vs. Bin Packing",
booktitle="Algorithmic Aspects in Information and Management (AAIM 2007)",
year="2007",
pages="358--367",
doi={10.1007/978-3-540-72870-2_34}
}

@article{fekete2017online,
  author    = {S{\'{a}}ndor P. Fekete and
               Hella{-}Franziska Hoffmann},
  title     = {Online Square-into-Square Packing},
  journal   = {Algorithmica},
  volume    = {77},
  number    = {3},
  pages     = {867--901},
  year      = {2017},
  doi       = {10.1007/s00453-016-0114-2}
}

@inproceedings{alt2019packing,
  title={Packing {2D} Disks into a {3D} Container},
  author={Alt, Helmut and Cheong, Otfried and Park, Ji-won and Scharf, Nadja},
  booktitle={International Workshop on Algorithms and Computation (WALCOM 2019)},
  pages={369--380},
  year={2019},
  doi={10.1007/978-3-030-10564-8_29}
}

@inproceedings{moon1967some,
  title={Some packing and covering theorems},
  author={Moon, J.W. and Moser, L.},
  booktitle={Colloquium Mathematicum},
  volume={17},
  number={1},
  pages={103--110},
  year={1967},
  doi={10.4064/cm-17-1-103-110}
}

@article{kosinski1957proof,
  title={A proof of an Auerbach-Banach-Mazur-Ulam theorem on convex bodies},
  author={Kosi{\'n}ski, A.},
  booktitle={Colloquium Mathematicum},
  volume={4},
  pages={216--218},
  year={1957},
  doi={10.4064/cm-4-2-216-218}
}

@misc{mauldin2020scottish,
  title={The Scottish Book: Mathematics from the Scottish Caf{\'e}, with Selected Problems from the New Scottish Book},
  author={R. Daniel Mauldin},
  year={2015},
  publisher={Springer},
  doi={10.1007/978-3-319-22897-6}
}

@article{alt_convexOffline_JoCG,
	title={Approximating minimum-area rectangular and convex containers for packing convex polygons},
	author={Alt, Helmut and de Berg, Mark and Knauer, Christian},
	journal={JoCG},
	pages={1--10},
	year={2017},
	volume={8},
	number={1},
	doi={10.20382/jocg.v8i1a1},
}

@article{alt_convexOffline_corr,
	title={Corrigendum to: Approximating minimum-area rectangular and convex containers for packing convex polygons},
	author={Alt, Helmut and de Berg, Mark and Knauer, Christian},
	journal={JoCG},
	pages={653--655},
	year={2020},
	volume={11},
	number={1},
	doi={10.20382/jocg.v11i1a26},
}

@incollection{Fiat1998,
author="Fiat, Amos
and Woeginger, Gerhard J.",
editor="Fiat, Amos
and Woeginger, Gerhard J.",
title="Competitive analysis of algorithms",
bookTitle="Online Algorithms: The State of the Art",
year="1998",
pages="1--12",
doi="10.1007/BFb0029562"
}

@article{MILENKOVIC19993,
title = "Rotational polygon containment and minimum enclosure using only robust {2D} constructions",
journal = "Computational Geometry",
volume = "13",
number = "1",
pages = "3--19",
year = "1999",
issn = "0925-7721",
doi = "10.1016/S0925-7721(99)00006-1",
author = "Victor J. Milenkovic",
keywords = "Layout, Packing or nesting of irregular polygons, Containment, Minimum enclosure, Robust geometry, Geometric rounding",
}

@article{milenkovich1999translational,
author = {Milenkovic, Victor J. and Daniels, Karen},
title = {Translational polygon containment and minimal enclosure using mathematical programming},
journal = {International Transactions in Operational Research},
volume = {6},
number = {5},
pages = {525-554},
keywords = {Layout, Packing, or nesting of irregular polygons, Containment, Minimum enclosure, Linear programming},
doi = {10.1111/j.1475-3995.1999.tb00171.x},
year = {1999}
}

@inproceedings{DBLP:conf/stoc/Milenkovic96,
  author    = {Victor Milenkovic},
  title     = {Translational Polygon Containment and Minimal Enclosure using Linear
               Programming Based Restriction},
  booktitle = {28th Annual {ACM} Symposium on the Theory
               of Computing (STOC 1996)},
  pages     = {109--118},
  year      = {1996},
  doi       = {10.1145/237814.237840},
  timestamp = {Tue, 06 Nov 2018 11:07:06 +0100}
}

@article{ahn2012aligning,
  title={Aligning two convex figures to minimize area or perimeter},
  author={Ahn, Hee-Kap and Cheong, Otfried},
  journal={Algorithmica},
  volume={62},
  number={1-2},
  pages={464--479},
  year={2012},
  doi       = {10.1007/s00453-010-9466-1}
}

@InProceedings{althurtado,
author="Alt, Helmut
and Hurtado, Ferran",
title={Packing Convex Polygons into Rectangular Boxes},
booktitle="18th Japanese Conference on Discrete and Computational Geometry (JCDCGG 2000)",
doi={10.1007/3-540-47738-1_5},
year="2000",
pages="67--80"
}

@article{leewoo,
author = { Hyun-Chan   Lee  and  Tony C.   Woo },
title = {Determining in Linear Time the Minimum Area Convex Hull of Two Polygons},
journal = {IIE Transactions},
volume = {20},
number = {4},
pages = {338--345},
year  = {1988},
doi = {10.1080/07408178808966189},
}

@article{PARK20161,
title = "Bundling three convex polygons to minimize area or perimeter",
journal = "Computational Geometry",
volume = "51",
pages = "1--14",
year = "2016",
issn = "0925-7721",
doi = "10.1016/j.comgeo.2015.10.003",
author = "Dongwoo Park and Sang Won Bae and Helmut Alt and Hee-Kap Ahn"
}

@article{DBLP:journals/eatcs/Alt16,
  author    = {Helmut Alt},
  title     = {Computational Aspects of Packing Problems},
  journal   = {Bulletin of the {EATCS}},
  volume    = {118},
  year      = {2016},
  url       = {http://eatcs.org/beatcs/index.php/beatcs/article/view/390}
}

@incollection{lubachevsky2003dense,
  title={Dense packings of congruent circles in rectangles with a variable aspect ratio},
  author={Lubachevsky, Boris D. and Graham, Ronald L.},
  booktitle={Discrete and Computational Geometry: The Goodman-Pollack Festschrift},
  editor={Boris Aronov and Saugata Basu and János Pach and Micha Sharir},
  pages={633--650},
  year={2003},
  doi={10.1007/978-3-642-55566-4_28}
}

@article{SPECHT201358,
title = "High density packings of equal circles in rectangles with variable aspect ratio",
journal = "Computers \& Operations Research",
volume = "40",
number = "1",
pages = "58 --69",
year = "2013",
issn = "0305-0548",
doi = "10.1016/j.cor.2012.05.011",
author = "E. Specht"
}

@article{LUBACHEVSKY20091947,
title = "Minimum perimeter rectangles that enclose congruent non-overlapping circles",
journal = "Discrete Mathematics",
volume = "309",
number = "8",
pages = "1947--1962",
year = "2009",
issn = "0012-365X",
doi = "10.1016/j.disc.2008.03.017",
author = "Boris D. Lubachevsky and Ronald L. Graham",
keywords = "Disk packings, Rectangle, Container design, Hexagonal, Square grid"
}

@article{erdos1975packing,
  title={On packing squares with equal squares},
  author={Erd\H{o}s, Paul and Graham, Ron},
  journal={Journal of Combinatorial Theory, Series A},
  volume={19},
  number={1},
  pages={119--123},
  year={1975},
  publisher={Elsevier},
  doi={10.1016/0097-3165(75)90099-0}
}

@article{chung2019efficient,
  title={Efficient packings of unit squares in a large square},
  author={Chung, Fan and Graham, Ron},
  journal={Discrete \& Computational Geometry},
  year={2019},
  publisher={Springer},
  doi = {10.1007/s00454-019-00088-9}
}

@inproceedings{AbrahamsenBeretta20,
  author =	{Abrahamsen, Mikkel and Beretta, Lorenzo},
  title =	{{Online Packing to Minimize Area or Perimeter}},
  booktitle =	{37th International Symposium on Computational Geometry (SoCG 2021)},
  pages =	{6:1--6:15},
  year =	{2021},
  doi =		{10.4230/LIPIcs.SoCG.2021.6}
}

@article{han2011new,
  author    = {Xin Han and
               Francis Y. L. Chin and
               Hing{-}Fung Ting and
               Guochuan Zhang and
               Yong Zhang},
  title     = {A new upper bound 2.5545 on 2D Online Bin Packing},
  journal   = {{ACM} Transactions on Algorithms},
  volume    = {7},
  number    = {4},
  pages     = {50:1--50:18},
  year      = {2011},
  doi       = {10.1145/2000807.2000818}
}

@article{Csirik1997shelf,
  author    = {J{\'{a}}nos Csirik and
               Gerhard J. Woeginger},
  title     = {Shelf Algorithms for On-Line Strip Packing},
  journal   = {Information Processing Letters},
  volume    = {63},
  number    = {4},
  pages     = {171--175},
  year      = {1997},
  doi       = {10.1016/S0020-0190(97)00120-8}
}

@article{baker1983shelf,
  title={Shelf algorithms for two-dimensional packing problems},
  author={Baker, Brenda S. and Schwarz, Jerald S.},
  journal={SIAM Journal on Computing},
  volume={12},
  number={3},
  pages={508--525},
  year={1983},
  publisher={SIAM},
  doi = {10.1137/0212033}
}

@article{YeOnlineStrip,
title={A note on online strip packing},
author={Ye, Deshi and Han, Xin and Zhang, Guochuan},
journal={Journal of Combinatorial Optimization},
volume={17},
number={4},
pages={417--423},
year={2009},
doi={10.1007/s10878-007-9125-x},
publisher={Springer}
}

@incollection{albersOnlineScheduling,
  title={Online scheduling},
  author={Albers, Susanne},
  booktitle={Introduction to scheduling},
  pages={71--98},
  year={2009},
  publisher={CRC Press},
  doi={10.1201/9781420072747-9}
}

@article{fiat1999line,
  title={On-line scheduling on a single machine: Minimizing the total completion time},
  author={Fiat, Amos and Woeginger, Gerhard J.},
  journal={Acta Informatica},
  volume={36},
  number={4},
  pages={287--293},
  year={1999},
  doi={10.1007/s002360050162}
}

@inproceedings{DBLP:conf/icalp/MerinoW20,
  author    = {Arturo I. Merino and
               Andreas Wiese},
  title     = {On the Two-Dimensional Knapsack Problem for Convex Polygons},
  booktitle = {47th International Colloquium on Automata, Languages, and Programming ({ICALP} 2020)},
  series    = {LIPIcs},
  pages     = {84:1--84:16},
  year      = {2020},
  doi       = {10.4230/LIPIcs.ICALP.2020.84},
  timestamp = {Thu, 16 Sep 2021 18:08:34 +0200},
  biburl    = {https://dblp.org/rec/conf/icalp/MerinoW20.bib},
  bibsource = {dblp computer science bibliography, https://dblp.org}
}

@inproceedings{DBLP:conf/focs/AbrahamsenMS20,
  author    = {Mikkel Abrahamsen and
               Tillmann Miltzow and
               Nadja Seiferth},
  title     = {Framework for $\exists\mathbb R$-Completeness of Two-Dimensional Packing Problems},
  booktitle = {61st {IEEE} Annual Symposium on Foundations of Computer Science ({FOCS} 2020)},
  pages     = {1014--1021},
  year      = {2020},
  doi       = {10.1109/FOCS46700.2020.00098},
  timestamp = {Mon, 03 Jan 2022 22:18:57 +0100},
  biburl    = {https://dblp.org/rec/conf/focs/AbrahamsenMS20.bib},
  bibsource = {dblp computer science bibliography, https://dblp.org}
}

@article{DBLP:journals/ijcga/AltS18,
  author    = {Helmut Alt and
               Nadja Scharf},
  title     = {Approximating Smallest Containers for Packing Three-Dimensional Convex
               Objects},
  journal   = {Int. J. Comput. Geom. Appl.},
  volume    = {28},
  number    = {2},
  pages     = {111--128},
  year      = {2018},
  doi       = {10.1142/S0218195918600026},
  timestamp = {Thu, 04 Jun 2020 19:43:22 +0200},
  biburl    = {https://dblp.org/rec/journals/ijcga/AltS18.bib},
  bibsource = {dblp computer science bibliography, https://dblp.org}
}

@article{DBLP:journals/algorithmica/Milenkovic97,
  author    = {Victor Milenkovic},
  title     = {Multiple Translational Containment. Part {II:} Exact Algorithms},
  journal   = {Algorithmica},
  volume    = {19},
  number    = {1/2},
  pages     = {183--218},
  year      = {1997},
  doi       = {10.1007/PL00014416},
  timestamp = {Wed, 17 May 2017 14:25:15 +0200},
  biburl    = {https://dblp.org/rec/journals/algorithmica/Milenkovic97.bib},
  bibsource = {dblp computer science bibliography, https://dblp.org}
}

@article{DBLP:journals/algorithmica/DanielsM97,
  author    = {Karen L. Daniels and
               Victor Milenkovic},
  title     = {Multiple Translational Containment. Part {I:} An Approximate Algorithm},
  journal   = {Algorithmica},
  volume    = {19},
  number    = {1/2},
  pages     = {148--182},
  year      = {1997},
  doi       = {10.1007/PL00014415},
  timestamp = {Wed, 17 May 2017 14:25:14 +0200},
  biburl    = {https://dblp.org/rec/journals/algorithmica/DanielsM97.bib},
  bibsource = {dblp computer science bibliography, https://dblp.org}
}

@book{dyckhoff1992cutting,
  title={Cutting and packing in production and distribution: A typology and bibliography},
  author={Dyckhoff, Harald and Finke, Ute},
  year={1992}
}

@article{sweeney1992cutting,
  title={Cutting and packing problems: a categorized, application-orientated research bibliography},
  author={Sweeney, Paul and Paternoster, Elizabeth Ridenour},
  journal={Journal of the Operational Research Society},
  volume={43},
  number={7},
  pages={691--706},
  year={1992},
  publisher={Taylor \& Francis},
doi = {10.1057/jors.1992.101}
}

@article{hopper2001review,
  title={A review of the application of meta-heuristic algorithms to {2D} strip packing problems},
  author={Hopper, Eva and Turton, Brian},
  journal={Artificial Intelligence Review},
  volume={16},
  number={4},
  pages={257--300},
  year={2001},
  publisher={Springer},
doi = {10.1023/A:1012590107280}
}

@article{leao2020irregular,
title = "Irregular packing problems: A review of mathematical models",
journal = "European Journal of Operational Research",
volume = "282",
number = "3",
pages = "803 -- 822",
year = "2020",
issn = "0377-2217",
doi = "10.1016/j.ejor.2019.04.045",
author = "Aline A.S. Leao and Franklina M.B. Toledo and José Fernando Oliveira and Maria Antónia Carravilla and Ramón Alvarez-Valdés"
}

@article{bennell2008geometry,
  title={The geometry of nesting problems: A tutorial},
  author={Bennell, Julia and Oliveira, Jose},
  journal={European Journal of Operational Research},
  volume={184},
  number={2},
  pages={397--415},
  year={2008},
  publisher={Elsevier},
doi = {10.1016/j.ejor.2006.11.038}
}

@article{bennell2009tutorial,
  title={A tutorial in irregular shape packing problems},
  author={Bennell, Julia and Oliveira, Jose},
  journal={Journal of the Operational Research Society},
  volume={60},
  pages={S93--S105},
  year={2009},
  publisher={Taylor \& Francis},
doi = {10.1057/jors.2008.169}
}
\end{document}